\normalfont \DeclareFontShape{T1}{lmr}{bx}{sc} { <-> ssub * cmr/bx/sc }{}
\setlist{itemsep=0em}
\definecolor{ETHblue}{HTML}{1F407A}
\theoremstyle{plain}
\newtheorem{conjecture}{Conjecture}
\newtheorem{theorem}{Theorem}
\newtheorem{definition}{Definition}
\newtheorem{observation}{Observation}
\newtheorem{lemma}{Lemma}
\newtheorem{fact}{Fact}
\newtheorem{corollary}{Corollary}
\theoremstyle{remark}
\newtheorem{claim}{Claim}
\newenvironment{claimproof}[1][\proofname]
  {\begin{proof}[#1]}
  {\hfill$\Diamond$\end{proof}}
\crefname{figure}{Figure}{Figures}
\crefname{lemma}{Lemma}{Lemmata}
\crefname{theorem}{Theorem}{Theorems}
\crefname{conjecture}{Conjecture}{Conjectures}
\crefname{claim}{Claim}{Claims}
\crefname{fact}{Fact}{Facts}
\crefname{corollary}{Corollary}{Corollaries}
\crefname{section}{Section}{Sections}
\crefname{subsection}{Subsection}{Subsections}
\crefname{enumi}{}{}
\crefname{equation}{}{}
\DeclarePairedDelimiter{\abs}{\lvert}{\rvert}
\renewcommand{\epsilon}{\varepsilon}
\newcommand{\eps}{\epsilon}
\newcommand{\alg}{\ensuremath{\textup{\textsc{alg}}}\xspace}
\newcommand{\Opt}{\textup{\textsc{opt}}\xspace}
\newcommand{\algOff}{\ensuremath{\textup{\textsc{off}}}\xspace}
\renewcommand{\algOff}{\ensuremath{\textup{\textsc{opt}}}\xspace}
\newcommand{\sequen}{\textnormal{\textup{\textsc{Robin}}}\xspace}
\newcommand{\sequenbold}{\textnormal{\textbf{\textsc{Robin}}}}
\newcommand{\R}{\mathds{R}}
\newcommand{\Block}{B\xspace}
\newcommand{\block}{b\xspace}
\newcommand{\peri}{f\xspace}
\newcommand{\cent}{h\xspace}
\newcommand{\choice}{c\xspace}
\newcommand{\group}{g\xspace}
\newcommand{\Vertex}{N\xspace}
\newcommand{\vertex}{n\xspace}
\newcommand{\phase}{P\xspace}
\newcommand{\phaselength}{m\xspace}
\newcommand{\m}{s\xspace}
\renewcommand{\Delta}{A}
\newcommand{\E}{\mathds{E}\xspace}
\renewcommand{\P}{\text{Prob}\xspace}
\newcommand{\N}{\ensuremath{\mathds{N}}\xspace}
\newcommand{\sphere}{\ensuremath{\mathds{S}}\xspace}
\newcommand{\algR}{\ensuremath{\textup{\textsc{rand}}}\xspace}
\newcommand{\algDC}{\ensuremath{\textup{\textsc{dc}}}\xspace}
\newcommand{\algDCT}{\ensuremath{\textup{\textsc{dc-tree}}}\xspace}
\newcommand{\algWFA}{\ensuremath{\textup{\textsc{wfa}}}\xspace}
\newcommand{\cost}[1]{\ensuremath{\textup{\text{cost}}(#1)}\xspace}
\begin{document}

\title{\textbf{Time-Optimal \textit{k}-Server}}

\author[1]{Fabian Frei}
\author[2]{Dennis Komm}
\author[2]{Moritz Stocker}
\author[2]{Philip Whittington}

\affil[1]{CISPA Helmholtz Institute for Information Security \authorcr {\small\texttt{fabian.frei@cispa.de}}}
\affil[2]{Department of Computer Science, ETH Zurich \authorcr {\small\texttt{\{dennis.komm, moritz.stocker, philip.whittington\}@inf.ethz.ch}}}
\date{}

\maketitle

\begin{abstract}
  \noindent The time-optimal $k$-server problem minimizes the time spent
  serving all requests instead of the distances traveled.  We give a lower
  bound of $2k-1$ on the competitive ratio of any deterministic online
  algorithm for this problem, which coincides with the best known upper bound
  on the competitive ratio achieved by the work-function algorithm
  for the classical $k$-server problem. 
  We provide further lower bounds of $k+1$ for all Euclidean spaces and $k$ for
  uniform metric spaces. For the latter, we give a matching $k$-competitive
  deterministic algorithm.  Our most technical result, proven by applying Yao's
  principle to a suitable instance distribution on a specifically constructed
  metric space, is a lower bound of $k+\mathcal{O}(\log k)$ that holds even for
  randomized algorithms, which contrasts with the best known lower bound for
  the classical problem that remains polylogarithmic. 

  With this paper, we hope to initiate a further study of this natural yet neglected problem.
\end{abstract}

\section{Introduction}

The \emph{$k$-server problem}, introduced by Manasse et al.\ in
1988~\cite{MMS1988}, has been repeatedly referred to as ``the holy grail'' of
online computation~\cite{BN2009,BBN2010,CK2021}.  In particular the $k$-server
conjectures about the best competitive ratios achievable by deterministic and
randomized algorithms have inspired intensive research for many decades by now. 

The $k$-server problem can be considered for any natural $k$ and any given
nonempty metric space $\mathcal M=(M,d)$ together with an \emph{initial
configuration} $C_0\in M^k$.  An instance is a sequence $r_1,\dots,r_n\in M$ of
point requests, revealed one by one.  An online algorithm answers each request
$r_j$, knowing only the already revealed requests, with a configuration $C_j\in
M^k$ such that $C_j(i)=r_j$ for some $i$, which we describe as
\emph{server $s_i$ serving request $r_j$}.  Hence, any solution is a sequence of
configurations $C_1,\dots,C_n\in M^k$ describing the movements of $k$ servers
such that each requested point is served by moving a server there.  The next
request is revealed only once the previous has been served.  The cost of a
solution is traditionally defined as the total of all distances traveled by all
servers, yielding the classical, well-researched $k$-server problem.  Another
possibility is to count towards the cost not all but only the \emph{maximum}
distance traveled by any server between any two configurations.

We argue that this alternative cost definition yields a problem that is just as
natural and interesting as the variant usually considered.  We refer to the two
ways of defining the cost as the distance model and the time model:  

\begin{description}
  \item[The distance model.] 
  This is the well-known classical variant. The cost incurred by the algorithm
  to serve request $r_j$ is the \emph{sum} of all distances traveled by the
  servers when changing from configuration $C_{j-1}$ to configuration $C_j$.
  The cost of a solution $C_1,\dots,C_n$ is thus $\sum_{j=1}^n\sum_{i=1}^k
  (C_{j-1}(i)-C_j(i))$, the total of all distances moved by all servers. 
  We can imagine that we need to pay some fuel cost for all server
  movements.\footnote{We remark that moving more than one server per time step
  cannot save cost in this model as the algorithm incurs the same cost for a
  movement, no matter when it happens.  Nevertheless, moving multiple servers
  per time step can be helpful to simplify the description of an algorithm. 
  Moreover, it is easy in the distance model to convert an algorithm that moves
  some servers simultaneously into a \emph{lazy} algorithm, which does not do
  this, but incurs at most the same cost~\cite{Kom2016}.} \item[The time
  model.] The cost for serving a request is the \emph{maximum} distance
  traveled by the servers; that is, the total cost is $\sum_{j=1}^n\max_{i=1}^k
  (C_{j-1}(i)-C_j(i))$. 
  Hence, an optimal solution minimizes the total waiting time incurred by the
  requests until they are served.  We thus refer to the problem in this model
  as the \emph{time-optimal $k$-server problem}. 
\end{description}

In many situations where requests need to be served, it is paramount to react
as fast as possible rather than moving less. We might consider ambulances or
police cars being called to the scene of an emergency, but a perfect 
example was ironically given by the famous seminal paper~\cite{MMS1988} that 
used it to introduce the classical $k$-server problem instead: 
planning the motions of a hard disk with $k$ heads. 
Rather than caring about minimizing hidden head movements, the typical user
wants optimized reading and writing speeds. 

Despite this, past research has focused almost exclusively on the distance
model; see \cref{subsec:related}.

We hope to initiate a new line of research that gives the time-optimal
$k$-server problem the attention it deserves.  As usual, the main goal is to
determine the best competitive ratios achievable by deterministic and
randomized algorithms.  We prove several lower bounds that provide an
exponential improvement over what has been known until now, and design a
deterministic algorithm matching these bounds on uniform metric spaces. 

Despite this, a significant gap remains between the upper and lower bounds on
general metric spaces.  Closing this gap seems to be as interesting as
answering the open questions for the distance model. 
Indeed, we expect a fruitful, mutually informative interaction between the two
models.

\subsection{Related Work}\label{subsec:related}

In this section, we review known results and open questions for the $k$-server
problem in both models.  For the classical distance model, to which countless
papers have been dedicated
over the last four decades, we restrict ourselves to the most important
milestones; and at the same time introduce some basic notions and concepts that
we make use of later.  For the time model, however, we can give a full account. 

\paragraph*{The Distance Model.}\label{subsubsec:relateddistance}
We note that on metric spaces with at most $k$ points the algorithm keeping one
server on each point is trivially 1-competitive.  On a clique with $k+1$ or
more vertices, there is a very simple lower bound of $k$: For any algorithm,
there is an instance that always asks for an unoccupied vertex, causing a cost
of $1$ with each request, whereas the offline strategy of serving an unoccupied
vertex with a server currently at a point that is not among the next $k-1$
requests spends at most a cost of $1$ for every $k$ requests.  
(The $k$-server problem on cliques of size $N$ is also equivalent to the paging
problem with $N$ pages and a cache of size $k$~\cite{BE1998,Kom2016}.)

The lower bound of $k$ can be generalized to all nontrivial metric spaces, that
is, \emph{any} metric space with at least $k+1$ points.  The proof is not too
complicated, using a nonconstructive argument of existence for the optimal
algorithm competing with the given online
algorithm~\cite[Thm.~10.1,~Thm.~4.4]{BE1998,Kom2016}.

Manasse et al.\ already conjectured that there is a matching upper bound for
$k$-server when they introduced the problem in 1988~\cite[Conjecture~4]{MMS1988}.

\begin{conjecture}[{{\textit{k}-Server Conjecture}}]
  For any metric space and any $k\ge 1$, there is a $k$-competitive algorithm
  for the $k$-server problem in the distance model. 
\end{conjecture}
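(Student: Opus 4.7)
The plan is to attack this via the Work Function Algorithm (WFA), widely regarded as the most plausible candidate for a matching upper bound. Given a request sequence $r_1,\dots,r_j$, the work function $w_j(X)$ is the minimum offline cost to serve all requests so far and end in configuration $X$. WFA serves $r_j$ from a configuration $C_{j-1}$ by moving to a configuration $C_j$ containing $r_j$ that minimizes $w_j(C_j)+d(C_{j-1},C_j)$. The competitive analysis would proceed by a potential function argument, using a potential $\Phi_j$ defined in terms of the work function (for example, the sum of $w_j$ over a suitable family of configurations, or the extended work function on a doubled metric), and showing that each request contributes amortized cost at most $k$ times the corresponding increase in $\Opt$.

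The first step is to establish the basic structural properties of work functions: monotonicity along request sequences, the Lipschitz property $\abs{w_j(X)-w_j(Y)}\le d(X,Y)$, and the key \emph{quasi-convexity} identity of Koutsoupias and Papadimitriou, which states that for any configurations $X,Y$ there is a bijection $\pi\colon X\to Y$ with $w_j(X\setminus\{x\}\cup\{\pi(x)\})+w_j(Y)\le w_j(X)+w_j(Y\setminus\{\pi(x)\}\cup\{x\})$ for each $x$. The second step is to relate WFA's single-step cost to the change in the work function and to use quasi-convexity to bound the potential drop in terms of the actual movement. The third step, the genuinely hard one, is to extract a factor of $k$ rather than $2k-1$ from this accounting.

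The main obstacle is precisely this last step: Koutsoupias and Papadimitriou's analysis yields $2k-1$, and every attempt to sharpen it has run into the same barrier, namely that the potential function can, in adversarial instances, drop by roughly twice what the naive bookkeeping would need. Closing this gap would presumably require either a fundamentally new potential (perhaps one that is not symmetric in the $k$ servers and exploits the identity of the mover), a finer invariant that tracks how close the current configuration is to being "extremal" for the work function, or an entirely different algorithm, e.g.\ a primal-dual or mirror-descent scheme of the kind that has succeeded on HSTs in the randomized setting. I would not expect a short proof; the honest assessment is that this is one of the central open problems of the field, and any plan is necessarily speculative about how to overcome the factor-of-two loss inherent in the current quasi-convexity-based arguments.
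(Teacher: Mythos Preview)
The statement you were asked to prove is not a theorem in the paper; it is stated there as \emph{Conjecture~1}, the classical $k$-Server Conjecture, which the paper explicitly presents as open. The paper gives no proof of it and in fact recounts that the best known upper bound remains $2k-1$ via the work-function algorithm of Koutsoupias and Papadimitriou, with the conjecture unresolved despite decades of effort.

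Your write-up implicitly acknowledges this: you sketch the WFA framework and its standard ingredients (work functions, Lipschitz property, quasi-convexity), correctly identify that the existing analysis yields $2k-1$ rather than $k$, and then candidly concede that closing the gap would require a genuinely new idea and that ``any plan is necessarily speculative.'' That is an accurate assessment of the state of the art, but it is not a proof, and there is no proof in the paper to compare it against. So the right conclusion here is simply that the target statement is a conjecture, not a proved result, and your proposal is a survey of known approaches and obstacles rather than a resolution.
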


They also proved the conjecture in the two very special cases of only $k=2$
servers and of $k$ servers on metric spaces with only $k+1$
points~\cite[Thms.~5,~6]{MMS1988}.  Note that the journal version of this
seminal paper appeared two years later under a different title~\cite{MMS1990}.
This is also how long it took for anyone to establish any upper bound on the
competitive ratio that did not grow with the input length $n$ but only with
$k$, albeit with an exponential dependence~\cite[Thm.~2]{FiatRR90,FiatRR94}.
Another four years later, Koutsoupias and Papadimitriou finally managed to
shatter the exponential barrier by analyzing the so-called \emph{work-function
algorithm} (independently proposed as a candidate by multiple
researchers~\cite[Sect.~2.3]{CL1991b}), which dramatically lowered the upper
bound to $2k-1$~\cite[Thm.~4.3]{KP1995a}.
Emek et al.~\cite{EFKR2010} have shown that \algWFA is also \emph{strictly}
$(4k-2)$-competitive. This represents the best known upper bound on the rarely
considered strict competitive ratio. Note that lower bounds on the competitive
ratio are also lower bounds on the strict competitive ratio. To the best of our
knowledge, no better lower bounds specific to the strict competitive ratio have
been proposed. 

To date, \algWFA remains to be the algorithm with the best known
competitive ratio for general metric spaces. We are thus left with a constant
factor of essentially $2$ (and $4$ for the strict competitive ratio) between
the upper and lower bound, despite all the efforts by the research community
trying to improve the upper bound or disprove the \textit{k}-server conjecture
for multiple decades.  It is conjectured that \algWFA even attains the coveted
competitive ratio of $k$.

But at least for some very special metric spaces it was possible to prove the
\textit{k}-server conjecture.  A prominent example is the real line.  Here,
Chrobak et al.~\cite{CKPV1991} were able to give the matching upper bound of
$k$ on the competitive ratio by introducing and analyzing the so-called
\emph{double coverage} algorithm (\algDC) with the following, somewhat
unintuitive behavior.  When a point $x$ is requested, \algDC moves not only
one, but two servers towards $x$: a closest server to the right of $x$ and a
closest server to the left of $x$. (If there is a tie between multiple servers
on the same location, one of them is chosen arbitrarily.) They move at the same
speed and both stop once one of them has reached $x$.  (This algorithm clearly
works not only on the real line, but also on the rational and integer line, and
on any path graph.) There is a straightforward generalization of \algDC to
trees: All servers synchronously move toward the request until it is served,
stopping prematurely only if another server closer to the target appears on
their path.  This algorithm, called \algDCT, was described and proven to be
$k$-competitive by Chrobak and Larmore~\cite{CL1991}.  Lastly, Koutsoupias and
Papadimitriou~\cite{KP1996} showed that the $k$-server conjecture also holds on
any metric space with $k+2$ points.

For randomized algorithms, all the proofs for a lower bound of $k$ from the
deterministic case do not work anymore.  Instead, a lower bound of $\Omega(\log
k)$ has been known for a long time:  Let $H_k\coloneqq\sum_{j=1}^{k}(1/j)$ denote the
$k$-th harmonic number.  On a clique of $k+1$ points, there is no deterministic
algorithm better than $H_k$-competitive (in expectation) on the instance
distribution which requests any point except the just requested one with the
same probability $1/k$. Yao's principle~\cite{Yao1977} now transforms this
lower bound for deterministic algorithms on an input distribution into a lower
bound for randomized algorithms on fixed
instances~\cite[Thm.~2.11]{BE1998,Kom2016}.  This has remained the best lower
bound for multiple decades.  (A series of
results~\cite{BKRS1992,BLMN2003,BBM2006} showed a lower bound that is only
$\Omega({\log k}/({\log\log k}))$ instead of $\Omega(\log k)$ but in exchange
works on any metric space of at least $k+1$ points.) This lead to the following
folklore conjecture~\cite[Conj.~2]{Kou2009}.

\begin{conjecture}[{{Randomized \textit{k}-Server Conjecture}}]
  For any $k\ge 1$, there is a $\mathcal{O}(\log k)$-competitive randomized
  algorithm for the $k$-server problem in the distance model.
\end{conjecture}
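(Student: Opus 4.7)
The plan is to reduce the problem on a general metric space to the case of a hierarchically separated tree (HST) via probabilistic embeddings, and then design a randomized fractional algorithm on the tree using online convex optimization, finally rounding it to an integral randomized strategy.

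The first step is to apply a tree embedding result in the spirit of Fakcharoenphol--Rao--Talwar to embed any $n$-point metric space into a distribution over HSTs with expected multiplicative distortion $O(\log n)$. A randomized $\alpha$-competitive algorithm on HSTs then yields an $O(\alpha \log n)$-competitive algorithm on the original space by running it on a sampled tree and committing to the induced server movements in the true metric. The second step is to relax the problem on an HST to a fractional one in which each node carries a nonnegative server mass, with total mass $k$, and a request at a leaf must be served by bringing a unit of mass to that leaf. On this relaxation I would design an online algorithm via a multiplicative-weights / mirror-descent update with an entropic regularizer: when a leaf is requested, mass is pushed toward it from the least-loaded subtrees in a way that keeps the KL-divergence to any feasible offline configuration bounded, giving a competitive ratio that depends polylogarithmically on $k$ (and on the depth of the HST, which can be flattened to $O(\log k)$ by combining nearby levels). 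The third step is to round the fractional solution online, e.g.\ by a recursive sampling along the tree that is coupled across requests so that the expected movement of each sampled server matches the fractional mass movement up to constants.

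The main obstacle will be twofold. First, the approach above naturally produces a ratio of the form $O(\mathrm{polylog}(k)\cdot \log n)$ rather than $O(\log k)$, because the tree embedding loses a factor of $\log n$ that depends on the size of the space rather than on $k$. Removing this $\log n$ factor seems to require either direct analysis on general metric spaces (without passing through trees) or an embedding technique whose distortion depends only on $k$; both have resisted intense effort since the conjecture was posed. Second, even on HSTs the fractional-to-integral rounding must be performed so that the expected cost of the integral servers does not blow up by more than a constant factor, which is delicate since server identities matter in the integral problem but not in the fractional one; a careful coupling argument or an online matching step is needed. Given the long history of partial results, I would realistically expect this plan to yield a bound such as $O(\mathrm{polylog}(k)\cdot \log n)$, matching the qualitative state of the art but not the tight $O(\log k)$ asserted by the conjecture, which remains a central open problem.
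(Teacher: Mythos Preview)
The statement you are attempting to prove is labeled a \emph{conjecture} in the paper, and the paper does not provide a proof of it. More importantly, the paper explicitly notes immediately after stating the conjecture that it was \emph{refuted} in 2023 by Bubeck, Coester, and Rabani, who established a lower bound of $\Omega((\log k)^2)$ on the competitive ratio of any randomized algorithm on a specific metric space. Hence no correct proof of the conjecture as stated can exist, and any proof plan must fail somewhere.

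Your own plan already anticipates this in part: you correctly identify that the tree-embedding route naturally incurs an extra $\log n$ factor, and you note that eliminating the dependence on $n$ has resisted all known techniques. What you may not have been aware of is that the obstruction is not merely technical but provable: the $\Omega((\log k)^2)$ lower bound shows that the $\mathcal{O}(\log k)$ target is unattainable even in principle. So the honest conclusion of your write-up---that the approach realistically yields only a polylogarithmic bound and the tight $\mathcal{O}(\log k)$ remains out of reach---should be strengthened to the statement that $\mathcal{O}(\log k)$ is in fact impossible.
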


This conjecture was refuted in 2023 by Bubeck et al.~\cite{BCR2023}, who were
able to slightly lift the lower bound to $\Omega((\log k)^2)$ for one special
metric space.  Note, however, the remaining exponential gap between this
improved lower bound and the best known upper bound for randomized algorithms,
which is still $2k-1$ and attained by the deterministic \algWFA. 

A remarkable positive contribution by Bansal et al.~\cite{BBMN2011} is a
randomized algorithm that is, on metric spaces with $N$ points,
${\mathcal{O}}((\log k)^2(\log N)^3\log\log N)$-competitive in expectation, and
thus even with high probability by a general result by Komm et
al.~\cite{KKKM2022}.  Their algorithm thus obtains a competitive ratio that is
polylogarithmic in $k$ whenever $N$ is polynomial in $k$.  A more detailed
discussion of the results up until 2009 is found in a survey by
Koutsoupias~\cite{Kou2009}.

The most relevant results for the \textit{k}-server problem in the distance
model on general metric spaces can thus be summarized as follows (see also
\cref{tab:results}): The best known upper bound on the competitive ratio is
$2k-1$ (for both deterministic and randomized algorithms) and attained by the
work-function algorithm \algWFA~\cite{KP1995a}.  The best known lower bound on
the competitive ratio is $\Omega((\log k)^2)$ for randomized
algorithms~\cite{BCR2023}, and $k$ for deterministic algorithms~\cite{MMS1988}.
A matching upper bound of $k$ is attained on the line by the double coverage
algorithm \algDC~\cite{CKPV1991}, and on trees by the generalized variant
\algDCT~\cite{CL1991}.

\begin{table}
\centering
  \caption{Overview of the best known upper and lower bounds (in the upper and
  lower row, respectively) on the competitive ratio for the $k$-server problem
  in the distance and time model, for deterministic algorithms on the line
  and general metric spaces, and additionally for randomized algorithms on general metric spaces.}
  \label{tab:results}
  \begin{tabular}{r@{} c c c c c c }
  \toprule
  &\multicolumn{3}{c}{Distance model}&\multicolumn{3}{c}{Time model}\\
  \cmidrule(r){2-4} \cmidrule(lr){5-7}
    & Line & General & Randomized & Line & General & Randomized \\
  \cmidrule(r){2-2} \cmidrule(lr){3-3} \cmidrule(lr){4-4} 
  \cmidrule(lr){5-5} \cmidrule(lr){6-6}
  \cmidrule(lr){7-7}
    & $k$ \cite{CKPV1991} & $2k-1$ \cite{KP1995a} & $2k-1$ \cite{KP1995a} &
      $(k^2+k)/2$\cite{KT2004} & $2k^2-k$ \cite{KP1995a, KT2004}  & $2k^2-k$ \cite{KP1995a, KT2004}  \\
    & $k$ \cite{MMS1988} & $k$ \cite{MMS1988} & $\Omega((\log k)^2)$ \cite{BCR2023} & $k+1$ [Th.~\ref{thm:kplusoneline}] & $2k-1$ [Th.~\ref{thm:lowertwokminusonefinite}] & $k+H_k-1$ [Th.~\ref{thm:lowertwokminusonerandomized}] \\
    \bottomrule
  \end{tabular}
\end{table}

\paragraph*{The Time Model.}\label{subsubsec:relatetime}
There is almost no research published on the time model. This is astonishing,
seeing its nontriviality and natural motivation, but allows us to describe
here all existing results. 

Clearly, changing from the distance model to the time model helps the algorithm
because it can now move servers synchronously at no additional cost. However,
note that this does not at all imply an improved competitive ratio. Since the
advantages of the time model can be utilized by both the online algorithm and
solution it competes with, the ratio could a priori improve, stay the same, or
get worse, and in different ways for different metric spaces. 

A first result distinguishing the two variants was given by Koutsoupias and
Taylor \cite{KT2004}.  
As they remark~\cite[Sect.~5]{KT2004}, any upper bound for the distance model
implies a $k$~times larger upper bound for the time model: On the one hand, any
algorithm for the distance model also works for the time model with the same or
even lower cost.  On the other hand, the optimal algorithm against which the
online algorithm competes saves at most a factor of $k$ in the time model
compared to its cost in the distance model.  Overall, the competitive ratio
increases by at most a factor of $k$ when an algorithm for the distance model
is used in the time model.  The $(2k-1)$-competitive \algWFA thus implies an
upper bound of $2k^2-k$.  For trees it immediately follows that \algDCT, which
is $k$-competitive in the distance model, is $k^2$-competitive in the time
model.  Koutsoupias and Taylor~\cite[Thm.~3]{KT2004} lowered this bound to
$k(k+1)/2$ with a straightforward adjustment of the potential function used in
the analysis of \algDCT.  Note that both bounds are in $\mathcal{O}(k^2)$.
They concluded with a lower bound of $3$ for $k=2$ servers on the real
line~\cite[Thm.~4]{KT2004}, but provided no results for~$k\geq 3$.

No further results on the time model have been published to the best of our
knowledge.  We believe that the reason for this variant receiving far less
attention is at least in part that it is less straightforward to analyze since
we can no longer restrict ourselves to lazy algorithms,  which move only one
server at a time.  Indeed, we do not see any reason why the time model should
be considered less interesting or realistic than the distance model.  Using
synchronous movements of the servers to respond more quickly to future requests
is a very natural and reasonable strategy. 

\subsection{Contribution}

The goal of this paper is to popularize the time-optimal $k$-server problem and
establish it as a research object that is well worth the community's attention.
Not only do we believe the $k$-server problem to be as natural and well
motivated in the time model as in the distance model, we also expect the
insights that stand to be gained by analyzing the time model to shed further
light on the long-standing open questions about the classical problem. 

Our main technical contribution are lower bounds for the time-optimal
$k$-server problem that are at least as large as and often larger than the best
known lower bounds for its classical twin.

For deterministic algorithms, we achieve the lower bound of $k$ on all
unweighted graphs with at least $k+1$ vertices (which excludes only cases where
the competitive ratio is trivially $1$).  We then go above this, albeit just
barely, with a lower bound of $k+1$ for a large class of metric spaces that
includes all Euclidean spaces, infinite grids, and large cycles. More
intriguingly still, we construct a specific metric space (super-exponentially
large in $k$ but still finite and of diameter only $3$) for which we can prove
a lower bound of $2k-1$, which is exactly the best known upper bound for the
classical $k$-server problem. 
We additionally prove an intermediate lower bound of $3k/2$ on a small graph
with only $\mathcal{O}(k)$ vertices. 

Our most interesting and challenging result is a lower bound on the expected
competitive ratio for randomized algorithms for the time-optimal $k$-server
problem.  The bound is $k+H_k-1$, and thus exponentially larger than the
recently proven polylogarithmic bound in the distance model.

To round off the results, we provide an algorithm that exactly matches our
first lower bound on uniform metric spaces. 

We summarize the results again in the order that we prove them.  We provide for
the time-optimal $k$-server problem a deterministic algorithm that is
\textbf{(1a)}\label{algo:1} $\Delta k$-competitive on graphs with aspect ratio
$\Delta$ (\cref{corollary:uniformD}) and \textbf{(1b)}\label{algo:2} strongly
$k$-competitive on uniform metric spaces (\cref{corollary:uniform}); then prove for deterministic
algorithms lower bounds of \textbf{(2a)} $k$ on every graph with more than $k$
vertices (\cref{thm:lowerboundk}), \textbf{(2b)} $k+1$ on any Euclidean space,
infinite grid (including the real line and the integer line), and cycles on at
least $2k+6$ points (\cref{thm:kplusoneline}), \textbf{(2c)} $3k/2$ on a small
graph with fewer than $8k$ vertices (\cref{thm:threehalfk}), and \textbf{(2d)}
$2k-1$ on a large graph of diameter $3$ (\cref{thm:lowertwokminusonefinite});
and prove for randomized algorithms a lower bound on the expected competitive
ratio of \textbf{(3)} $k+H_k-1$ on a large graph of diameter $3$
(\cref{thm:lowertwokminusonerandomized}).  For the strict
competitive ratio we have additionally obtained lower bounds of $5k/4$ and
$5k/6$ for deterministic (\cref{thm:strictlowerline}) and randomized algorithms
(\cref{thm:strictlowerlinerandomized}), respectively, on the real line.

Despite the significantly raised lower bounds, a gap still remains with a
linear factor between the upper and lower bounds. 

\section{Preliminaries}

We introduce the basic notions necessary to state our results.  For any
nonnegative integer $n$, we use the notation $[n]\coloneqq\{1,\dots,n\}$. As mentioned
before, $H_k\coloneqq\sum_{j=1}^k (1/j)$ denotes the $k$-th harmonic number.

\paragraph*{Online Algorithms and Competitive Analysis.}\label{sec:defonline}
Online problems are modeled as \emph{request-answer games} where
an online player \alg (the algorithm) plays against an adversary that designs the 
instance piece by piece while reacting to \alg's actions.
Formally, an instance $I$ consists of $n$ \emph{requests}, $I=(x_1,\dots,x_n)$,
which are presented to \alg in $n$ consecutive time steps; 
$n$ is by default not known to \alg a priori.
Whenever an $x_i$ is presented as a request,
\alg has to provide an \emph{answer} $y_i$ to $x_i$, where $y_i$ depends only
on the prefix $x_1,\dots,x_i$ of already presented requests.  The full answer
sequence $\alg(I)=(y_1,\dots,y_n)$ is the solution computed by \alg on $I$.
Any solution $\alg(I)$ to $I$ is assigned a \emph{cost}, usually a positive
real number denoted by $\cost{I,\alg(I)}$. (We assume here a minimization
problem; for maximization problems \emph{cost} is often called \emph{profit}.)
An \emph{optimal solution} for $I$, denoted by $\Opt(I)$, is a solution with
optimal cost $\cost{I,\Opt(I)}$, i.e., minimal cost across all solutions to
$I$ in our case.  Note that $\Opt(I)$ and $\cost{I,\Opt(I)}$ can
generally be computed only once $I$ is known.

\emph{Competitive analysis}, introduced by Sleator and Tarjan in
1985~\cite{ST1985}, measures the cost of the solution computed by \alg relative
to an optimal solution computed by an offline algorithm \Opt.  Formally, \alg
is called \emph{$c$-competitive} if there is a constant $\alpha$ so that, for
every instance $I$, it holds that $\cost{I,\alg(I)} \le c\cdot \cost{I,\Opt(I)}
+ \alpha$.  If this inequality holds even with $\alpha=0$ for all instances,
\alg is called \emph{strictly} $c$-competitive. 

Similarly, a randomized online 
algorithm \algR is said to be \emph{$c$-competitive in expectation} or have an
\emph{expected competitive ratio of $c$} if there is a constant $\alpha$ so
that, for every instance $I$, it holds that $\E\left[\cost{I,\algR(I)}\right]
\le c\cdot \cost{I,\Opt(I)} + \alpha$.
If the inequality holds with $\alpha=0$, then \algR is said to be
\emph{strictly $c$-competitive in expectation} or have a
\emph{strict expected competitive ratio of $c$}.

\paragraph*{\boldmath The $k$-Server Problem.}\label{sec:kserver} 
The intuition behind the $k$-server problem is best described by a scenario
where we are given a metric space and $k$ \emph{servers}, each occupying one
point in the space.  A request $x$ is any point of the space, and an answer
must specify at least one server that is moved to $x$ in response.
However, it is possible to move other servers as well.
A request is revealed once the previous one has been served.  The goal is to
minimize either the total distance traveled after answering all requests (the
distance model) or the overall time spent (the time model). 

Formally, let $\mathcal{M}=(M,d)$ be any metric space. (In particular, the
distance function $d\colon M^2\to\R$ is zero on the pairs $(x,x)$ for $x\in M$,
otherwise positive, symmetric, and satisfies the triangle inequality, i.e.,
$\forall x,y,z\in M\colon d(x,z)\le d(x,y)+d(y,z)$.) Any finite metric space
can be described as the complete undirected graph whose vertices are $M$ with
an edge weight function $\{u,v\}\mapsto d(u,v)$. 
Conversely, we can interpret any connected, weighted, and undirected graph  
as the metric space whose point set is the vertex set and whose distance
function maps two points to the length of a shortest path between them. 

In particular, any connected, unweighted, and undirected graph also describes a
metric space in the same way.  From now on we always assume our graphs to be
finite, connected, undirected, and unweighted, unless specified otherwise.
Important metric spaces include the real line, the Euclidean plane, and
generally the Euclidean space of any given dimension.

For any positive integer $k$ and metric space $\mathcal{M}=(M,d)$, we call a map
$C\colon [k]\to M$ a $k$-\emph{configuration}.  In the context of the
$k$-server problem we may describe a $k$-configuration by saying that server
$s_i$ is at point $C(i)$ for any $i\in[k]$.  
We say that a $k$-configuration $C$ \emph{covers} or \emph{occupies} a point
$p\in M$ with server $s_i$ if $C(i)=p$ for some $i\in[k]$.  For a sequence of
$k$-configurations $C_0,\dots,C_n$, we say, for any $i\in[k]$ and $j\in[n]$,
that server $s_i$ moves from $C_{j-1}(i)$ to $C_j(i)$ in step $j$. 
Whenever the value of $k$ is clear from the context, which is the case for the
remainder of this paper, we omit the $k$ and simply say configuration instead
of $k$-configuration.  An \emph{instance} of the $k$-server problem on
$\mathcal M$ of length $n$ is a sequence $(r_1,\dots,r_n)$ of $n$ points in $M$
and an initial configuration $C_0\colon [k]\to M$. 
A \emph{solution} is a sequence of configurations $(C_1,\dots,C_n)$ such that
$C_j$ covers $r_j$ for all $j\in[n]$.  The \emph{cost} of such a solution is
$\sum_{i=j}^n\sum_{i=1}^k d(C_{j-1}(i),C_j(i))$ in the distance model and
$\sum_{j=1}^n\max\{d(C_{j-1}(i),C_j(i))\mid i\in[k]\}$ in the time model. 

When we speak of the $k$-server problem without further specification, this
refers to the time model for the remainder of this paper. 

\paragraph*{Intricacies of the Time Model.}
In the distance model, any algorithm can without loss of generality be
transformed into one that is \emph{lazy}, i.e., moving only one server at a
time. Hence it is common and convenient to argue only about lazy algorithms.
In the time model, this simplification is not justified anymore; we also need
to consider movements of the servers other than the one serving a request.
This leads to intricacies peculiar to the time model, which we briefly discuss
here.

Consider a graph with integer weights.  Suppose that a server incident to an
edge of unit length moves across this edge to serve a request at the other end.
Suppose that another server is positioned at a vertex incident to a longer edge of length $2$.
Depending on the situation to be modeled, one might want to allow or disallow a
synchronous movement by the second server to the midpoint of the longer edge.
In the default formulation, servers can only traverse edges completely; partial
movements are impossible. 
This implies the following perhaps unintuitive behavior: Even if a server that
traversed a unit edge to serve a request then moves back to serve the next
request, traveling a length of $2$ in two steps, it is impossible for another
server to traverse an edge of length $2$ at the same time without extra cost.
Even if the servers move synchronously as far as possible, the cost is at least
$3$.  This problem does not occur in unweighted graphs, however. Indeed, if it
is desired to enable the servers to traverse a long edge over multiple steps,
this is easily achieved by transforming a given graph with rational weights
into an unweighted one: it suffices to subdivide the edges into segments of a
length dividing all occurring weights. 
(The size of the resulting graph varies with the weights, of course.)

All of our results apply to both variants since our algorithm works also on
weighted graphs, and the graphs constructed for our lower bounds are all
unweighted.

\section{Results}

We now present our results for the time-optimal $k$-server problem: The upper
bounds in \cref{sec:upperbounds}, the lower bounds for deterministic algorithms
in \cref{sec:lowerboundsdeterministic}, and a lower bound for randomized
algorithms in  \cref{sec:lowerboundsrandomized}. 

\subsection{Upper Bounds}\label{sec:upperbounds}

We start with a positive result and show that the following algorithm \sequen for the
time-optimal $k$-server problem obtains a competitive ratio of $\Delta k$, 
where the aspect ratio $\Delta$ of a metric space is its diameter divided by
the minimal distance between distinct points.

\begin{definition}[{Algorithm \sequenbold}] 
  Label the $k$ servers $s_1,\dots,s_k$ in arbitrary order. If a requested
  point is already covered by a server, then \sequen does not change its
  configuration.  The $m$-th request that requires \sequen to move a server is 
  served by $s_i$ with $i\equiv_k m$, while all other servers stay idle.
  In other words, \sequen moves a server only when it is necessary, only one
  server at a time, the first $k$ movements are by $s_1,\dots,s_k$ in this
  order, and then it repeats cyclically.
\end{definition}

Note that \sequen never moves servers synchronously, thus the cost is the same
in the time model and the distance cost. Essentially, \sequen is a marking
algorithm~\cite{BE1998,Kom2016}.

\begin{theorem}
  \sequen is $\Delta k$-competitive on metric spaces with aspect ratio~$\Delta$. 
\end{theorem}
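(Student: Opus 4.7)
The plan is to use a standard phase-based amortized argument, analogous to the analysis of marking algorithms for paging. By rescaling the metric we may assume that the minimum distance between two distinct points equals $1$, so that the diameter equals $\Delta$.

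First, I would partition the request sequence into consecutive phases, where each phase is a maximal subsequence of requests involving at most $k$ distinct points. Let $p$ denote the number of phases. Whenever $p \geq 2$, every phase except possibly the last contains exactly $k$ distinct points, and each new phase begins with a request to a point not occurring anywhere in the just-completed phase.

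Next, I would bound \sequen's cost by at most $k \Delta$ per phase. The key claim is that within any single phase \sequen performs at most $k$ forced moves. The rule of \sequen cycles through the $k$ servers, so any window of $k$ consecutive forced moves uses every server exactly once. In such a window the $k$ targets are pairwise distinct: each target is the currently requested, uncovered point, and none of the servers used in the previous $k-1$ moves has been reused, so the previously moved-to points are still covered and cannot be re-chosen. Since each target lies in the phase's distinct point set of size at most $k$, after $k$ moves the entire phase-set is covered and all remaining requests in the phase are free. Summing over phases yields $\cost{I, \sequen(I)} \leq k\Delta \cdot p$.

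For the lower bound on \Opt, note that if $p \geq 2$, the phase structure forces the total number $D$ of distinct requested points to satisfy $D \geq k + (p-1)$. Since \Opt's initial configuration covers only $k$ points, \Opt must move a server in at least $D - k \geq p - 1$ distinct time steps, and each such step contributes at least $1$ to its cost in the time model. Thus $\cost{I, \Opt(I)} \geq p - 1$, and combining the two bounds gives $\cost{I, \sequen(I)} \leq k \Delta p \leq k\Delta\, \cost{I,\Opt(I)} + k\Delta$, which is $\Delta k$-competitiveness; the degenerate case $p = 1$ satisfies $\cost{I, \sequen(I)} \leq k\Delta$ unconditionally and is absorbed into the additive constant.

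The main obstacle I expect is making the per-phase upper bound for \sequen fully rigorous. The cyclic rule can cause \sequen to displace a point that is itself in the phase's distinct-point set, which can cascade into further forced moves inside the same phase; a careless count could exceed $k$. The cleanest way around this is the window argument above, exploiting that within any $k$ consecutive forced moves no server is reused, so the targets in such a window are pairwise distinct and, being contained in a set of size at most $k$, must exhaust the phase's distinct points. After that, the phase-pigeonhole bound on \Opt and the final arithmetic are routine.
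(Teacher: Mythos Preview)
Your upper bound on \sequen's cost is fine: the window argument showing that any $k$ consecutive forced moves hit $k$ pairwise distinct targets is correct, and it does imply at most $k$ forced moves per marking phase.

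The lower bound on \Opt, however, has a real gap. Both intermediate claims are false in the time model. First, $D\ge k+(p-1)$ fails because the point that starts a new phase is only guaranteed to be new relative to the \emph{previous} phase, not relative to the entire history; e.g.\ for $k=2$ the periodic sequence $a,b,c,a,b,c,\dots$ has only $D=3$ distinct points but arbitrarily many marking phases. Second, even when $D$ is large, ``\Opt must move in at least $D-k$ distinct time steps'' is wrong precisely because in the time model \Opt may reposition all $k$ servers simultaneously at unit cost: on the sequence $p_1,\dots,p_{2k}$ of $2k$ distinct points, \Opt needs only a single move step, not $D-k=k$. You are implicitly importing the lazy-\Opt reasoning from paging, which does not survive the switch to the time model.

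The conclusion $\cost{I,\Opt(I)}\ge p-1$ is in fact true, but it needs a different argument. One clean route is to show that for each $i\in\{1,\dots,p-1\}$ \Opt must move somewhere in the window from the first request of phase $i$ through the first request of phase $i+1$ (this window contains $k+1$ distinct points), and then argue carefully that the resulting $p-1$ witness steps can be chosen strictly increasing despite the one-step overlap of adjacent windows. The paper sidesteps this subtlety entirely by defining phases differently: a phase ends exactly when $s_k$ moves. With that definition, $s_k$ sits on the last requested point of the previous phase throughout the current phase, so the $k$ forced-move targets of the current phase are automatically distinct from that point, giving $k+1$ distinct points inside disjoint phases and making the \Opt bound immediate.
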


\begin{proof}
    Without loss of generality, let the minimal distance between two distinct
    points be $1$; the diameter of the given metric space is therefore
    $\Delta$.  We may assume that the servers of \sequen and an optimal
    algorithm \Opt start in the same configuration. We split the instance
    into $\ell$ disjoint phases. 
    \begin{itemize} 
      \item The first phase $P_1$ starts with the first request. 
      Without loss of generality, we assume it to be for a point outside of the initial configuration. 
      \item A phase ends as soon as $s_k$ has moved or if no request is left.
      \item If phase $P_i$ has ended and a request is left, the next phase $P_{i+1}$ begins immediately.
    \end{itemize}
    
    Observe that in every phase \sequen moves each of the servers
    $s_1,\dots,s_k$ exactly once and therefore incurs cost of at most
    $\Delta\cdot k$ in each phase. 
    
    We show that \Opt must have a cost of at least $1$ in each phase $P_i$ for $i\in[\ell-1]$: 
    For the first phase $P_1$ the statement is trivial: the first request
    requires both \alg and \Opt to move. We may thus assume $i\ge2$.  Note
    that each phase $P_i$ contains at least $k$ distinct requested points. 
    Otherwise, \sequen would serve the first up to $k-1$ distinct requested
    points with servers $s_1,\dots,s_{k-1}$ and would never move $s_k$.  The
    phase could thus end only when no request is left, which is possible only
    for the final phase $P_\ell$. 
    Let $i\in[\ell-1]$ and denote by $x_i$ the point requested in phase
    $P_{i-1}$ for which \sequen moved server $s_k$ to serve it, which ended
    phase $P_{i-1}$.  The optimal algorithm \Opt must also serve this
    request and thus have a server at
    $x_i$ at the end of phase $P_{i-1}$ and thus the start of phase $P_i$. 
    Now consider the $k$ requests of phase $P_i$ for
    which \sequen moves a server. Since $s_k$ only moves for the last one of
    them, none of these $k$ requests can be $x_i$; thus there are $k$ distinct
    requested locations in phase $P_i$ that are all distinct from $x_i$.
    Therefore, in $P_i$ \Opt must move the server from $x_i$ or use
    the other $k-1$ servers to serve requests at $k$ distinct locations, which
    implies that at least one server has to serve requests at two distinct
    locations and thus move, causing a cost of $1$ as claimed.
    
    We now know that each configuration change causes a cost of at least $1$ for \Opt. Thus, \Opt
    incurs a cost of at least $\ell-1$. This yields the desired bound 
    $\cost{\sequen,I}\leq \ell\cdot \Delta\cdot k\leq \Delta\cdot k\cdot
    \cost{\Opt(I)}+\Delta\cdot k$, which concludes the proof.
\end{proof}

\begin{corollary}\label{corollary:uniformD}\label{corollary:uniform}
  On graphs of diameter $D$ or less, \sequen is $Dk$-competitive.
  In particular, \sequen is $k$-competitive on any uniform metric space.
\end{corollary}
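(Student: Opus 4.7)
The plan is to derive both statements directly from the preceding theorem, which gives a competitive ratio of $\Delta k$ for any metric space of aspect ratio $\Delta$. The only thing to do is to bound the aspect ratio in each of the two cases.

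First, I would handle the general graph case. Since we have adopted the convention (from the Preliminaries) that graphs are finite, connected, unweighted, and undirected, every edge has length~$1$, and so the distance between any two distinct vertices is a positive integer, in particular at least~$1$. Hence for a graph of diameter at most~$D$, the aspect ratio is
\[
  \Delta \;=\; \frac{\text{diameter}}{\min_{u\ne v} d(u,v)} \;\le\; \frac{D}{1} \;=\; D.
\]
Plugging this into the theorem yields the claimed $Dk$-competitiveness of \sequen.

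Second, for the uniform-metric statement, I would note that in a uniform metric space all pairwise distances between distinct points are equal, say to some value $c > 0$. Then both the diameter and the minimum distance between distinct points equal~$c$, so the aspect ratio is exactly~$1$, and the theorem gives $1 \cdot k = k$ as the competitive ratio. (Alternatively, one observes that a uniform metric space of minimum distance~$1$ is isomorphic to a clique in our unweighted-graph framework and thus a graph of diameter~$1$, whence the first part applies with $D=1$.)

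There is no real obstacle here: the corollary is essentially a restatement of the theorem once one translates ``aspect ratio'' into the vocabulary of unweighted graphs and uniform spaces. The only thing worth being careful about is making explicit the unweighted-graph convention that pins the minimum distance down to~$1$, since without this convention the first statement would need to be phrased in terms of aspect ratio rather than diameter.
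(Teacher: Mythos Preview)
Your proposal is correct and matches the paper's approach: the corollary is stated without proof as an immediate consequence of the preceding theorem, and your derivation---bounding the aspect ratio by~$D$ via the unweighted-graph convention, and observing that uniform metrics have aspect ratio~$1$---is exactly the intended justification. There is nothing to add.
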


\subsection{Lower Bounds for Deterministic Algorithms}\label{sec:lowerboundsdeterministic}

In this section, we provide lower bounds for deterministic algorithms on various metric spaces. 

\paragraph*{\boldmath Universal Lower Bound of $k$ on Almost All Graphs.}
We begin with a bound of $k$ for all unweighted graphs on which the problem is
nontrivial. 

\begin{theorem}\label{thm:lowerboundk}
  Let $k$ be a positive integer. No algorithm for the time-optimal $k$-server
  problem can be better than $k$-competitive on graphs with more than $k$
  vertices. 
\end{theorem}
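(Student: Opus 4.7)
The plan is to adapt the classical nonconstructive lower-bound proof for the $k$-server problem in the distance model (cf.~\cite{BE1998,Kom2016}) to the time model. First I fix any deterministic online algorithm \alg, select $k+1$ distinct vertices $v_0,v_1,\dots,v_k$ of the graph, and assume (paying only an additive constant for the initial reconfiguration) that \alg begins with its $k$ servers at $v_1,\dots,v_k$. The adversary then plays the standard cruel strategy: at every step it requests some vertex of $\{v_0,\dots,v_k\}$ not currently covered by \alg; such a vertex exists because \alg has only $k$ servers while there are $k+1$ candidates.

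For the lower bound on \alg's cost I observe that in an unweighted graph every forced move to a currently uncovered vertex has length at least~$1$, so the time-model step cost (the maximum distance any server travels in that step) is at least $1$. Hence $\cost{I,\alg(I)}\geq n$ on a request sequence $I$ of length $n$.

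For the upper bound on $\cost{I,\Opt(I)}$ I would invoke the classical averaging argument, instantiated on $\{v_0,\dots,v_k\}$: consider $k+1$ canonical \emph{lazy} offline strategies, each starting at the configuration that misses a different vertex of $\{v_0,\dots,v_k\}$, with the rule that each such strategy mirrors \alg's choice of moving server whenever it is forced to react. Averaging over these strategies exactly as in the distance-model proof yields one of them with cost at most $n/k+O(1)$; since $\Opt$ starting from \alg's initial configuration can reach the corresponding canonical starting configuration at an additive cost depending only on $k$, we obtain $\cost{I,\Opt(I)} \leq n/k + O(1)$.

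Putting the two estimates together gives $\cost{I,\alg(I)}/\cost{I,\Opt(I)} \geq n/(n/k + O(1))$, which tends to $k$ as $n\to\infty$, establishing the claim. The main delicate point is verifying that the averaging transfers cleanly to the time model despite the fact that \alg is allowed to move multiple servers per step at no extra time cost; this causes no harm for the argument because the adversary reacts only to \alg's resulting configuration, and the offline upper bound is realized by \emph{lazy} strategies (one server per step), for which the distance-model and time-model costs coincide.
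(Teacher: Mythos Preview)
Your approach has a genuine gap in the offline bound. The classical averaging argument you invoke shows that the $k$ lazy offline strategies together incur total cost $\sum_t d(r_{t-1},r_t)$; in the distance model this equals \alg's cost (since \alg may be taken lazy there), and dividing by $k$ gives the ratio~$\ge k$. In the time model, however, all you know about \alg\ is $\cost{I,\alg(I)}\ge n$, while $\sum_t d(r_{t-1},r_t)$ can be much larger than $n$ whenever the $k{+}1$ chosen vertices are not pairwise adjacent. On a path with $k{+}1$ vertices, for instance, consecutive requests may sit at distance up to $k$, so the averaging only yields an offline strategy of cost $\Theta(n)$, not $n/k$. The step ``yields one of them with cost at most $n/k+O(1)$'' is therefore unjustified. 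The ``mirror \alg's moving server'' rule is also ill-defined once \alg\ moves several servers in a single step, which you cannot forbid in the time model.

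The paper's proof sidesteps averaging altogether via a time-model-specific observation you are missing: choose the $k{+}1$ vertices $S$ so that $G[S]$ is \emph{connected}; then any configuration covering $k$ points of $S$ can be transformed into any other such configuration at time cost exactly~$1$, by sliding every server one edge along the $G[S]$-path from the currently uncovered vertex to the target uncovered vertex. This immediately gives $\cost{I,\Opt(I)}\le\lceil n/k\rceil$ (one reconfiguration suffices for each block of $k$ consecutive requests), and combined with $\cost{I,\alg(I)}\ge n$ yields the ratio~$\ge k$. The synchronous path-shift is precisely what makes the argument work on arbitrary graphs rather than only on those containing a $(k{+}1)$-clique.
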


\begin{proof}
  Let $k$, $G$, and \alg as described be given. 
  Choose any connected set $S$ of $k+1$ vertices of $G$ and consider the metric
  subspace described by the induced subgraph $G[S]$.  We call a configuration
  where all $k$ servers occupy pairwise distinct locations within $S$ a
  \emph{distinct $S$-configuration}.
  Observe that any given distinct $S$-configuration $C$ can be transformed into
  any other given distinct $S$-configuration $C'$ at cost at most $1$. This can
  be done by choosing a path inside $G[S]$ from its vertex not covered by $C$
  to its vertex not covered by $C'$ and moving all servers by one edge along
  this path. 

  Therefore, it is always possible to partition any instance of a length
  divisible by $k$ into consecutive subsequences of length exactly $k$ such
  that an optimal algorithm incurs a cost of only $1$ per subsequence.  In
  particular, this holds for such an instance constructed by always
  requesting a point of $S$ currently not covered by \alg. On such an
  instance, \alg incurs a cost of $1$ with every request, whereas the optimum
  can serve $k$ requests at cost at most $1$ by moving as described to a
  configuration that covers the currently requested vertex and those requested
  in the following $k-1$ steps. 
\end{proof}

\paragraph*{\boldmath Lower Bound of $k+1$ on the Line.}
We consider now a special type of space that has received a lot of attention
(see \cref{subsec:related}) for the classical $k$-server problem: the line.
As mentioned above, we have $k$-competitive algorithms for $k$-server in the
distance model on the real line, the rational line, and the integer line.  The
following theorem shows that the time-optimal $k$-server problem is
strictly harder in all of these cases.

\begin{theorem}\label{thm:kplusoneline}
  For any integer $k\geq 2$, no algorithm for the time-optimal $k$-server
  problem has a competitive ratio better than $k+1$ on the following spaces:
  the real line, the rational line, the integer line, any finite unweighted
  cycle with an even number of at
  least $2k+6$ vertices, and the continuous one-dimensional sphere $\sphere^1$. 
\end{theorem}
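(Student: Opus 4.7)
The plan is to prove the bound first on the integer line via an adaptive adversary, then transfer the construction to the other listed spaces.

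For the integer line, fix any deterministic online algorithm $\alg$ and restrict play to a window $P$ of $k+3$ consecutive integer points, with both $\alg$ and the optimal starting at the same $k$-configuration inside $P$. The adversary always requests a point of $P$ uncovered by $\alg$; since points of $P$ are at pairwise integer distance, every such request forces $\alg$ to pay at least $1$.

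The instance is partitioned into blocks of $k+1$ requests. At the start of each block the adversary commits to a \emph{swap target} $S'$, obtained from the optimal's current configuration $S$ by moving a single server by one unit to an adjacent point of $P$, so that $|S\cup S'|=k+1$. Within the block, every request is chosen from $(S\cup S')\setminus A_t$, where $A_t$ is $\alg$'s current configuration; such a point always exists because $|S\cup S'|=k+1>k=|A_t|$. The adversary additionally orders its requests so that, if the unique element of $S\setminus S'$ is requested in the block, it is requested strictly before the unique element of $S'\setminus S$. The offline strategy then stays at $S$ until the first request for the element of $S'\setminus S$, performs the synchronous unit move from $S$ to $S'$ at time-model cost exactly $1$, and stays at $S'$ for the remainder of the block. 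The ordering discipline guarantees that every pre-swap request lies in $S$ and every post-swap request lies in $S'$, so all requests are served; hence $\alg$ pays at least $k+1$ and the optimal pays at most $1$ per block, yielding the claimed ratio of $k+1$.

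The main obstacle is to verify that the adversary can always maintain the ordering discipline together with the requirement that each requested point lies in $(S\cup S')\setminus A_t$. Critical situations arise when $\alg$ pre-emptively covers the element of $S\setminus S'$, so the adversary must forgo requesting it and argue that the offline can still get away with a single swap; or when $\alg$ has arranged itself so that $S'\setminus S$ is the only uncovered point of $S\cup S'$, forcing an early swap-trigger after which the adversary must avoid requesting $S\setminus S'$. The extra slack afforded by $|P|=k+3$ rather than $k+2$ gives the adversary several candidate swap targets at the start of each block and lets it choose one compatible with $A_0$; closing a block early and restarting with a fresh target is needed in the most adversarial responses of $\alg$. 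The full case analysis, balancing the choice of $S'$ against $\alg$'s possible configurations, is the technical heart of the proof.

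The transfer to the other spaces is comparatively routine. The rational and real lines contain $P$ isometrically, so the same adversary works verbatim. A finite cycle of at least $2k+6$ vertices contains an arc supporting $P$ whose induced shortest-path metric coincides with that of the integer line, since the opposite arc then has at least $k+3$ vertices and is therefore no shorter than the direct path between any two points of $P$. The continuous sphere $\sphere^1$ is handled by rescaling $P$ onto an arc of length strictly less than half the circumference.
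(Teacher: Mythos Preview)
Your plan has a genuine gap at its core: by letting \algOff move only a \emph{single} server by one unit per block, you create a situation in which a mirroring online algorithm defeats the adversary. Concretely, once the swap request $p\in S'\setminus S$ appears, your ordering rule forbids any further request to the point of $S\setminus S'$, so the only remaining legal requests lie in $S'$, a set of exactly $k$ points. In the time model, \alg may move all servers in the step that serves $p$; if it moves to $S'$ (or to its best guess of $S'$, which is determined up to a binary ambiguity since $S'=(S\setminus\{q\})\cup\{p\}$ with $q\in\{p-1,p+1\}$), the adversary is immediately stuck. For $k=2$ one can check that your adversary then forces at most cost $3$ in an ``ambiguous'' block but only cost $1$ in the inevitable following ``unambiguous'' block, so the asymptotic ratio collapses to $2<k+1$. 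The hand-waved ``closing a block early and restarting with a fresh target'' cannot recover the missing requests without charging \algOff again. The extra slack from $|P|=k+3$ does not help here: the obstruction is that $|S\cup S'|=k+1$ while the post-swap legal set has only $k$ points.

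What you are missing is exactly what the paper's proof exploits: in the time model, \algOff can move \emph{all} $k$ servers by one unit simultaneously at total cost~$1$. The paper's adversary maintains \algOff on $k$ even integers and, in each phase, requests $k$ distinct odd integers that \algOff reaches with one synchronous step; the $k+1$ bound then comes not from $k+1$ unit-cost requests but from $k$ requests of which one is guaranteed to lie at distance~$\ge 2$ from every server of \alg (established via a careful ``range saturation'' argument). This all-servers-move idea is the essential mechanism separating the time model from the distance model here; a single-server swap, as in your plan, does not leverage it and cannot yield more than the trivial bound of~$k$.
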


\begin{proof}
  Let \alg be any online algorithm for the given metric space.  For a finite
  unweighted cycle with an even number of $N\geq 2k+6$ vertices, label the
  vertices counterclockwise by $0,1,\dots, N-1$. For the sphere $\sphere^1$,
  choose $N=2k+6$ points evenly spaced along this metric space, label them
  $0,1,\dots,N-1$ counterclockwise, and re-scale the space such that the
  distance between consecutive points is $1$. 
  For simplicity, we refer to the labeled points as integers and say that an
  integer \emph{neighbors} another one if they are at distance $1$ from each other.
  In all cases, every even integer neighbors two odd ones and each odd one two
  even ones.
  We will construct an instance phase by phase such that there is an optimal
  algorithm \Opt satisfying the following invariant:

  \begin{claim}\label{clm:invariantconsecutive}
    At the beginning of each phase, the $k$ servers of \Opt occupy $k$
    distinct even integers or $k$ distinct odd integers. Moreover, at least two
    of these integers are consecutive (in the sense of being separated by a
    distance of exactly $2$).
  \end{claim}
  
  Without loss of generality we assume for the following description that the
  integers occupied by \Opt at the start of the phase are all even by
  choosing an appropriate initial configuration and shifting the labels of the
  integers by one after each phase.  Any phase will consist of potentially
  repeated requests for $k$ distinct odd integers, among which at least two are
  consecutive. Moreover, the requests will be chosen such that \Opt can
  cover all of them by moving all of its $k$ servers simultaneously, each by a
  distance $1$ and thus at total cost $1$, at the beginning of the phase and no
  movements afterwards. Any phase constructed with these properties retains the
  invariant for the next phase, 
  allowing us to iterate the process to construct an arbitrarily long instance
  of arbitrarily high optimal cost. 

  Since \Opt does not move its servers anymore after their phase-initial
  synchronous movements, we can request any point covered by \Opt as many
  times as we would like without increasing the cost of \Opt or changing
  \Opt's configuration at the end of the phase. We can thus enforce the
  following invariant for \alg without loss of generality: 
  
  \begin{claim}\label{clm:stayentirephase}
    Once \alg has served a request at some point during some phase, it will
    always have a server at this point during all further requests of this
    phase.
  \end{claim}

  \begin{claimproof}[Proof of claim]
  Assume that \alg makes a move such that some integer previously requested in
  the current phase is no longer covered. Then the constructed instance could
  be extended by immediately requesting this integer again after this move, at
  no additional cost to \Opt. This can be continued until all the integers
  previously requested in the current phase are covered again by \alg.
  \end{claimproof}
  
  Using this invariant of \alg, we can now also assume the following without
  loss of generality: 

  \begin{claim}\label{clm:eachoddonce}
  In each phase, $k$ distinct odd integers are requested, each exactly once.
  \end{claim}
  
  In summary, we can assume the following properties for each phase: 

  \begin{enumerate}
    \item At the beginning of the phase, the servers of \alg occupy the same positions as the servers of \Opt, namely $k$ \emph{distinct} even integers, at least two of which are consecutive (in the sense of being separated by a distance of $2$). 
    \item During the phase, $k$ distinct odd integers are requested one by one. 
      Each such integer must, once requested, be covered by \alg during all subsequent requests of the phase. 
    \item Moreover, there is a bipartite matching of weight exactly $1$ between the $k$ odd requested points and the $k$ phase-initially covered even integers. 
    \item Finally, \Opt moves its servers along this matching at cost $1$ when the first request of the phase appears and does not move its server anymore for the rest of the phase. 
  \end{enumerate}
  
  We now describe how the $k$ requests for any given phase are chosen depending
  on \alg's behavior in a way that guarantees the invariant of
  \cref{clm:invariantconsecutive} to hold for the following phase.

  Consider any phase. Partition the phase-initial server positions (which are
  the same for \Opt and \alg) into $\ell\geq 1$ maximal, pairwise disjoint,
  nonempty sets $S_1,\dots,S_\ell$ of consecutive even integers.   We define
  $k_i\coloneqq\abs{S_i}$ and note that $\sum_{i=1}^\ell k_i=k$. 
  Given a metric space $\mathcal M=(M,d)$, a point $c\in M$, and radius
  $\rho\ge 0$, we call $D_\rho(c)\coloneqq\{p\in M\mid d(p,c)<\rho\}$ the
  \emph{open ball} and $D_\rho[c]\coloneqq\{p\in M\mid d(p,c)\le\rho\}$ the \emph{closed ball}.
  For any $m\in[\ell]$, we call $R_m\coloneqq\bigcup_{i\in S_m}D_2(i)$ the \emph{range}
  of $S_m$; \Cref{fig:subdivision} shows an example for $k=5$. 

  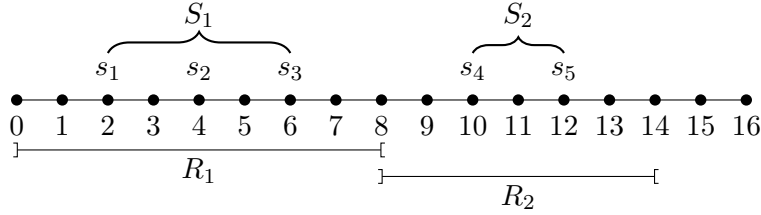
\begin{figure}
      \begin{center}
        \begin{tikzpicture}[scale=0.6, graphnode/.style={fill, circle, inner sep=1.5pt}]
          \draw (-1,0)--(0,0)--(1,0)--(2,0)--(3,0)--(4,0)--(5,0)--(6,0)--(7,0)--(8,0)--(9,0)--(10,0)--(11,0)--(12,0)--(13,0)--(14,0)--(15,0);
          \node[graphnode, label=below:{0}] at (-1,0) {};
          \node[graphnode, label=below:{1}] at (0,0) {};
          \node[graphnode, label=below:{2}] at (1,0) {};
          \node[graphnode, label=below:{3}] at (2,0) {};
          \node[graphnode, label=below:{4}] at (3,0) {};
          \node[graphnode, label=below:{5}] at (4,0) {};
          \node[graphnode, label=below:{6}] at (5,0) {};
          \node[graphnode, label=below:{7}] at (6,0) {};
          \node[graphnode, label=below:{8}] at (7,0) {};
          \node[graphnode, label=below:{9}] at (8,0) {};
          \node[graphnode, label=below:{10}] at (9,0) {};
          \node[graphnode, label=below:{11}] at (10,0) {};
          \node[graphnode, label=below:{12}] at (11,0) {};
          \node[graphnode, label=below:{13}] at (12,0) {};
          \node[graphnode, label=below:{14}] at (13,0) {};
          \node[graphnode, label=below:{15}] at (14,0) {};
          \node[graphnode, label=below:{16}] at (15,0) {};
          \node[label=above:{$s_1$}] at (1,0) {};
          \node[label=above:{$s_2$}] at (3,0) {};
          \node[label=above:{$s_3$}] at (5,0) {};
          \node[label=above:{$s_4$}] at (9,0) {};
          \node[label=above:{$s_5$}] at (11,0) {};
          \node[above=10pt] (s1) at (1,0) {};
          \node[above=10pt] (s3) at (5,0) {};
          \node[above=10pt] (s4) at (9,0) {};
          \node[above=10pt] (s5) at (11,0) {};
          \node[above=16pt, label=above:{$S_1$}] at (3,0) {};
          \node[above=16pt, label=above:{$S_2$}] at (10,0) {};
          \node[below=15pt] (r1) at (-1.3,0) {};
          \node[below=15pt] (r2) at (7.3,0) {};
          \node[below=35pt, label={$R_1$}] at (3,0) {};
          \node[below=25pt] (r3) at (6.7,0) {};
          \node[below=25pt] (r4) at (13.3,0) {};
          \node[below=45pt, label={$R_2$}] at (10,0) {};
          \draw[{]-[}] (r1)--(r2);
          \draw[{]-[}] (r3)--(r4);
          \draw[decoration={brace,amplitude=8}, decorate, thick] (s1.north) -- (s3.north);
          \draw[decoration={brace,amplitude=6}, decorate, thick] (s4.north) -- (s5.north);
        \end{tikzpicture}
      \end{center}
      \caption{Example of the subdivision of phase-initial server positions and
      the corresponding ranges used in the proof of \cref{thm:kplusoneline}.
      (Note that the figure does not show the entire space, which is a cycle or
      the infinite line, but only a segment.)}
      \label{fig:subdivision}
  \end{figure}

  Note that the range $R_m$ contains exactly $k_m+1$ odd integers. Exactly
  $k_m$ of them are requested during a phase in the instance family described
  below.  This guarantees that \Opt can indeed move all its servers from
  their phase-initial positions to the points request during this phase at cost
  $1$ and then keep them there for the remainder of the phase, which proves the
  first part of the invariant of \cref{clm:invariantconsecutive}. 
  We call the range $R_m$ \emph{saturated} if the instance has already
  requested $k_m$ out of the $k_m$ odd integers of $R_m$ during the current
  phase and \emph{unsaturated} otherwise.  Recall that once an integer is
  requested, a server occupies it during all remaining requests of the phase,
  thus any saturated $R_m$ contains at least $k_m$ servers. 
  A phase ends when all ranges are saturated. 
  
  Since there are two consecutive even integers in the phase-initial configuration
  of \Opt, by renaming
  we can assume without loss of generality that $k_1\geq 2$ and that $S_1=\{2,4,\dots, 2k_1\}$. 
  Since $R_1$ contains exactly $k_1+1\ge 3$ consecutive odd integers and $k_1$
  of them are requested during the phase, at least two consecutive odd
  integers are requested. This proves the second part of the invariant of
  \cref{clm:invariantconsecutive}.

  We now describe what the requests in the considered phase depend on. 
  For this, we will call a point \emph{distant} if and only if it has distance
  at least $1$ from all current servers, i.e., if it lies outside of all open
  unit balls around the current server locations.  The first request is $3$,
  which is indeed a distant point since all servers are at even integers
  phase-initially. 
  From then on, a further request is chosen according to the following
  \emph{selection procedure} until $k-1$ odd integers have been requested
  during the phase or the procedure becomes impossible: Request an arbitrary
  distant odd integer not previously requested during the phase 
  from any unsaturated range $R_m$ except for the integer $1$. 
  The sequence of such requests ends only when $k_m$ requests have been made in
  each range $R_m$ with $1\neq m\in [\ell]$ and $k_1-1$ requests have been made
  in $R_1$, or if no such request is possible anymore.  We will show that once
  this process ends, there must be a point in $R_1$ that no server can reach
  with cost less than $2$ using the following simple claim:

	\begin{claim}\label{claim:uncoveredpoints}
	  If a range $R_m$ for any $m\in[\ell]$ contains at most $k_m$ servers,
	  it also contains a distant odd integer. 
    If the \emph{truncated range} $R_1'\coloneqq R_1\setminus(0,4]$ contains at
    most $k_1-2$ servers, it contains a distant odd integer.
  \end{claim}
	
	\begin{claimproof}[Proof of claim]
    A unit ball can contain at most one odd integer. Moreover, unit balls
    around points outside of a range $R_m$ cannot contain odd integers inside
    this range, and analogously for the truncated range $R_1'$.  Hence, the
    range $R_m$ for any $m\in[\ell]$ containing at most $k_m$ servers means
    that at most $k_m$ out of the $k_m+1$ odd integers in it are not distant
    and thus that there is a distant odd integer in $R_m$. 
    Analogously, the truncated range $R_1'$ containing at most $k_1-2$ servers
    means that at most $k_1-2$ out of the $k_1-1$ odd integers in it are not
    distant, and thus that there is a distant odd integer in $R_1'$.
  \end{claimproof}
	
  By this claim, if the selection procedure ends, any range $R_m$ with $1\neq m
  \in[\ell]$ must contain at least $k_m$ servers, whether or not it is
  saturated, and $R_1'$ must contain at least $k_1-2$ servers.  We consider two
  different cases:

	\begin{itemize}[left=0em]
  \item \textit{Case 1: $R_1'$ contains at least $k_1-1$ servers.}
    In this
    case, $R_1$ contains at least $k_1$ servers (including the server positioned on $3$).
    This means that all ranges $R_m$ with $1\neq m\in [\ell]$ must contain
    exactly $k_m$ servers and must therefore be saturated. Thus, these servers
    all occupy odd integers and have a distance of at least $2$ to the odd
    integer $1$. The same holds for any servers in $R_1'$.
  \item \textit{Case 2: $R_1'$ contains exactly $k_1-2$ servers.}
    In this case, $R_1'$ contains a distant point. Since the selection
    procedure ended, a total of $k_1-1$ requests must have already been made to
    $R_1$ and served. Thus the $k_1-2$ servers in $R_1'$ occupy distinct odd
    integers and cannot reach any other odd integer in $R_1$ with cost less
    than $2$. Of the other ranges $R_m$ with $1\neq m\in[\ell]$, all but
    possibly one must be saturated, since they contain exactly $k_m$ servers.
    Therefore these servers occupy odd integers and cannot reach odd integers
    in $R_1$ with cost less than $2$. Finally, one range $R_m$ with $1\neq m\in
    [\ell]$ may contain $k_m+1$ servers, or a single server may occupy a point
    outside a range. Such a server can only reach either the odd integer $1$ or
    the odd integer $2k_1-1$ in $R_1$ with cost less than $2$. This is clear if
    the metric space is a line.

    If it is a cycle or the continuous sphere, the
    two ranges would have to meet at both ends. However, this would mean that
    they contain all odd integers in the metric space, of which there are at
    least $k+3$, while they contain a maximum of $(k_1+1)+(k_{m}+1)\leq k+2$
    odd integers. Thus, since there are two odd integers in $R_1$ that have not
    been requested yet, one of them cannot be reached by any server with cost
    less than $2$.
  \end{itemize}
	
  We thus have, in both cases, an odd integer in $R_1$ that is at distance at
  least $2$ from all servers.  Such an integer is requested next, forcing
  \alg to incur a cost of at least $2$ when serving it. 
  From then on, additional distant points in unsaturated ranges $R_m$ with
  $m\in [\ell]$ (including $R_1$) are requested. Such points exist by
  \Cref{claim:uncoveredpoints}. Once $k$ distinct odd integers have been
  requested, all ranges are saturated and the phase ends. 
	
  Recall that the first request was to the odd integer $3$. Since $R_1$
  contains $k_1+1$ odd integers, $k_1$ of which have been requested, at least
  one of the odd integers $1$ or $5$ must have been requested, and thus indeed
  at least two consecutive odd integers.  This means that the final
  configuration of the phase is such that the invariant of
  \cref{clm:invariantconsecutive} is satisfied for an immediately following
  phase.

  We can thus iterate the process and construct arbitrarily many new phases
  such that for each phase \alg incurs a cost of at least $k+1$ while
  \Opt has a cost of exactly $1$.
\end{proof}

Note that \cref{thm:kplusoneline} generalizes the lower bound of $3$ given by
Koutsoupias and Taylor for the special case of $k=2$ servers on the real
line~\cite[Thm.~4]{KT2004}.  Furthermore, our result also implies
hardness for many other spaces, e.g., the Euclidean plane, as shown in the
following corollary.

\begin{corollary}\label{corollarykplusoneline}
  No algorithm can solve the time-optimal $k$-server problem with a competitive
  ratio better than $k+1$ in the Euclidean space of any positive dimension.
\end{corollary}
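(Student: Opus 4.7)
My plan is to reduce to Theorem~\ref{thm:kplusoneline} via an orthogonal projection argument, exploiting the fact that every Euclidean space contains an isometric copy of the real line. The case $d=1$ is immediate, so fix $d\ge 2$ and let $L\subseteq\R^d$ be the first coordinate axis with the inherited Euclidean metric; this is an isometric copy of the real line. The adversary uses exactly the instance construction from the proof of Theorem~\ref{thm:kplusoneline}, with all initial servers placed on $L$ and all requests restricted to $L$.

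Given any online algorithm $\alg^*$ on $\R^d$, the key step is to define a companion algorithm $\alg$ on $L$ by projecting $\alg^*$'s configurations through the orthogonal projection $\pi\colon\R^d\to L$ at each time step. Since each request $r_j$ already lies on $L$, it satisfies $\pi(r_j)=r_j$, so any server of $\alg^*$ that covers $r_j$ projects to a server of $\alg$ that also covers $r_j$; hence $\alg$ is a feasible $k$-server algorithm on $L$. Because $\pi$ is $1$-Lipschitz in Euclidean space, each server's per-step displacement can only shrink under projection, and taking the per-step maximum over servers preserves this inequality; summing over time steps gives $\cost{\alg,I}\le\cost{\alg^*,I}$ in the time model for every instance $I$ concentrated on $L$.

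Conversely, any offline solution that stays on $L$ is trivially also a valid solution in $\R^d$, so the optimum costs satisfy $\cost{\Opt_{\R^d}(I)}\le\cost{\Opt_L(I)}$. Combining these observations, an upper bound of $c<k+1$ on the competitive ratio of $\alg^*$ in $\R^d$ would yield, via the chain $\cost{\alg,I}\le\cost{\alg^*,I}\le c\cdot\cost{\Opt_{\R^d}(I)}+\alpha\le c\cdot\cost{\Opt_L(I)}+\alpha$, a matching bound for $\alg$ on $L$, contradicting Theorem~\ref{thm:kplusoneline}. I do not foresee any genuine difficulty; the only nontrivial ingredient is the simple geometric fact that orthogonal projection does not increase per-step displacements and hence cannot inflate the max-displacement cost characteristic of the time model.
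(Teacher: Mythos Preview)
Your proposal is correct and takes essentially the same approach as the paper: embed the real line as the first coordinate axis, use the hard instance family from Theorem~\ref{thm:kplusoneline}, and project any $\R^d$-algorithm onto that axis, exploiting that orthogonal projection is $1$-Lipschitz and hence cannot increase the time-model cost. Your version is slightly more explicit about the offline-cost comparison and the feasibility of the projected algorithm, but the argument is otherwise identical to the paper's.
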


\begin{proof}
  Identify the real line with the first axis of any given Euclidean space of
  positive dimension, and consider the induced hard instance family showing
  that no algorithm can be better than $k+1$ competitive for the real line
  from the proof of \cref{thm:kplusoneline}.

  Assume towards contradiction that there is an algorithm better than
  $(k+1)$-competitive on the considered Euclidean space. 
  Then the projection of this algorithm on the first dimension (obtained by
  replacing each server position computed by the algorithm for the Euclidean
  space by its projection onto the first axis) cannot incur greater cost
  because the distance between the projections of two points is never larger
  than the distance between the two points themselves. Thus this projected
  algorithm would be better than $(k+1)$-competitive on the induced hard
  instance family, which yields the desired contradiction.
\end{proof}

\paragraph*{\boldmath Lower Bound of $3k/2$ on Small Graphs.}
While the lower bound of $k+1$ for the time model from \cref{thm:kplusoneline}
improves over the best known lower bound of $k$ for the distance model, it does
so just barely.  Although this is sufficient to show that the direct analogue
to the $k$-server conjecture is not true in the time model, we are able to
improve the lower bound further to $3k/2$ on instances that are only slightly
larger. 

We first provide in \cref{lem:doublecircle} fixed-size instance segments, for
which an algorithm with two servers spends at least $3$ times the cost of an optimal
solution. The instances are designed to be iterable and thus immediately yield
a lower bound of $3k/2$ on the competitive ratio for $k=2$ in
\cref{cor:doublecircle}.  The underlying space for these constructions is
called the \emph{double cycle}; see \cref{fig:doublecircle}.  
This space is also used as a gadget to in the proof of \cref{thm:threehalfk},
which extends the statement to any even $k\ge2$. 

\begin{definition}\label{def:doublecircle}
  Consider the graph on the twelve vertices $A_i$ and $B_i$ for $i\in[6]$, and the
  edge sets
  $\{\{A_i,A_{i+1}\},\{A_i,B_{i+1}\},\{B_i,A_{i+1}\},\{B_i,B_{i+1}\}\mid i\in[6]\}$,
  where we define $A_7\coloneqq A_1$ and $B_7\coloneqq B_1$.  We call this graph, which
  is depicted in \cref{fig:doublecircle}, the \emph{double cycle}. 
\end{definition}

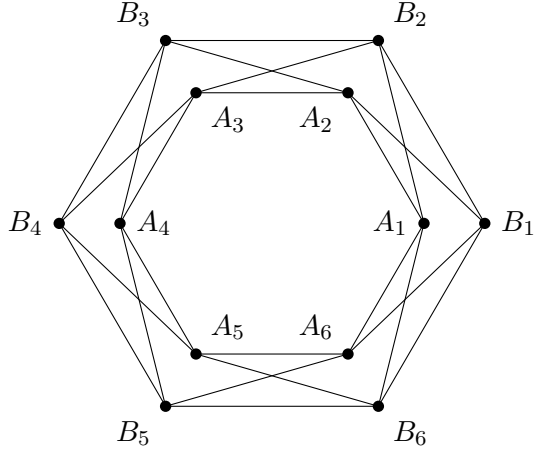
\begin{figure}
  \begin{center}
    \begin{tikzpicture}
      \def\hexstretch{1.4};
      \coordinate (a1) at (2,0);
      \coordinate (a2) at (1,1.7320508075688772);
      \coordinate (a3) at (-1,1.7320508075688774);
      \coordinate (a4) at (-2,0);
      \coordinate (a5) at (-1,-1.7320508075688767);
      \coordinate (a6) at (1,-1.732050807568878);
      \coordinate (b1) at (\hexstretch*2,\hexstretch*0);
      \coordinate (b2) at (\hexstretch*1,\hexstretch*1.7320508075688772);
      \coordinate (b3) at (-\hexstretch*1,\hexstretch*1.7320508075688774);
      \coordinate (b4) at (-\hexstretch*2,\hexstretch*0);
      \coordinate (b5) at (-\hexstretch*1,-\hexstretch*1.7320508075688767);
      \coordinate (b6) at (\hexstretch*1,-\hexstretch*1.732050807568878);
      \node[fill, circle, inner sep=1.5pt, label=left:{$A_1$}] at (a1) {};
      \node[fill, circle, inner sep=1.5pt, label=below left:{$A_2$}] at (a2) {}; 
      \node[fill, circle, inner sep=1.5pt, label=below right:{$A_3$}] at (a3) {}; 
      \node[fill, circle, inner sep=1.5pt, label=right:{$A_4$}] at (a4) {}; 
      \node[fill, circle, inner sep=1.5pt, label=above right:{$A_5$}] at (a5) {}; 
      \node[fill, circle, inner sep=1.5pt, label=above left:{$A_6$}] at (a6) {};
      \node[fill, circle, inner sep=1.5pt, label=right:{$B_1$}] at (b1) {};
      \node[fill, circle, inner sep=1.5pt, label=above right:{$B_2$}] at (b2) {}; 
      \node[fill, circle, inner sep=1.5pt, label=above left:{$B_3$}] at (b3) {}; 
      \node[fill, circle, inner sep=1.5pt, label=left:{$B_4$}] at (b4) {}; 
      \node[fill, circle, inner sep=1.5pt, label=below left:{$B_5$}] at (b5) {}; 
      \node[fill, circle, inner sep=1.5pt, label=below right:{$B_6$}] at (b6) {};    
      \draw (a1)--(a2)--(a3)--(a4)--(a5)--(a6)--(a1);
      \draw (b1)--(b2)--(b3)--(b4)--(b5)--(b6)--(b1);
      \draw (a1)--(b2)--(a3)--(b4)--(a5)--(b6)--(a1);
      \draw (b1)--(a2)--(b3)--(a4)--(b5)--(a6)--(b1);
    \end{tikzpicture}
  \end{center}
  \caption{The double cycle metric space used in the proof of \cref{lem:doublecircle}.}
  \label{fig:doublecircle}
\end{figure}

The following lemma describes a process that can be repeated arbitrarily often
to yield a lower bound of $3$ on the competitive ratio in the case of $k=2$
servers.

\begin{lemma}\label{lem:doublecircle}
  Consider the $2$-server problem on the double cycle defined in
  \cref{def:doublecircle} and depicted in \cref{fig:doublecircle}, and let
  \alg be any online algorithm.  Assume that the two servers of \Opt have
  some positions $p_1$ and $p_2$ satisfying $p_1\in\{A_i,B_i\}$ and
  $p_2\in\{A_{j},B_{j}\}$ for some $j\equiv_6i+2$, whereas \alg might have
  arbitrary initial server positions. Then there is 
  a sequence of two requests such that
  \begin{enumerate}[label=(\roman*)]
    \item \alg incurs a cost of at least $3$ serving them, 
    \item \Opt incurs a cost of exactly $1$ serving them, and 
    \item the positions $q_1$ and $q_2$ of the servers of \Opt at the end of
    the instance satisfy $q_1\in\{A_i,B_i\}$ and $q_2\in\{A_{j},B_{j}\}$ for
    some $j\equiv_6i+2$.
  \end{enumerate}
\end{lemma}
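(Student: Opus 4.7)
The plan is to use the symmetries of the double cycle to reduce to a canonical \Opt configuration, then execute a two-round adaptive adversary argument with case analysis on \alg's initial position.

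By the rotational symmetry of the double cycle and the involution $A_i\leftrightarrow B_i$, I will assume without loss of generality that $p_1=A_1$ and $p_2=A_3$. The adversary draws its requests only from the six index-$\{2,4,6\}$ vertices $\{A_2,B_2,A_4,B_4,A_6,B_6\}$ and restricts its sequences $(r_1,r_2)$ so that $\{r_1,r_2\}$ is one of the eight pairs $\{x,y\}$ with $x$ a neighbor of $A_1$ (hence $x\in\{A_2,B_2,A_6,B_6\}$), $y$ a neighbor of $A_3$ (hence $y\in\{A_2,B_2,A_4,B_4\}$), and the indices of $x$ and $y$ differing by $2\pmod 6$. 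For every such pair, \Opt serves both requests at total cost exactly $1$ by simultaneously moving $A_1\to x$ and $A_3\to y$ in step~$1$ and remaining idle in step~$2$; the resulting final configuration $\{x,y\}$ has indices differing by $2\pmod 6$, so conditions (ii) and (iii) hold automatically regardless of which pair and order the adversary ultimately picks.

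For (i), after observing \alg's initial configuration $(a_1,a_2)$, the adversary will pick $r_1$ from the six candidates to maximize $c_1=\min_i d(a_i,r_1)$ while preserving a dangerous set of follow-ups; after \alg's response $(a'_1,a'_2)$ with step-$1$ cost $c_1$ and one server at $r_1$, the adversary then picks $r_2$ from the allowed follow-ups of $r_1$ to maximize $c_2=\min_i d(a'_i,r_2)$. The structural backbone of the analysis is the observation that, for $r_1$ at index~$2$, the four follow-up candidates $\{A_4,B_4,A_6,B_6\}$ share $\{A_5,B_5\}$ as their only common distance-$1$ neighbors, and these two ``central'' vertices lie at distance $3$ from $A_1$ and $2$ from $A_3$. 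Consequently, to cheaply cover all four potential $r_2$ options \alg must park its non-$r_1$ server at $A_5$ or $B_5$, which in turn forces $c_1\ge 2$; otherwise some follow-up sits at distance $\ge 2$ from both of \alg's post-step-$1$ servers, yielding $c_2\ge 2$ against a small $c_1\ge 1$. Either way, $c_1+c_2\ge 3$.

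The main obstacle I expect is the careful case analysis handling all initial \alg positions, in particular those where \alg's servers already sit on some would-be $r_1$ vertex, so that the adversary must pick a different $r_1$ to avoid $c_1=0$. The enumeration is kept manageable by the transitive automorphism group of the double cycle acting on pairs of vertices, but one must also carefully account for the time-model ability of \alg to simultaneously reposition its second server during step~$1$ at no extra cost beyond $c_1$, which is precisely the manoeuvre that the ``$A_5$-or-$B_5$'' argument is designed to nullify.
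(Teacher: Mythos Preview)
Your framework---reduce by symmetry, then run an adaptive two-request adversary drawing from the six ``opposite-parity'' vertices---is the same as the paper's, and your verification of (ii) and (iii) is fine. However, the structural backbone you propose does not go through as stated. First, a factual error: $d(A_1,A_5)=2$, not $3$ (the path $A_1$--$A_6$--$A_5$ has length~$2$); likewise $d(A_3,A_5)=2$. Second, and more seriously, your sketched argument for $r_1$ at index~$2$ fails when \alg starts at $\{A_3,A_5\}$: \alg serves $r_1\in\{A_2,B_2\}$ by moving the $A_3$-server one step while leaving the other server at $A_5$, so $c_1=1$; but $A_5$ is adjacent to all four follow-ups $\{A_4,B_4,A_6,B_6\}$, so $c_2=1$ as well, giving total cost $2<3$. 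Your claim that parking at $A_5$ ``forces $c_1\ge 2$'' is exactly what breaks, because \alg may already have a server there.

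The fix is that the adversary must sometimes choose $r_1$ at a different index; for \alg at $\{A_3,A_5\}$, taking $r_1\in\{A_4,B_4\}$ works. The paper organises this via a cleaner case split: first test whether \emph{some} candidate vertex lies at distance $\ge 2$ from both of \alg's servers (Case~1: request it, cost $\ge 2$, then finish with any uncovered candidate at a different index). If not, one deduces that \alg's two servers sit at two of the three ``central'' index pairs, and the adversary requests the vertex at the index \emph{between} \alg's two servers first (Case~2), then branches on whether \alg's second server jumps to the far index pair. Your proposal is salvageable, but the ``always index~$2$'' heuristic must be replaced by this position-dependent choice of $r_1$.
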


\begin{corollary}\label{cor:doublecircle}
  There is no algorithm for the 2-server problem in the time model on the double
  cycle with a competitive ratio better than $3$.
\end{corollary}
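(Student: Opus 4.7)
The plan is to iterate \cref{lem:doublecircle} to build arbitrarily long instances that drive the competitive ratio toward $3$. I fix an arbitrary online algorithm $\alg$ that is $c$-competitive with additive constant $\alpha$ and will derive a contradiction whenever $c<3$ by exhibiting instances of growing length.

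First I would choose the initial configuration $C_0$ so that the precondition of \cref{lem:doublecircle} is already satisfied, for instance by placing the two servers at $A_1$ and $A_3$ so that their indices differ by $2$ modulo $6$. Since the initial configuration is part of the problem specification, this incurs no loss of generality for a lower bound on the competitive ratio on the double cycle, and both $\alg$ and $\Opt$ start from $C_0$.

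Next I apply \cref{lem:doublecircle} repeatedly. Part $(iii)$ guarantees that after each block of two requests the offline servers again meet the precondition, so the lemma may be invoked again. After $N$ iterations, producing an instance $I_N$ of length $2N$, part $(i)$ ensures $\cost{I_N,\alg(I_N)}\ge 3N$, while the specific offline schedule described in the lemma pays exactly $1$ per block by part $(ii)$, upper-bounding the optimum by $\cost{I_N,\Opt(I_N)}\le N$.

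Consequently, $c$-competitiveness would force $3N\le cN+\alpha$ for every $N$, which fails once $N>\alpha/(3-c)$ whenever $c<3$. Hence no algorithm can be better than $3$-competitive on the double cycle. The technical content of the argument lies entirely in \cref{lem:doublecircle}; the corollary itself is pure amplification, and the only mild concern is arranging that the very first iteration is applicable, which is handled by the admissible choice of $C_0$ above.
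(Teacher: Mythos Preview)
Your proposal is correct and matches the paper's approach: the paper does not give a standalone proof of the corollary but states it as an immediate consequence of iterating \cref{lem:doublecircle}, which is exactly what you spell out. Your handling of the initial configuration and the use of part~(iii) to chain applications is precisely the intended amplification argument.
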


\begin{proof}[Proof of \cref{lem:doublecircle}]
  By symmetry, we may assume that the initial positions of the two servers of
  \Opt are $B_2$ and $B_4$. We will consider two distinct cases and prove
  the first property of the claim in both of them. 
  
  \begin{itemize}[left=0em]
    \item \textit{Case 1.} 
    Assume that there is at least one point in $\{A_1,B_1,A_3,B_3,A_5,B_5\}$
    that is not contained in the union of the closed neighborhoods of the
    positions initially occupied by \alg's servers. Such a point is then
    requested first and \alg incurs a cost of at least $2$ serving it. If the
    request was any point $\{A_1,B_1,A_5,B_5\}$, then the next request is
    either $A_3$ or $B_3$, whichever is not occupied by \alg. (They cannot
    both be occupied by \alg because one of its servers is still at one of the
    points in $\{A_1,B_1,A_5,B_5\}$.) If the request was to $A_3$ or $B_3$, an
    unoccupied point among $\{A_1,B_1,A_5,B_5\}$ is requested. This forces \alg
    to incur a further cost of~$1$. 
    \item \textit{Case 2.} 
    Assume that the union of the closed neighborhoods of the positions
    initially occupied by \alg's servers contain all of
    $\{A_1,B_1,A_3,B_3,A_5,B_5\}$. The closed neighborhood of a point in this
    set contains only one such point and the closed neighborhood of a point in
    the complement contains only four such points. To cover all six points, the
    two positions must thus be two points in the complement.  They can also not
    be $A_i$ and $B_i$ for the same $i$ since this would again cover only four
    of the six points. 
    We thus know that the two server positions of \alg are obtained by choosing
    two of the pairs $\{A_2,B_2\}$, $\{A_4,B_4\}$, and $\{A_6,B_6\}$ and then
    one point of each chosen pair. Without loss of generality, let the chosen
    points be $B_2$ and $B_4$.  Then $A_3$ or $B_3$ is requested first.
    \begin{itemize}
      \item \textit{Case 2a.} If \alg serves this request and moves the other
      server to $A_6$ or $B_6$, it has cost $2$. Then, any point among
      $\{A_1,B_1,A_5,B_5\}$ is requested, which forces \alg to incur cost at least
      $1$.
      \item \textit{Case 2b.} If \alg serves this request and does not
      move the other server to $A_6$ or $B_6$, it has cost at least $1$.  Then,
      \alg can serve at most two points among $\{A_1,B_1,A_5,B_5\}$ with cost
      less than $2$, and thus there are at least two points where this is not the
      case. Such a point is requested, letting \alg additionally incur a cost of
      at least $2$.
    \end{itemize}
  \end{itemize}
  
  We have seen that \alg indeed incurs a cost of at least $3$ for the two
  requests in both cases.  We observe that in all cases, the two points to be
  requested can be chosen as follows: Out of the three pairs $\{A_1,B_1\}$,
  $\{A_3,B_3\}$, and $\{A_5,B_5\}$, two pairs are chosen, and then from each
  chosen pair, one point is chosen. This already guarantees the last claimed
  property. 
  
  Finally, we note that \Opt can for any choice serve both requests by
  moving its two servers once by distance $1$ simultaneously: Both servers
  moving clockwise for one of the three possible choices of pairs, both
  counterclockwise for another choice, and one clockwise and the other
  counterclockwise for the last choice.  This proves the second claim. 
\end{proof}

We now generalize the construction of the double cycle to
\emph{double cycle chains}, which allows us to prove a lower bound of $3k/2$ on
the competitive ratio for an arbitrarily large even number $k$ of servers.

\begin{definition}\label{def:kcircles}
  Let an arbitrary positive even integer $k$ be given. 
  Consider the graph that is constructed by taking $k/2$ copies $G_i$ with
  $i\in[k/2]$ of the double cycle depicted in \cref{fig:doublecircle} and linking
  them by adding, for every $i\in[k/2-1]$, a path on $4$ new vertices and
  connecting by an edge one of its ends with the vertex corresponding to $B_1$ of
  $G_i$ and by another edge at the other end with the vertex corresponding to
  $B_4$ of $G_{i+1}$.  We call this graph, which is shown in
  \cref{fig:kcircles}, the \emph{double cycle chain} of length $k/2$.
  It contains exactly 
  $2\cdot6\cdot k/2 + 4(k/2-1) = 8k - 4\in\mathcal{O}(k)$ 
  vertices and 
  $4\cdot6\cdot k/2 + 5(k/2-1) = 29k/2 - 5\in\mathcal{O}(k)$ 
  edges. 
\end{definition}

\begin{theorem}\label{thm:threehalfk}
  For any even integer $k\geq 2$, there is no online algorithm for
  the time-optimal $k$-server problem on the double cycle chain of length $k/2$
  defined in \cref{def:kcircles} and depicted in \cref{fig:kcircles} with a
  competitive ratio better than $3k/2$.
\end{theorem}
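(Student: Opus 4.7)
My plan is to iterate the two-request pattern from \cref{lem:doublecircle} across all $k/2$ copies of the double cycle in the chain, essentially running them in parallel. I build the instance as a sequence of super-phases, each consisting of exactly two requests per cycle, hence $k$ requests in total. Inductively, at the start of every super-phase I keep \Opt in a configuration with two servers in each $G_i$ in positions satisfying \cref{lem:doublecircle}~(iii). Since \Opt is offline, it knows all $k$ upcoming requests of the super-phase in advance, and in the very first step I have it move all $k$ of its servers simultaneously by distance exactly $1$, to a configuration that already covers every upcoming request; it then stays put until the next super-phase. Because the time-model cost of a step is the \emph{maximum} distance moved, \Opt pays exactly $1$ per super-phase, and \cref{lem:doublecircle}~(iii) keeps the invariant valid for the next one.

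In each super-phase the adversary processes the cycles in the fixed order $G_1,\dots,G_{k/2}$. When it is $G_i$'s turn, it reads \alg's current configuration restricted to $G_i$ and replays the exact two-request sequence from the proof of \cref{lem:doublecircle} on $G_i$. The construction of the double cycle chain ensures that any two distinct cycles are at distance at least $5$ via the connecting $4$-vertex paths, so no \alg-server outside $G_i$ lies within distance $1$ of any vertex of $G_i$; in particular, servers in other cycles are invisible to the lemma's selection procedure and the replay is well-defined.

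The heart of the argument is the lower bound $\cost{\alg,I}\ge \frac{3k}{2}$ per super-phase. For each cycle $G_i$, let $a_i$ be the number of \alg-servers in $G_i$ just before processing $G_i$, and let $c_i$ be \alg's cost for the two steps devoted to $G_i$'s requests. When $a_i=2$, \cref{lem:doublecircle} applies verbatim and yields $c_i\ge 3$. I extend this to $c_i\ge f(a_i)$ for the convex, piecewise-linear function $f(a)\coloneqq\max(0,\,5-a)$ by a short case analysis: when $a_i=0$, any server arriving at a requested vertex must have traversed the length-$5$ inter-cycle path; when $a_i=1$, the lone local server either shuttles between two requests on a diameter-$3$ graph at cost $\ge 2+2$ or imports external help at cost $\ge 5$; when $a_i\ge 3$, the adversary still picks two unoccupied points among $\{A_1,B_1,A_3,B_3,A_5,B_5\}$ (which are pairwise non-adjacent), each at distance $\ge 1$ from every \alg-server, forcing cost $\ge 1$ per request. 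With $\sum_i a_i=k$ across $k/2$ cycles (so the average is $2$), the convexity of $f$ combined with Jensen's inequality yields $\sum_i c_i\ge \sum_i f(a_i)\ge (k/2)\cdot f(2)=\frac{3k}{2}$.

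The main obstacle I expect is the dynamic version of this accounting: \alg can shuffle servers between cycles during a super-phase, so the $a_i$ values at their respective measurement times need not sum to $k$, which a priori breaks the direct application of Jensen. The resolution is that every inter-cycle transfer is itself a move of distance $\ge 5$, which is precisely the slack baked into $f$; I would formalize this either by an amortized charging scheme that pays for transfers out of the resulting deficit, or by a potential such as $\Phi=\sum_i \max(0,\,2-a_i)$ whose every unit of increase is necessarily paid for by a move of cost $\ge 5$, showing that no pattern of intra-super-phase transfers can drop \alg's total cost below $\frac{3k}{2}$. Iterating the super-phase arbitrarily often then produces instances of arbitrarily large \Opt-cost on which \alg pays at least $\frac{3k}{2}$ times as much, completing the proof.
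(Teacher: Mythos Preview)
Your approach has a genuine gap, and it stems from committing to a \emph{fixed} processing order $G_1,\dots,G_{k/2}$ and then trying to repair the accounting with convexity and a potential. Two concrete problems:

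First, the case analysis for $c_i\ge f(a_i)$ does not hold as stated because you ignore servers sitting on the connecting paths. Your claim ``no \alg-server outside $G_i$ lies within distance $1$ of any vertex of $G_i$'' is false: the first path vertex is adjacent to $B_1$ (or $B_4$) of $G_i$. So when $a_i=0$ a server need not traverse the full length-$5$ path to reach $G_i$, and when $a_i=1$ a path server can enter $G_i$ during step~1 at cost $1$ and then serve the second request cheaply; the bound $c_i\ge 4$ is not justified. More generally, \cref{lem:doublecircle} is stated for exactly two servers on the double cycle, so ``replaying the lemma's selection procedure'' on a different number of local servers is not well-defined.

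Second, the potential sketch does not obviously close. Transfers are not atomic jumps of cost $5$: a server can creep along the path one edge per step, piggybacking on steps that already cost $1$ or $2$ to serve a request elsewhere, so the slack you bake into $f$ is not automatically paid for.

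The paper sidesteps all of this with a much simpler device: it processes the cycles in an \emph{adaptive} order. It first enforces (by re-requesting, exactly as in \cref{clm:stayentirephase}) that once \alg\ serves a request in the current phase it keeps a server there; this pins two servers in every already-processed cycle. Then, by pigeonhole, there is always an unprocessed cycle $G_i$ whose extended neighborhood (the cycle together with the two nearest path vertices on each side) contains at most two of \alg's servers. Any other server is at distance $\ge 3$ from every vertex of $G_i$, so it cannot help; \cref{lem:doublecircle} applies verbatim to the at most two nearby servers and yields cost $\ge 3$ for that cycle. Summing over the $k/2$ cycles gives $3k/2$ per phase against \algOff's cost of $1$, and iterating phases finishes the proof. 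The adaptive ordering plus pinning is the missing idea; with it, no convexity or potential argument is needed.
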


\begin{proof}
  Fix any even integer $k\ge2$. Let \alg be any online algorithm for the
  time-optimal $k$-server problem on the double cycle chain of length $k/2$.
  Assume without loss of generality that \Opt starts with two servers in
  each double-cycle gadget, each in a configuration equivalent to the starting
  position described in \cref{lem:doublecircle}. We construct an instance phase
  by phase.
  
  In each phase, exactly two distinct points that are a distance $2$ apart are
  requested from each of the $k/2$ double cycles. 
  Specifically, each phase is constructed by applying \cref{lem:doublecircle}
  once to each double cycle.  We make the following assumption, which is the
  same as \Cref{clm:stayentirephase} in the proof of \Cref{thm:kplusoneline}
  and can be shown using the same argument.

  \begin{claim}\label{clm:linearstayentirephase}
    Once \alg has served a request at some position during some phase, it will
    always have a server at this position during all further requests of this
    phase. 
  \end{claim} 

  Then \Opt can still
  move its servers to the $k$ requested points with a single move of cost $1$
  at the beginning of the phase and stay there for the remainder of the phase.
  But \alg now incurs a cost of at least $3k/2$ in every phase: If the servers
  of \alg stay in the double cycle where they start during a phase, we
  immediately obtain this since we generate a cost of at least $3$ for \alg in
  every double cycle by \cref{lem:doublecircle}. 
  
  We now show that \alg cannot gain anything by moving servers between double cycles. 
  We accomplish this by ordering the pairs of requests such that for the next
  pair we always choose a double cycle where no point has been requested yet
  during the phase and where \alg has at most two servers positioned on the
  double cycle extended by the two vertices of the closer half of each path
  connected to it.  Such a choice is always possible by the pigeonhole
  principle and because, 
  due to \cref{clm:linearstayentirephase}, two servers are positioned in each
  double cycle where points have been requested already during the phase.
  Then, \alg cannot improve on the cost of $3$ by using any other server for
  these two requests, as any such server is at least a distance of $3$ away
  from any point of $G_i$.
  
  By \cref{lem:doublecircle}, at the end of each phase the servers of \Opt
  are in a position that allows us to construct a new phase. We thus obtain the
  lower bound of $3k/2$ on the competitive ratio.
\end{proof}

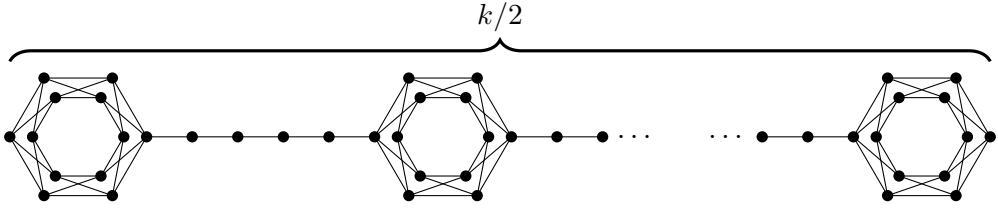
\begin{figure}
    \begin{center}
      \begin{tikzpicture}[scale=0.3, graphnode/.style={fill, circle, inner sep=1.5pt}]
      \def\hexstretch{1.5};
      \newcommand{\doublecircle}{\
      \coordinate  (a1) at (2,0);
      \coordinate  (a2) at (1,1.7320508075688772);
      \coordinate  (a3) at (-1,1.7320508075688774);
      \coordinate  (a4) at (-2,0);
      \coordinate  (a5) at (-1,-1.7320508075688767);
      \coordinate  (a6) at (1,-1.732050807568878);
      \coordinate  (b1) at (\hexstretch*2,\hexstretch*0);
      \coordinate  (b2) at (\hexstretch*1,\hexstretch*1.7320508075688772);
      \coordinate  (b3) at (-\hexstretch*1,\hexstretch*1.7320508075688774);
      \coordinate  (b4) at (-\hexstretch*2,\hexstretch*0);
      \coordinate  (b5) at (-\hexstretch*1,-\hexstretch*1.7320508075688767);
      \coordinate  (b6) at (\hexstretch*1,-\hexstretch*1.732050807568878);
      \node[graphnode] at (a1) {};
      \node[graphnode] at (a2) {}; 
      \node[graphnode] at (a3) {}; 
      \node[graphnode] at (a4) {}; 
      \node[graphnode] at (a5) {}; 
      \node[graphnode] at (a6) {};
      \node[graphnode] at (b1) {};
      \node[graphnode] at (b2) {}; 
      \node[graphnode] at (b3) {}; 
      \node[graphnode] at (b4) {}; 
      \node[graphnode] at (b5) {}; 
      \node[graphnode] at (b6) {};    
      \draw (a1)--(a2)--(a3)--(a4)--(a5)--(a6)--(a1);
      \draw (b1)--(b2)--(b3)--(b4)--(b5)--(b6)--(b1);
      \draw (a1)--(b2)--(a3)--(b4)--(a5)--(b6)--(a1);
      \draw (b1)--(a2)--(b3)--(a4)--(b5)--(a6)--(b1);}
      \begin{scope}
        \doublecircle
        \coordinate (c1) at (\hexstretch*2+2,0);
        \coordinate (c2) at (\hexstretch*2+4,0);
        \coordinate (c3) at (\hexstretch*2+6,0);
        \coordinate[above=4pt] (d0) at (\hexstretch*2,0);
        \coordinate[above=4pt] (dk) at (\hexstretch*2+8,0);
        \node[graphnode] at (c1) {};
        \draw (b1)--(c1);
        \node[graphnode] at (c2) {};
        \draw (c1)--(c2);
        \draw (c2)--(c3);
      \end{scope}
      \begin{scope}[xshift=\hexstretch*4cm+10cm]
        \doublecircle
        \node[graphnode] at (-\hexstretch*2-4,0) {};
        \node[graphnode] at (-\hexstretch*2-2,0) {};
        \node[graphnode] at (\hexstretch*2+2,0) {};
        \node[graphnode] at (\hexstretch*2+4,0) {};
        \draw (-\hexstretch*2-2,0)--(-\hexstretch*2-4,0);
        \draw (b4)--(-\hexstretch*2-2,0);
        \draw (b1)--(\hexstretch*2+2,0);
        \draw (\hexstretch*2+2,0)--(\hexstretch*2+4,0);
      \end{scope}
      \node[label=center:{\dots}] at (\hexstretch*5+17,0) {};
      \node[label=center:{\dots}] at (\hexstretch*5+21,0) {};
      \begin{scope}[xshift=\hexstretch*6cm+28cm]
      \doublecircle
      \node[graphnode] at (-\hexstretch*2-4,0) {};
      \node[graphnode] at (-\hexstretch*2-2,0) {};
      \draw (-\hexstretch*2-2,0)--(-\hexstretch*2-4,0);
      \draw (b4)--(-\hexstretch*2-2,0);
      \end{scope}
      \coordinate[above=1cm] (e0) at (-\hexstretch*2,0);
      \coordinate[above=1cm] (ek) at (\hexstretch*8+28,0);
      \draw[decoration={brace,amplitude=8}, decorate, very thick] (e0.north) -- (ek.north);
      \node[above=1cm,label={$k/2$}] at (\hexstretch*3+14,0) {};
      \end{tikzpicture}
    \end{center}
    \caption{The \emph{double cycle chain} of length $k/2$ used in \cref{thm:threehalfk}.}
    \label{fig:kcircles}
\end{figure}

\paragraph*{\boldmath Lower Bound of $5k/4$ on the Line for the Strict Competitive Ratio} 
We now provide the mentioned bound on the strict competitive ratio for
deterministic algorithms on the line.

\begin{theorem}\label{thm:strictlowerline}
	Let any positive even integer $k$ be given. 
  No deterministic algorithm for the time-optimal $k$-sever problem on the real
  line has a strict competitive ratio better than $5k/4$.
\end{theorem}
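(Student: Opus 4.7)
The plan is to exhibit a specific instance on the real line whose ratio of \alg cost to \Opt cost is at least $5k/4$. Unlike the setting of \cref{thm:kplusoneline}, where phases can be iterated while absorbing any initial overhead into the additive constant, the strict competitive ratio permits no such slack, so I need a single explicit instance whose ratio pointwise attains the bound. The key leverage is that in the time model, \Opt, with hindsight, can exploit synchronous server movements to serve many requests with cost only $1$, while \alg's online commitments force it to pay much more.

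The construction would place the $k$ servers in $k/2$ pair-gadgets on the line, each gadget consisting of two servers at adjacent even-integer positions, with consecutive gadgets separated by a wide gap so that any cross-gadget server movement would dominate the time-model per-step cost. Within each gadget, a local adversary strategy analogous to the $k=2$ case of \cref{thm:kplusoneline} forces \alg into committal moves: the first request is the odd integer between the two paired servers (forcing a unit committal move), and the adversary then requests an odd integer outside the gadget on the side opposite \alg's remaining server (forcing a follow-up move of distance at least $2$). Meanwhile, \Opt covers both requests by splitting the pair outward by one unit each in a single synchronous step of cost $1$. Interleaving the $k/2$ per-gadget sub-instances into a single global instance yields an instance in which \Opt serves all $k$ requests with one simultaneous synchronous movement (cost $1$), while \alg accumulates at least $5/2$ cost per gadget---summing to $5k/4$ across the $k/2$ gadgets.

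The main technical obstacle is ruling out cross-gadget piggybacking by \alg: in the time model, \alg can move servers in multiple gadgets within a single time step at the cost of only the maximum movement, potentially mimicking \Opt's coordinated response and reducing its effective cost. I would counter this with two ingredients. First, the per-gadget adversary strategy must be adaptive: whichever direction \alg preparatorily moves its second server within a gadget, the adversary's follow-up request goes in the opposite direction, preserving the forced cost-$2$ move. Second, the wide inter-gadget spacing ensures that any long-range piggybacking would itself inflate the per-step maximum, so no cross-gadget preparation can be productive without cost overshoot. Combining these two ingredients secures the per-gadget contribution of $5/2$ even in the face of piggybacking, and thus the claimed $5k/4$ strict lower bound.
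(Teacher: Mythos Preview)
Your gadget structure, the global accounting (\Opt serves everything with one synchronous distance-$1$ move, so its total cost is $1$), and the goal of $5/2$ per pair all match the paper. The gap is that your per-gadget adversary is not adaptive enough: always opening with the midpoint lets \alg defeat the bound by within-gadget piggybacking, which your second ingredient does not block.

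Concretely, consider the following \alg. On the very first request (the hub of pair~$1$), \alg moves one pair-$1$ server to that hub and, in the same time step, shifts one server of every other pair onto that pair's midpoint. All of these are distance-$1$ moves, so the step cost is still~$1$; wide inter-gadget spacing is irrelevant because no server crosses between gadgets. From then on, for each pair $i\ge 2$, your first request (its hub) is already covered and costs~$0$; your adaptive second request still forces cost~$2$, but the pair total is only~$2$. Summing, \alg pays $3 + 2\,(k/2-1)=k+1$, which falls below $5k/4$ for every even $k\ge 6$. So the claimed ``at least $5/2$ per gadget'' is false under your strategy.

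The paper repairs this by also adapting the \emph{first} request to the pair's current server positions. In shifted coordinates with the pair's servers at $x,y$ (where $|x|\le|y|$ and $y\ge 0$) and candidate requests $\{-2,0,2\}$, it case-splits on $|x|$. If $|x|\ge 1/2$, request the hub $0$ first (cost $\ge|x|\ge 1/2$) and then the fringe opposite the non-serving server (cost $\ge 2$), for $\ge 5/2$. If $|x|<1/2$, request the fringe $-2$ first (cost $\ge 3/2$, since the near server is within $1/2$ of $0$ and the other is at $y\ge 0$) and then whichever of $0$ or $2$ the remaining server is at distance $\ge 1$ from, for $\ge 5/2$ again. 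This two-branch opening is exactly what neutralises the pre-positioning that breaks your version.
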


\begin{proof}
  We group the servers $s_1,\dots,s_k$ into $k/2$ pairs; for each $i\in[k/2]$, we call
  $\{s_{2i-1},s_{2i}\}$ pair $i$.  For each pair, we position its two servers a
  distance of $2$ apart and 
  separate the pairs by large distances $d\ge 2k+4$ between all neighboring pairs. 
  Specifically, we can choose the initial positions of servers $s_{2i-1}$ and
  $s_{2i}$ to be $id-1$ and $id+1$ for each $i\in[k/2]$. 
  
  We now construct a hard instance consisting of $k$ requests, which
  contains exactly two out of the three points $id-2$, $id$, and
  $id+2$, which we call the \emph{requests of pair} $i$. 
  We call $id$ the \emph{hub}, and we call $id-2$ and $id+2$ the \emph{fringe points} of pair $i$. 
  An optimal offline solution can easily serve all requests of any such instance at cost exactly $1$. 
  
  If any server of pair $i$ serves a request of another pair $j\neq i$, then it
  travels a distance of at least $d-3>2k$ in total, which
  implies a strict competitive ratio worse than $2k$.  We can thus assume
  without loss of generality that only servers of pair $i$ serve requests of pair $i$. 
  
  Any instance first presents two requests of pair 1, then two of pair 2, and
  so on.  Consider the two requests for any given pair that are about to be
  presented.  Let the current positions of the two servers of pair $i$ be
  $id+x$ and $id+y$.  (For $i=1$, we have $x=y=0$, but we do not need to
  consider this case separately.)
  
  Without loss of generality, let $|x|\le|y|$ and $y\ge 0$. 
  We now describe how the requests of pair $i$ presented by the instance are chosen. 
  For simplicity of this description, assume that $i=0$ by an appropriate shift
  of the coordinate system. 
 
  \begin{itemize}[left=0em]
    \item \textit{Case 1: $|x|\geq 1/2$.}
      The hub $0$ is requested first. The algorithm incurs a cost
      of at least $|x|$ serving this request.  Depending on the position $z$ of the
      server that did not serve the 
      last request, we request $-2$ if $z$ is positive and $2$ otherwise. 
      This will add a cost of at least $2$, leaving us with a total cost of 
      at least $2+|x|\geq 5/2$ in this case. 
  
    \item \textit{Case 2:  $|x|<1/2$.}
      $-2$ is requested first, leading to a cost of at least
      $3/2$. The server that did not serve this request has distance at least $1$ to
      at least one of the points $0$ or $2$. Such a point is requested, leading to a
      cost of at least $5/2$ for serving the requests of the considered pair.
  \end{itemize}
  
  The cost per server pair is thus at least $5/2$, yielding 
  a lower bound of $k/2\cdot 5/2=5k/4$ on the strict competitive ratio of any
  deterministic algorithm. 
\end{proof} 

\paragraph*{\boldmath Lower Bound of $2k-1$ on General Graphs.}
The most celebrated result regarding the $k$-server problem is the proof of
Koutsoupias and Papadimitriou \cite{KP1995a} showing the \algWFA to be 
$(2k-1)$-competitive in the distance model.  Intriguingly, $2k-1$ is also our
best \emph{lower} bound for the time model, stated in the following theorem. 

\begin{theorem}\label{thm:lowertwokminusonefinite}
  For any $k\geq 1$, there is a finite metric space on which no online algorithm
  for the time-optimal $k$-server problem has a competitive ratio better than $2k-1$. 
\end{theorem}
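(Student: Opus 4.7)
The plan is to construct a finite graph $G_k$ of diameter $3$ in which the adversary can, phase by phase, force any deterministic online algorithm to pay cost at least $2k-1$ while exhibiting an offline strategy whose cost per phase is exactly $1$. The graph must be large, in fact super-exponential in $k$, in order to simultaneously contain enough \emph{escape points} that the adversary always finds requests far from the online servers and enough internal structure that a single synchronous offline move can prepare all $k$ servers for every request of a phase.

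I would follow the phase-based pattern already established in \cref{thm:kplusoneline} and \cref{thm:threehalfk}: maintain an invariant on the configuration of some fixed optimal algorithm \Opt at the start of each phase, and, using the argument of \cref{clm:stayentirephase}, assume without loss of generality that once \alg covers a requested point during a phase it keeps covering it for the remainder of the phase. The design then reduces to specifying one phase satisfying three properties: \Opt relocates to the phase's $k$ target points via a single synchronous step of cost at most $1$; \alg incurs a total cost of at least $2k-1$ over the requests of the phase, plausibly arranged as one request of cost $1$ followed by $k-1$ requests of cost $2$ each, or alternatively as $2k-1$ requests of cost $1$; and the end configuration of \Opt satisfies the phase-initial invariant again so that phases may be iterated indefinitely, yielding the ratio $(2k-1)\cdot N/N$ in the limit. \Opt's low cost would be witnessed by a bipartite matching of weight $1$ between its phase-initial configuration and the $k$ preplanned targets, which is guaranteed by the construction of $G_k$.

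The main obstacle is designing $G_k$ so that an adversarial request satisfying both ``sufficiently far from every current \alg server'' and ``lying in \Opt's preplanned $k$-subset'' continues to exist at every step within a phase, even after \alg has already been forced to cover some $j<k$ previous requests. Because the diameter is only $3$, forcing a move of cost $2$ means the request must lie outside every open unit ball around an \alg server, and each server \alg places during the phase further shrinks the set of admissible continuations. The graph must therefore exhibit enough branching at every depth up to $k$, and this is precisely where the super-exponential size enters. I expect the construction to resemble a recursive tree or product of gadgets whose branching factor at each level is chosen large enough to defeat the shrinking admissible set, with one level per server, and I would prove the existence of the required adversarial request at every step by a pigeonhole-type counting argument across levels, rather than by the explicit case analysis used in the earlier, smaller lower bound proofs.
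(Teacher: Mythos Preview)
Your high-level plan matches the paper's approach closely: phases where \Opt pays $1$ via a single synchronous move while \alg pays $1$ for one request and $2$ for each of $k-1$ further requests, with a pigeonhole argument guaranteeing the adversary always finds such a distance-$2$ request. The gap is that you have not said how to make $G_k$ \emph{finite} while supporting \emph{arbitrarily many} phases. A ``recursive tree\ldots with one level per server'' does not address this: a tree of fixed depth cannot be replayed indefinitely, and letting it grow with the number of phases forfeits finiteness. You correctly identify the tension, but the resolution is the heart of the proof and is absent from the proposal.

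The paper's construction is a \emph{tripartite} graph whose three layers are used cyclically. Each layer is partitioned into $B=k(k-1)^{k-1}$ blocks, each consisting of one hub and $k$ fringe groups of $k-1$ points each. A phase requests the hub and $k-1$ fringe points (from distinct groups) of a single block. The crucial mechanism is a bijection between the $B$ possible such choices in any block and the $B$ blocks of the next layer: edges are placed so that from \Opt's phase-end configuration in a block of layer $L_\ell$, each of its $k$ servers is adjacent to the hub and to exactly one distinct fringe group of the corresponding block in $L_{\ell+1}$, recreating the ``midpoints of a $k$-ray star'' picture for the next phase. Every vertex thus plays a dual role---request target in one phase, midpoint for the next---and this reuse is precisely what permits a finite graph of diameter $3$ to host unboundedly many phases. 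The pigeonhole counting you anticipate is then carried out locally within a single block using the tightly controlled adjacencies between consecutive layers, not across levels of a tree.
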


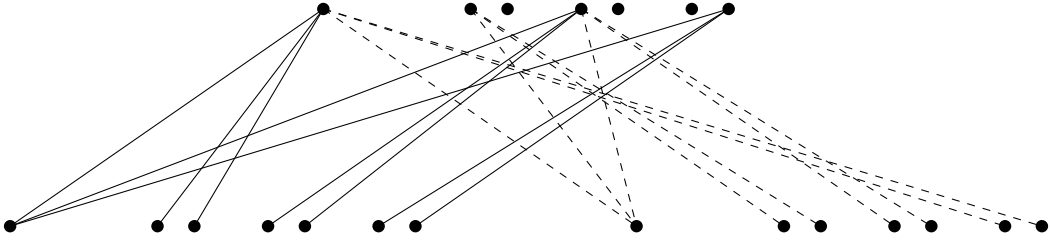
\begin{figure}[t]
    \begin{center}
      \begin{tikzpicture}[scale=8,yscale=.99,graphnode/.style={fill, circle, inner sep=1.65pt},firstchoice/.style={},secondchoice/.style={dashed}]
\def\endspace{0}
\def\blockspace{3}
\def\centerspace{2}
\def\optspace{0.5}
\node[graphnode] at (0.5151515151515151,0) {};
\node[graphnode] at (0.7575757575757576,0) {};
\node[graphnode] at (0.8181818181818181,0) {};
\node[graphnode] at (0.9393939393939394,0) {};
\node[graphnode] at (1.0,0) {};
\node[graphnode] at (1.121212121212121,0) {};
\node[graphnode] at (1.1818181818181817,0) {};
\node[graphnode] at (0.0,-0.36363636363636365) {};
\node[graphnode] at (0.24242424242424243,-0.36363636363636365) {};
\node[graphnode] at (0.30303030303030304,-0.36363636363636365) {};
\node[graphnode] at (0.42424242424242425,-0.36363636363636365) {};
\node[graphnode] at (0.48484848484848486,-0.36363636363636365) {};
\node[graphnode] at (0.6060606060606061,-0.36363636363636365) {};
\node[graphnode] at (0.6666666666666666,-0.36363636363636365) {};
\node[graphnode] at (1.0303030303030303,-0.36363636363636365) {};
\node[graphnode] at (1.2727272727272727,-0.36363636363636365) {};
\node[graphnode] at (1.3333333333333333,-0.36363636363636365) {};
\node[graphnode] at (1.4545454545454546,-0.36363636363636365) {};
\node[graphnode] at (1.5151515151515151,-0.36363636363636365) {};
\node[graphnode] at (1.6363636363636365,-0.36363636363636365) {};
\node[graphnode] at (1.696969696969697,-0.36363636363636365) {};
\draw[firstchoice] (0.5151515151515151,0)--(0.0,-0.36363636363636365);
\draw[secondchoice] (0.5151515151515151,0)--(1.0303030303030303,-0.36363636363636365);
\draw[firstchoice] (0.5151515151515151,0)--(0.24242424242424243,-0.36363636363636365);
\draw[firstchoice] (0.5151515151515151,0)--(0.30303030303030304,-0.36363636363636365);
\draw[firstchoice] (0.9393939393939394,0)--(0.42424242424242425,-0.36363636363636365);
\draw[firstchoice] (0.9393939393939394,0)--(0.48484848484848486,-0.36363636363636365);
\draw[firstchoice] (0.9393939393939394,0)--(0.0,-0.36363636363636365);
\draw[firstchoice] (1.1818181818181817,0)--(0.6060606060606061,-0.36363636363636365);
\draw[firstchoice] (1.1818181818181817,0)--(0.6666666666666666,-0.36363636363636365);
\draw[firstchoice] (1.1818181818181817,0)--(0.0,-0.36363636363636365);
\draw[secondchoice] (0.5151515151515151,0)--(1.6363636363636365,-0.36363636363636365);
\draw[secondchoice] (0.5151515151515151,0)--(1.696969696969697,-0.36363636363636365);
\draw[secondchoice] (0.7575757575757576,0)--(1.2727272727272727,-0.36363636363636365);
\draw[secondchoice] (0.7575757575757576,0)--(1.3333333333333333,-0.36363636363636365);
\draw[secondchoice] (0.7575757575757576,0)--(1.0303030303030303,-0.36363636363636365);
\draw[secondchoice] (0.9393939393939394,0)--(1.4545454545454546,-0.36363636363636365);
\draw[secondchoice] (0.9393939393939394,0)--(1.5151515151515151,-0.36363636363636365);
\draw[secondchoice] (0.9393939393939394,0)--(1.0303030303030303,-0.36363636363636365);
\end{tikzpicture} \caption{A small part of the construction used to
      prove \cref{thm:lowertwokminusonefinite} for $k=3$. One of the
      $k(k-1)^{k-1}=12$ blocks of one layer at the top (with the hub point on
      the left and the fringe points in $k$ groups of $k-1$ to the right), two
      blocks of the next layer at the bottom left and bottom right, and the
      edges induced by the corresponding choices of $k-1$ groups and one fringe
      point per chosen group in the block above; with dashed edges for the
      choice represented by the bottom right block.}
      \label{fig:lowerboundexample3}
    \end{center}
\end{figure}

\begin{proof}
  We first give a general idea of the proof. The metric space is a finite,
  unweighted graph. The vertex set is partitioned into three uniform layers, each
  of which contains many blocks of vertices. Each block in turn contains a
  \emph{hub} vertex and $k$ \emph{fringe groups}, each consisting of $k-1$
  \emph{fringe vertices}.  The instance is constructed in phases that go through
  the three layers cyclically. In each phase, $k$ points in one block of the
  current layer are requested. 
  The edges are chosen such that any phase can be interpreted as simulating an
  instance of the following problem: Consider a star with $k$ rays of length $2$. 
  The center corresponds to a \emph{hub} and the endpoints of the star to $k$
  fringe points. Let one server be positioned at each midpoint of a ray.  Now the
  instance requests the hub. The algorithm must move one server there at
  cost $1$. With one of the $k$ servers at the hub, one ray does not contain a
  server. Its fringe point is requested next, causing a cost of at least $2$. We
  do not allow servers that have served a request to move again during the phase
  by requesting previously requested points again, if necessary. Thus this
  process of requesting a fringe point of an empty ray can be repeated exactly
  $k-1$ times, causing a cost of $2k-1$ in the phase.

  The main challenge lies in simulating this instance (including the inability of
  servers to move after having served a request in any given phase) while making
  the process repeatable, despite having only a finite number of layers.  This
  requires the vertices in our graph to function simultaneously as either fringe
  point or hub in the layer where the requests appear and as the midpoint of a
  ray when chosen as a phase-initial server position.
  
  \medskip
  
  \noindent\textit{Description of the metric space.} The metric space is a
  tripartite graph, i.e., its vertices can be partitioned into three disjoint
  \emph{layers} $L_1\cup L_2\cup L_3=V$ such that the edges can be partitioned as
  $E_1\cup E_2\cup E_3=E$ with $E_1\subseteq L_1\times L_2$, $E_2\subseteq
  L_2\times L_3$, and $E_3\subseteq L_3\times L_1$.
  
  For all instances in the distribution we describe later, the three layers will
  be used cyclically by any optimal solution: its servers will all move
  synchronously first from $L_1$ to $L_2$, then to $L_3$, then back to $L_1$, and
  so on.  Moreover, the three subgraphs $(L_1\cup L_2,E_1)$, $(L_2\cup L_3,E_2)$,
  and $(L_3\cup L_1,E_3)$ are all isomorphic. 
  In particular, the layers all have the same size. Each layer is partitioned
  into $\Block\coloneqq k(k-1)^{k-1}$ uniformly sized \emph{blocks}. Each block
  in turn consists of one special vertex called its \emph{hub} and $k(k-1)$
  vertices that we call its \emph{fringe points},
  grouped into $k$ groups of size $k-1$. 
  We identify the blocks with the integers $[B]$ and similarly denote the groups
  of each block by $[k]$ and the points of a group by $[k-1]$. We denote the
  hub of block $\block$ in layer $L_\ell$ by $\cent(\ell,\block)$ and the
  fringe point $\vertex$ in
  group $\group$ of the same block by $\peri(\ell,\block,\group,\vertex)$. Thus
  the vertex set is given by
  \[ V\coloneqq \bigcup_{\ell=1}^{3}\bigcup_{\block=1}^{\Block}\left(\{\cent(\ell,\block)\}\cup \bigcup_{\group=1}^{k}\bigcup_{\vertex=1}^{k-1}\{\peri(\ell,\block,\group,\vertex)\}\right)\;. \]
  
  Note that the order of the graph is $\lvert V\rvert=3k(k-1)^{k-1}(1+k(k-1))\leq
  3k^3(k-1)^{k-1}\in \mathcal{O}(k^{k+2})$.
    
  For notational convenience we define $L_4\coloneqq L_1$ and $L_0\coloneqq L_3$.
  To describe the edges of $G$, we first highlight some uniform aspects of the construction. 
  For each $\ell\in[3]$, the edges $E_\ell$ between $L_\ell$ and $L_{\ell+1}$ are constructed identically. 
  
  Moreover, the edges from any layer to the next are \emph{block-uniform} in the following sense:  
  for any $\ell\in[3]$, $\group\in[k]$, and $\vertex\in[\Vertex]$, the
  neighborhood in $L_{\ell+1}$ is the same for every fringe point
  $\peri(\ell,\block,\group,\vertex)$, and the same for any hub $\cent(\ell,b)$
  independent of $\block\in[\Block]$.
  
  Now consider any block $\block$ of $L_\ell$. Recall that each block consists of
  a hub point and $k$ fringe groups, each with $k-1$ fringe points.  To each
  possible choice of $k-1$ fringe points from $k-1$ distinct groups, we assign one
  block $\choice\in[\Block]$ from $L_{\ell+1}$ by an arbitrary but fixed
  bijection. Such a bijection exists, because there are $k$ possibilities to
  choose $k-1$ fringe groups and $(k-1)^{k-1}$ ways to choose one fringe point
  from each of these groups, which means that the total number of possible
  choices is exactly $\Block$, the number of blocks in $L_{\ell+1}$.
  
  Fix such a choice and the corresponding block $\choice$ in $L_{\ell+1}$.  Each
  of the $k-1$ chosen fringe points, as well as the hub of $b$, is assigned
  injectively one of the $k$ groups of fringe points of $\choice$ and then made
  adjacent to all $k-1$ points of the assigned group.
  For concreteness, let the chosen fringe point of group $\group$ in $\block$ be
  assigned group $\group$ in block $\choice$, while the hub is assigned the
  remaining group of block $\choice$ that is not represented in the $k-1$ chosen
  ones.  Finally, all $k-1$ chosen fringe points and the hub of $\block$ are made
  adjacent to the hub of $\choice$ in $L_{\ell+1}$.
    
  More formally, identify the possible choices of $k-1$ fringe points in a block
  of $L_{\ell}$ with $[\Block]$.  For each choice $\choice\in[\Block]$, denote by
  $\group_\choice\in[k]$ the unique index of the group without a chosen point. 
  For any $\group\in[k]$ with $\group\neq \group_\choice$, denote by
  $\vertex_\choice^\group\in [k-1]$ the index of the unique point chosen from
  group $\group$. This means that the fringe points in choice $\choice$ are
  exactly $\peri(\ell,\block,\group,\vertex_\choice^\group)$ for $\group\in[k]$
  and $\group\neq \group_\choice$. Then we have the following edges between block
  $\block$, which is in $L_{\ell}$, and $L_{\ell+1}$:

  \begin{align*}
    E(\ell,\block)&\coloneqq \bigcup_{\choice=1}^{\Block} \{\{\cent(\ell,\block),\cent(\ell+1,\choice)\}\}\\
    &\,\cup \bigcup_{\choice=1}^{\Block}\, \bigcup_{\vertex=1}^{k-1} \{\{\cent(\ell,\block),\peri(\ell+1,\choice,\group_\choice,\vertex)\}\}\\
    &\,\cup \bigcup_{\choice=1}^{\Block}\bigcup_{\substack{\group\in[k]\\\group\neq \group_\choice}} \{\{\peri(\ell,\block,\group,\vertex_\choice^\group),\cent(\ell+1,\choice)\}\}\\
    &\,\cup \bigcup_{\choice=1}^{\Block}\bigcup_{\substack{\group\in[k]\\\group\neq \group_\choice}} \bigcup_{\vertex=1}^{k-1} \{\{\peri(\ell,\block,\group,\vertex_\choice^\group),\peri(\ell+1,\choice,\group,\vertex)\}\}\;.
  \end{align*}
  
  Note again that the vertices in $L_{\ell+1}$ do not depend on $\block$.
  The full set containing $3B^2k^2=3k^4(k-1)^{2k-2}$ edges is then 
  \[ E \coloneqq\bigcup_{\ell=1}^{3}\bigcup_{\block=1}^{\Block} E(\ell,\block)\;. \]
  
  \begin{figure}[t]
    \captionsetup[subfigure]{justification=centering}
    \begin{subfigure}{.495\textwidth}
    \centering
        \scalebox{0.5}{
\begin{tikzpicture}[scale=1,graphnode/.style={fill, circle, inner sep=0.75pt}]
\def\endspace{5}
\def\blockspace{3}
\def\centerspace{2}
\def\optspace{0.5}

\node[graphnode] at (3.1578947368421053,0) {};
\node[graphnode] at (4.421052631578947,0) {};
\node[graphnode] at (5.052631578947368,0) {};
\node[graphnode] at (6.947368421052632,0) {};
\node[graphnode] at (8.210526315789474,0) {};
\node[graphnode] at (8.842105263157894,0) {};
\node[graphnode] at (4.421052631578948,-7.657487780830827) {};
\node[graphnode] at (3.789473684210526,-6.56356095499785) {};
\node[graphnode] at (3.473684210526316,-6.016597542081363) {};
\node[graphnode] at (2.526315789473684,-4.3757073033319) {};
\node[graphnode] at (1.894736842105263,-3.281780477498925) {};
\node[graphnode] at (1.5789473684210529,-2.734817064582438) {};
\node[graphnode] at (10.421052631578949,-2.7348170645824377) {};
\node[graphnode] at (9.789473684210526,-3.8287438904154127) {};
\node[graphnode] at (9.473684210526315,-4.3757073033319) {};
\node[graphnode] at (8.526315789473685,-6.016597542081363) {};
\node[graphnode] at (7.894736842105264,-7.110524367914338) {};
\node[graphnode] at (7.578947368421052,-7.6574877808308255) {};
\draw[dashed] (0,0)--(12,0);
\draw[dashed] (0,0)--(6.0,-10.392304845413264);
\draw[dashed] (12,0)--(6.0,-10.392304845413264);
\draw[blue, opacity=1] (3.1578947368421053,0)--(4.421052631578948,-7.657487780830827);
\draw[blue, opacity=1] (3.1578947368421053,0)--(2.526315789473684,-4.3757073033319);
\draw[blue, opacity=1] (3.1578947368421053,0)--(3.473684210526316,-6.016597542081363);
\draw[blue, opacity=1] (4.421052631578947,0)--(3.789473684210526,-6.56356095499785);
\draw[blue, opacity=1] (4.421052631578947,0)--(4.421052631578948,-7.657487780830827);
\draw[blue, opacity=1] (3.1578947368421053,0)--(1.894736842105263,-3.281780477498925);
\draw[blue, opacity=1] (5.052631578947368,0)--(1.5789473684210529,-2.734817064582438);
\draw[blue, opacity=1] (5.052631578947368,0)--(2.526315789473684,-4.3757073033319);
\draw[blue, opacity=1] (6.947368421052632,0)--(4.421052631578948,-7.657487780830827);
\draw[blue, opacity=1] (6.947368421052632,0)--(2.526315789473684,-4.3757073033319);
\draw[blue, opacity=1] (6.947368421052632,0)--(3.473684210526316,-6.016597542081363);
\draw[blue, opacity=1] (8.210526315789474,0)--(3.789473684210526,-6.56356095499785);
\draw[blue, opacity=1] (8.210526315789474,0)--(4.421052631578948,-7.657487780830827);
\draw[blue, opacity=1] (6.947368421052632,0)--(1.894736842105263,-3.281780477498925);
\draw[blue, opacity=1] (8.842105263157894,0)--(1.5789473684210529,-2.734817064582438);
\draw[blue, opacity=1] (8.842105263157894,0)--(2.526315789473684,-4.3757073033319);
\draw[red, opacity=1] (4.421052631578948,-7.657487780830827)--(10.421052631578949,-2.7348170645824377);
\draw[red, opacity=1] (4.421052631578948,-7.657487780830827)--(8.526315789473685,-6.016597542081363);
\draw[red, opacity=1] (4.421052631578948,-7.657487780830827)--(9.473684210526315,-4.3757073033319);
\draw[red, opacity=1] (3.789473684210526,-6.56356095499785)--(9.789473684210526,-3.8287438904154127);
\draw[red, opacity=1] (3.789473684210526,-6.56356095499785)--(10.421052631578949,-2.7348170645824377);
\draw[red, opacity=1] (4.421052631578948,-7.657487780830827)--(7.894736842105264,-7.110524367914338);
\draw[red, opacity=1] (3.473684210526316,-6.016597542081363)--(7.578947368421052,-7.6574877808308255);
\draw[red, opacity=1] (3.473684210526316,-6.016597542081363)--(8.526315789473685,-6.016597542081363);
\draw[red, opacity=1] (2.526315789473684,-4.3757073033319)--(10.421052631578949,-2.7348170645824377);
\draw[red, opacity=1] (2.526315789473684,-4.3757073033319)--(8.526315789473685,-6.016597542081363);
\draw[red, opacity=1] (2.526315789473684,-4.3757073033319)--(9.473684210526315,-4.3757073033319);
\draw[red, opacity=1] (1.894736842105263,-3.281780477498925)--(9.789473684210526,-3.8287438904154127);
\draw[red, opacity=1] (1.894736842105263,-3.281780477498925)--(10.421052631578949,-2.7348170645824377);
\draw[red, opacity=1] (2.526315789473684,-4.3757073033319)--(7.894736842105264,-7.110524367914338);
\draw[red, opacity=1] (1.5789473684210529,-2.734817064582438)--(7.578947368421052,-7.6574877808308255);
\draw[red, opacity=1] (1.5789473684210529,-2.734817064582438)--(8.526315789473685,-6.016597542081363);
\draw[green, opacity=1] (10.421052631578949,-2.7348170645824377)--(3.1578947368421053,0);
\draw[green, opacity=1] (10.421052631578949,-2.7348170645824377)--(6.947368421052632,0);
\draw[green, opacity=1] (10.421052631578949,-2.7348170645824377)--(5.052631578947368,0);
\draw[green, opacity=1] (9.789473684210526,-3.8287438904154127)--(4.421052631578947,0);
\draw[green, opacity=1] (9.789473684210526,-3.8287438904154127)--(3.1578947368421053,0);
\draw[green, opacity=1] (10.421052631578949,-2.7348170645824377)--(8.210526315789474,0);
\draw[green, opacity=1] (9.473684210526315,-4.3757073033319)--(8.842105263157894,0);
\draw[green, opacity=1] (9.473684210526315,-4.3757073033319)--(6.947368421052632,0);
\draw[green, opacity=1] (8.526315789473685,-6.016597542081363)--(3.1578947368421053,0);
\draw[green, opacity=1] (8.526315789473685,-6.016597542081363)--(6.947368421052632,0);
\draw[green, opacity=1] (8.526315789473685,-6.016597542081363)--(5.052631578947368,0);
\draw[green, opacity=1] (7.894736842105264,-7.110524367914338)--(4.421052631578947,0);
\draw[green, opacity=1] (7.894736842105264,-7.110524367914338)--(3.1578947368421053,0);
\draw[green, opacity=1] (8.526315789473685,-6.016597542081363)--(8.210526315789474,0);
\draw[green, opacity=1] (7.578947368421052,-7.6574877808308255)--(8.842105263157894,0);
\draw[green, opacity=1] (7.578947368421052,-7.6574877808308255)--(6.947368421052632,0);
\end{tikzpicture}}
        \caption{Graph on $18$ vertices with $48$ edges for $k=2$.}
        \label{fig:lowerboundgraph1}
     \end{subfigure}
     \begin{subfigure}{.495\textwidth}
     \centering
     	  \include{triangle2.tex}
        \caption{Graph on $252$ vertices and $3888$ edges for $k=3$.}
        \label{fig:lowerboundgraph2}
     \end{subfigure}
     \caption{The graph describing the metric space used to prove
     \cref{thm:lowertwokminusonefinite} for $k=2$ and $k=3$. The layers $L_1$,
     $L_2$, and $L_3$ are arranged counterclockwise with $L_1$ at the top. Edges
     from $L_1$ to $L_2$ are shown in blue, those from $L_2$ to $L_3$ in red, and
     those from $L_3$ to $L_1$ in green.}
     \label{fig:lowerboundgraph}
  \end{figure}
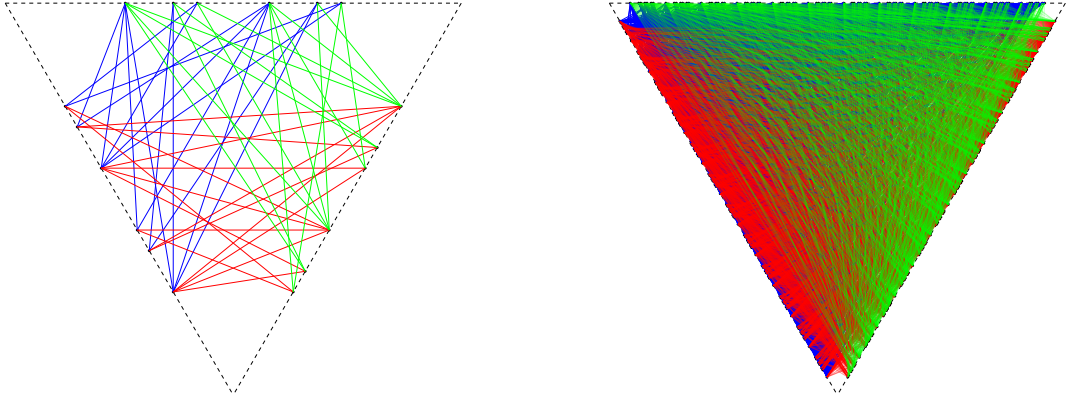
  
  A visual representation of the graph $G=(V,E)$ for $k=2$ is shown in
  \cref{fig:lowerboundgraph1}. For $k=3$, an example of one block in $L_1$ and
  two blocks in $L_2$ is given in \cref{fig:lowerboundexample3}, while the full
  graph is shown in \cref{fig:lowerboundgraph2}.
    
  The following claim summarizes some important properties of $G$. It follows
  directly from the construction of the edges. 

  \begin{claim}\label{claim:graphproperties}
    Fix any block $\block$ in $L_\ell$. Then we have the following properties:
    \begin{enumerate}[label=(\roman*)]
      \item\label{claim:blocks1} Any vertex in $L_{\ell-1}$ is adjacent to vertices of 
      at most one fringe group of block~$\block$.
      \item\label{claim:blocks2} Any fringe point in $L_{\ell+1}$ is adjacent to at most one fringe point of block~$\block$. 
      \item\label{claim:blocks3} Any hub point in $L_{\ell+1}$ is adjacent to 
      exactly one vertex for each of exactly $k-1$ of the $k$ fringe groups of block~$\block$.
    \end{enumerate}
  \end{claim}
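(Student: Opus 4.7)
The proof will unfold each of the three adjacency properties directly from the edge definition $E(\ell,\block)$ given in the construction, using the fact that every vertex of $G$ belongs to exactly one block and that the bijection between choices in $L_\ell$ and blocks in $L_{\ell+1}$ is, well, a bijection. The overall approach is thus a careful but routine case analysis driven by the four-line description of $E(\ell,\block)$: hub-to-hub, hub-to-missing-group fringes, chosen-fringe-to-hub, and chosen-fringe-to-same-group fringes. The only nontrivial observation, used implicitly throughout, is that for any pair of blocks $(\block',\block)$ with $\block'$ in $L_{\ell-1}$ and $\block$ in $L_\ell$ there is at most one choice $\choice$ made at $\block'$ whose assigned target in $L_\ell$ is $\block$, so the edges between vertices of $\block'$ and vertices of $\block$ are generated by at most one application of the four-line rule.

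For property \ref{claim:blocks1} I will fix a vertex $v\in L_{\ell-1}$, let $\block'$ be its unique block, and let $\choice\in[\Block]$ be the unique choice at $\block'$ whose assigned target is $\block$ (if none exists, $v$ has no neighbor in $\block$ at all). Two subcases remain: if $v=\cent(\ell-1,\block')$, then the second line of $E(\ell-1,\block')$ shows that its fringe neighbors in $\block$ all lie in group $\group_\choice$; if instead $v=\peri(\ell-1,\block',\group,\vertex)$ for some $\group,\vertex$, then the fourth line gives fringe neighbors of $v$ in $\block$ only when $\vertex=\vertex_\choice^\group$, and then all such neighbors lie in the single group $\group$. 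In both subcases at most one fringe group of $\block$ is hit, as required.

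For property \ref{claim:blocks2} I fix a fringe point $q=\peri(\ell+1,\choice,\group,\vertex)\in L_{\ell+1}$ and consider which vertices of $\block$ it can be adjacent to via $E(\ell,\block)$. If $\group=\group_\choice$ then only the second line applies and the unique $\block$-neighbor of $q$ is $\cent(\ell,\block)$, giving zero fringe neighbors; otherwise only the fourth line applies, and it contributes the unique fringe vertex $\peri(\ell,\block,\group,\vertex_\choice^\group)$. Either way at most one fringe point of $\block$ is adjacent to $q$. For property \ref{claim:blocks3} I fix a hub $q=\cent(\ell+1,\choice)$ and read the first and third lines of $E(\ell,\block)$: the third line connects $q$ to the $k-1$ fringes $\peri(\ell,\block,\group,\vertex_\choice^\group)$ for $\group\neq\group_\choice$, which is exactly one vertex from each of $k-1$ distinct fringe groups, and no further fringe edges to $q$ arise.

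The main obstacle is purely bookkeeping: one must check that a different block $\tilde{\block}'\in L_{\ell-1}$ with $v\notin\tilde{\block}'$ cannot contribute additional edges into $\block$ incident to $v$, and symmetrically for $q$ in properties \ref{claim:blocks2} and \ref{claim:blocks3}. This is immediate because the edge set $E(\ell-1,\tilde{\block}')$ only contains pairs whose first coordinate is a vertex of $\tilde{\block}'$, so $v\notin\tilde{\block}'$ means no such edge is incident to $v$; hence the case analysis above is exhaustive.
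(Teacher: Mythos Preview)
Your proof is correct and follows exactly the approach the paper intends: the paper simply states that the claim ``follows directly from the construction of the edges'' without further argument, and your case analysis on the four edge types in $E(\ell,\block)$ is precisely that direct verification spelled out. One cosmetic remark: since the paper identifies choices with target blocks via the bijection, your ``unique choice $\choice$ at $\block'$ whose assigned target is $\block$'' is simply $\choice=\block$, so the parenthetical ``if none exists'' is vacuous---but this does not affect correctness.
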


  \noindent\textit{Description of the instances.} 
  We now show that no online algorithm can be better than $(2k-1)$-competitive on
  the metric space described by $G$. 
    
  Let \alg be any deterministic online algorithm. We will compare its cost against that of an offline algorithm \algOff on the   instance we describe now. The instance consists of distinct phases. In each
  phase, exactly $k$ distinct points, all from the same layer, are requested,
  potentially with many repetitions. The instances are designed such that
  \algOff can move all its $k$ servers by a distance of $1$ at the beginning of
  each phase and then serve all requests within this phase at no additional cost,
  while \alg has cost at least $2k-1$.
  
  At the start of each phase $\phase$ (and without loss of generality also in the
  initial configuration), the servers of \algOff are located in a single block
  $\block$ of $L_\ell$ and more specifically at one of its hub points and $k-1$
  points chosen from $k-1$ distinct fringe groups in it.
    
  The phase consists of vertices from the block $\choice$ in $L_{\ell+1}$
  corresponding to the choice of fringe points in block $\block$ that form the
  configuration of \algOff at the start of the phase.  Specifically, it consists
  of the hub of block $\choice$ and $k-1$ fringe points chosen from distinct
  groups.  
  This means that the process can thus be repeated since the final configuration
  is equivalent to the starting position.  It also means that \algOff can reach
  the configuration with cost $1$: it can move $k-1$ of its servers in block
  $\block$ to fringe points in block $\choice$ in the unique group they are
  adjacent to, and move the final server to the hub of block $\choice$.
    
  We can assume that within each phase, \alg always has a server on each
  previously requested point. Otherwise, the instance will simply request that
  point again, at no additional cost to \algOff.
    
  The instance starts the phase with a request to the hub of block $\choice$.
  \alg must move a server to that vertex and thus incurs cost $1$. Now assume
  that $\m\in[k-1]$ fringe points in block $\choice$ have already been
  requested by the instance in this phase. We claim that there is at least one
  fringe point $\vertex$ in block $c$ such that
  \begin{enumerate}[label=(\alph*)]   
    \item no vertex from the group in block $\choice$ containing $\vertex$ was requested in this phase before, and 
    \item no server of \alg can reach $\vertex$ with cost below $2$.
  \end{enumerate}

  By assumption, $\m+1$ servers of \alg already occupy requested points in block $\choice$. 
  These points are not adjacent to any other point in $L_2$, and thus in
  particular not to any point in block $\choice$.  There are $k-\m$ groups in
  block $\choice$ that are not yet represented by a request in this phase. 
  Since there are only $k-\m-1$ remaining servers, by
  \cref{claim:graphproperties}\cref{claim:blocks1,claim:blocks2}, there is
  at least one group in block $\choice$ such that no fringe point in this group
  is adjacent to a server that is in $L_{\ell-1}$ or on a fringe point in
  $L_{\ell+1}$. Within this group, there are $k-1$ points.  Any server of \alg
  that can reach such a point with cost at most $1$ is either on that point (in
  which case it cannot reach any other point in block $\choice$ with cost at
  most $1$) or on a hub in $L_{\ell+1}$ (in which case, by of
  \cref{claim:graphproperties}\cref{claim:blocks3}, the same holds). 
  
  Since there are $k-\m-1$ remaining servers, this means that such a point
  $\vertex$ exists, unless $\m=0$, i.e., if the only point requested in this
  phase has been the hub, and all servers of \alg occupy hubs in $L_{\ell+1}$
  or fringe points in $L_{\ell}$. In this case, by
  \cref{claim:graphproperties}\cref{claim:blocks3}, a server on a hub in $L_{\ell+1}$ can reach at
  most $k-1$ fringe points with cost $1$. A server on a fringe point in block
  $\choice$ can only reach one such point. Therefore, the total number of
  fringe points in block $\choice$ that can be reached in distance $1$ by
  servers on such vertices is at most $(k-1)^2$. This is strictly smaller than
  the total number of $k\cdot (k-1)$ fringe points in block $\choice$; so
  again, a point $\vertex$ with the required properties must exist.
  
  The instance now requests this point, which \alg must serve with cost at
  least $2$.  After $k-1$ such requests, the phase ends. \alg and \algOff are
  in the promised final configuration and the total cost of \alg in this phase
  is $1+(k-1)\cdot 2=2k-1$.
\end{proof}
 
\begin{observation}\label{obs:diameter}
  The diameter of the metric space constructed in the proof of
  \cref{thm:lowertwokminusonefinite} is $3$. 
\end{observation}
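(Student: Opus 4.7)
The plan is to prove both that the diameter is at least 3 and at most 3. The lower bound is routine: I would exhibit a pair of fringe points in the same layer whose open neighborhoods are disjoint. For instance, at $k=2$ the points $\peri(1,1,1,1)$ and $\peri(1,2,2,1)$ have no neighbor in common in either $L_2$ or $L_3$ (a short direct check using the edge definitions suffices), so they cannot be at distance less than 3.

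For the upper bound, the central auxiliary statement I plan to establish is that every vertex is within distance 2 of every hub. Granting this, the overall bound $d(u,v) \leq 3$ follows immediately: if $v$ is a hub, the auxiliary statement finishes the job; while if $v = \peri(\ell, \block, \group, \vertex)$ is a fringe point, I select a hub $h_v = \cent(\ell+1, \choice)$ adjacent to $v$ by choosing any $\choice$ with $\group \neq \group_\choice$ and $\vertex_\choice^\group = \vertex$, of which at least $(k-1)^{k-1} \geq 1$ exist, and conclude $d(u,v) \leq d(u,h_v) + 1 \leq 3$.

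The proof of the auxiliary statement rests on the structural fact that any two hubs in distinct layers are directly adjacent via the first edge type in each $E(\ell,\block)$, so all hubs are mutually at distance at most 2. The cases in which $u$ is itself a hub are immediate from this. When $u$ is a fringe in $L_\ell$ and $h$ lies in $L_\ell$ or in $L_{\ell-1}$, a two-step path is assembled by routing through a hub neighbor of $u$ in $L_{\ell+1}$, which exists for every fringe and is adjacent to $h$ since both are hubs in distinct layers (or in the same layer, separated by two hops through another layer's hub).

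The main obstacle is the remaining sub-case: $u = \peri(\ell, \block_u, \group_u, \vertex_u)$ with $\group_u \neq \group_{\block_u}$, so that $u$ has no hub neighbor in $L_{\ell-1}$, and $h$ is a hub in $L_{\ell+1}$ not directly adjacent to $u$. My plan is to construct a length-2 path $u \to w \to h$ with $w = \peri(\ell-1, \block', \group_u, \vertex_{\block_u}^{\group_u})$, selecting the block $\block'$ such that $\group_{\block'} = \group_u$; the bijection between choices and blocks guarantees $(k-1)^{k-1} \geq 1$ blocks with any prescribed skipped group. The adjacency $u \sim w$ then follows from the fringe-to-fringe edges contributed by $E(\ell-1,\block')$, while $w \sim h$ follows from the hub-to-fringe edges in $E(\ell+1, \block_h)$, which connect $h$ to every fringe in group $\group_{\block'}$ of block $\block'$ of $L_{\ell-1}$.
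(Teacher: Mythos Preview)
Your upper-bound argument is correct and is packaged somewhat more cleanly than the paper's: you isolate the auxiliary claim that every vertex lies within distance~$2$ of every hub, after which the diameter bound follows uniformly by passing through a hub neighbor of one endpoint. The paper instead settles for fringe-to-hub and different-layer fringe-to-fringe distances of at most~$3$ directly, and then builds an explicit $3$-path for two same-layer fringes; the intermediate vertex it uses in that hard case is exactly your $w=\peri(\ell-1,\block',\group_u,\vertex_{\block_u}^{\group_u})$, so the difference is organizational rather than substantive.

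Your lower-bound example has a genuine gap. Under the bijection that assigns block~$1$ to the choice skipping group~$1$ and block~$2$ to the choice skipping group~$2$ (so $\group_1=1$, $\group_2=2$), both $\peri(1,1,1,1)$ and $\peri(1,2,2,1)$ satisfy $\group=\group_\block$ and are therefore adjacent, via the second edge family in $E(3,\cdot)$, to \emph{every} hub in $L_3$; their neighborhoods intersect and they are at distance~$2$. Since the bijection in the construction is declared arbitrary, your example does not establish diameter $\ge 3$ in general. A bijection-independent witness for $k=2$ is the pair $\peri(\ell,\block,1,1)$, $\peri(\ell,\block,2,1)$ in the \emph{same} block: \cref{claim:graphproperties}\cref{claim:blocks1} forces their neighborhoods in $L_{\ell-1}$ to be disjoint, and since for $k=2$ every choice skips exactly one of the two groups, their neighborhoods in $L_{\ell+1}$ are disjoint as well. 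You also still owe a witness for general $k\ge 3$; the paper uses two fringes $\peri(\ell,\choice,\group,\vertex)$ and $\peri(\ell,\choice',\group,\vertex')$ with $\vertex\neq\vertex'$, $\group=\group_\choice$, and $\group\neq\group_{\choice'}$.
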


\begin{proof}
  To see this, observe that every hub point is adjacent to every hub point in
  the other layers by an edge of the form
  $\{\cent(\ell,\block),\cent(\ell+1,\choice)\}$.  Hence, hub points from
  different layers have distance $1$ from each other, while hub points of the
  same layer have distance $2$ from each other. 
  
  Each fringe point has the form $\peri(\ell,\block,\group,\vertex)$ and there is
  a $\choice\in[\Block]$ such that $\vertex=\vertex_\choice^\group$.  Thus every
  fringe point is adjacent to the hub point of some block in the next layer by an
  edge of the form 
  $\{\peri(\ell,\block,\group,\vertex_{\choice}^{\group}),\cent(\ell+1,\choice)\}$.
  Together with the first observation of this proof, we can conclude that fringe
  points from different layers have distance at most $3$ from each other, and
  that fringe points and hub points have distance at most $3$ from each other as well.
  
  It remains to show that the same holds for fringe points from the \emph{same}
  layer. 
  For this, we define $L_4\coloneqq L_1$, $L_5\coloneqq L_2$ and $L_6\coloneqq L_3$ for notational convenience.
  Consider two fringe points $\peri(\ell,b,g,n)$ and
  $\peri(\ell,b',g',n')$ for $1\leq \ell\leq 3$.  Choose any $c\in [\Block]$ such
  that $g_c\neq g'$ and $n'= n^{g'}_c$. This means that $f(\ell,b',g',n')$ is
  adjacent to the hub point $h(\ell+1,c)$.
  First assume that $g\neq g_{b}$. 
  Then choose any $c'\in[B]$ such that $g_{c'}=g$. The hub $h(\ell+1,c)$ is in
  turn adjacent to the fringe points $f(\ell+2,c',g_{c'},\tilde{n})=
  f(\ell+2,c',g,\tilde{n})$ for all $\tilde{n}\in [k-1]$. Lastly, the point
  $f(\ell+2,c',g,n^{g}_{b})$ is adjacent to $f(\ell+3,b,g,n)=f(\ell,b,g,n)$,
  since $g\neq g_{b}$. 
  Now assume that in fact $g=g_{b}$.
  Then $h(\ell+1,c)$ is adjacent to $h(\ell+2,c)$ (or any other hub in
  $L_{\ell+2}$), which in turn is adjacent to
  $f(\ell+3,b,g_b,\tilde{n})=f(\ell,b,g,\tilde{n})$ for all $\tilde{n}\in [k-1]$.
  
  To show that the diameter is no less than $3$, it suffices to find two
  distinct, nonadjacent points whose neighborhoods do not intersect.  This is the
  case for any two fringe points $f(\ell+1,c,g,n)$ and $f(\ell+1,c',g,n')$ with
  $n\neq n'$, $g=g_c$, and $g\neq g_{c'}$.
  They are distinct points in the same layer and thus not adjacent. Since
  $g=g_c$, $f(\ell+1,c,g,n)$ is not adjacent to any fringe points in $L_\ell$.
  Since $g\neq g_{c'}$, $f(\ell+1,c',g,n')$ is not adjacent to any hub points in
  $L_\ell$.  Additionally, since $n\neq n'$, for any choice $\tilde{c}\in [\Block]$,
  either $n^g_{\tilde{c}}\neq n$ or $n^{g}_{\tilde{c}}\neq n'$, so at least one of the two
  points will have no neighbors in block $\tilde{c}$ of layer $L_{\ell+2}$. Thus the
  full neighborhoods are indeed disjoint. 
       
  Note that this does not work in the special case of $k=2$, where $n\neq n'$ is
  not possible. In that case, it can be checked that $f(\ell+1,b,1,1)$ and
  $f(\ell+1,b,2,1)$ have disjoint neighborhoods: by
  \cref{claim:graphproperties}\cref{claim:blocks1}, their neighborhoods in $L_\ell$ are disjoint; and
  since for $k=2$ it holds for any choice $c\in [\Block]=[2]$ that either
  $g_c=1$ or $g_c=2$, their neighborhoods in $L_{\ell+2}$ are also disjoint.
\end{proof} 
  
It follows by \cref{corollary:uniformD} that there is a $3k$-competitive algorithm on this graph.

\subsection{Lower Bounds for Randomized Algorithms}\label{sec:lowerboundsrandomized}

In this section, we provide two lower bounds for randomized algorithms. 
Since it will be used in the following two proofs, we first state Yao's
principle for infinite minimization problems~{{\cite[Thm.~2.5]{Kom2016}}}. A
comprehensive explanation is found in the textbooks by Borodin and
El-Yaniv~\cite{BE1998} and Komm~\cite{Kom2016}.

\begin{fact}[Yao's Principle]\label{fact:yao}
  Let an online minimization problem be given. Let
  $\mathcal{I}_1,\mathcal{I}_2,\dots$ be an infinite sequence of sets of
  instances such that all instances in one set have the same length and such that
  the instances in $\mathcal{I}_{i+1}$ are longer than the instances in
  $\mathcal{I}_{i}$ for any positive integer $i$. Let $A_1,A_2,\dots$ be a list
  of all deterministic online algorithms. Let $\textup{Pr}_i$ denote an
  adversarial probability distribution on the instances in $\mathcal{I}_i$ and
  $\E_i$ the corresponding expected value function.  If there is some constant
  $c\ge 1$ such that 
  \begin{enumerate}[label=(\roman*)]
    \item\label{fact:yao1} $\displaystyle \frac{\min_i(\E_i[\cost{A_j(\mathcal{I}_i)}])}{\E_i[\cost{\Opt(\mathcal{I}_i)}]}\ge c$ for every positive integer $i$, and 
    \medskip
    \item\label{fact:yao2} $\displaystyle \lim\limits_{i\to\infty}\E_i[\cost{\Opt(\mathcal{I}_i)}]=\infty$,
  \end{enumerate}
  then there is, for any given positive constant $\eps<1$, no randomized online
  algorithm for the given problem that is $(c-\eps)$-competitive in expectation. 
\end{fact}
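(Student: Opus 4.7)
The plan is to prove the contrapositive: assume for contradiction that there is a randomized online algorithm $\algR$ for the given problem that is $(c-\eps)$-competitive in expectation for some fixed $\eps\in(0,1)$, and derive a contradiction with hypotheses \ref{fact:yao1} and \ref{fact:yao2}. By definition, the assumption supplies a constant $\alpha$ such that $\E[\cost{I,\algR(I)}]\le(c-\eps)\cdot\cost{I,\Opt(I)}+\alpha$ for every instance $I$.

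I would use the standard view of $\algR$ as a probability distribution over the deterministic algorithms $A_1,A_2,\dots$ (the random bits of $\algR$ select a deterministic strategy). Fix an arbitrary positive integer $i$ and take the expectation of the competitive inequality over the instance distribution $\text{Pr}_i$, obtaining
\[
  \E_i\bigl[\,\E_{\algR}[\cost{\algR(\mathcal{I}_i)}]\,\bigr]\;\le\;(c-\eps)\,\E_i[\cost{\Opt(\mathcal{I}_i)}]+\alpha.
\]
Applying Fubini to swap the two expectations, the left-hand side becomes $\E_{\algR}\bigl[\E_i[\cost{\algR(\mathcal{I}_i)}]\bigr]$. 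Because the infimum of a nonnegative random variable is at most its expectation, there must exist at least one realization, i.e.\ some deterministic algorithm $A_{j(i)}$ in the support of $\algR$, for which
\[
  \E_i[\cost{A_{j(i)}(\mathcal{I}_i)}]\;\le\;(c-\eps)\,\E_i[\cost{\Opt(\mathcal{I}_i)}]+\alpha.
\]

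Combining this upper bound with hypothesis \ref{fact:yao1}, which gives the matching lower bound $\E_i[\cost{A_{j(i)}(\mathcal{I}_i)}]\ge c\cdot\E_i[\cost{\Opt(\mathcal{I}_i)}]$, yields $\eps\cdot\E_i[\cost{\Opt(\mathcal{I}_i)}]\le\alpha$. Since $i$ was arbitrary, this inequality holds for all $i$; but hypothesis \ref{fact:yao2} asserts that $\E_i[\cost{\Opt(\mathcal{I}_i)}]\to\infty$ as $i\to\infty$, forcing $\alpha=\infty$ and contradicting the finiteness of $\alpha$. Note that it does not matter that the extracted $A_{j(i)}$ may depend on $i$, since hypothesis \ref{fact:yao1} holds uniformly across all deterministic algorithms and all~$i$.

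The main technical obstacle is the Yao reduction itself: making precise the decomposition of $\algR$ into a distribution over deterministic algorithms, ensuring that the interchange of expectations via Fubini is justified (measurability of cost, $\sigma$-finiteness), and handling the fact that the space of deterministic algorithms is in general uncountable for online problems with continuous inputs. For the discrete, length-stratified setting of $\mathcal{I}_1,\mathcal{I}_2,\dots$ presented here, this reduces to the classical argument in \cite[Thm.~2.5]{Kom2016}; I would import that machinery rather than reprove it, and then present only the short deterministic calculation sketched above.
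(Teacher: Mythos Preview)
The paper does not prove this statement at all; it is stated as a known fact with a citation to \cite[Thm.~2.5]{Kom2016} and the textbooks \cite{BE1998,Kom2016}, and then used as a black box. Your proposal supplies the standard proof of Yao's principle in the infinite/non-strict setting, and the argument is correct: average the competitive inequality over $\textup{Pr}_i$, swap expectations, extract a deterministic realization below the mean, combine with hypothesis~\ref{fact:yao1} to obtain $\eps\cdot\E_i[\cost{\Opt(\mathcal{I}_i)}]\le\alpha$, and invoke hypothesis~\ref{fact:yao2} for the contradiction. Your closing remark that you would import the measurability/Fubini machinery from \cite{Kom2016} rather than reprove it is exactly in line with how the paper treats the result.
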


If the set of instances $\mathcal{I}$ is finite, we can formulate a statement analogous to \cref{fact:yao} but without
\cref{fact:yao2} that holds for the \emph{strict} competitive ratio. This is Yao's principle for finite minimization problems~{{\cite[Thm.~2.3]{Kom2016}}}. 

\paragraph*{\boldmath Lower Bound of $5k/6$ on the Line for the Strict Competitive Ratio.}
The first result is a lower bound on the strict expected competitive ratio on the line.

\begin{theorem}\label{thm:strictlowerlinerandomized}
  Let any positive even integer $k$ be given.  For any $\eps>0$, no randomized
  algorithm for the time-optimal $k$-sever problem on the real line has a strict
  expected competitive ratio better than $5k/6-\eps$.
\end{theorem}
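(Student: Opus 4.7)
The plan is to apply Yao's principle (the finite-instance version of \cref{fact:yao} suited for strict expected competitive ratios) to a product distribution over instances that mirrors the pair construction in the deterministic proof of \cref{thm:strictlowerline}. Group the $k$ servers into $k/2$ pairs, placing pair $i$'s servers at $id-1$ and $id+1$ with inter-pair spacing $d\geq 2k+4$; the hub of pair $i$ is $id$ and its fringes are $id\pm2$. For each pair $i$ independently, the adversarial distribution draws one of three $2$-request sub-instances $(id,id-2)$, $(id,id+2)$, or $(id-2,id)$, each with probability $1/3$, and the overall random instance is their concatenation.

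For every instance in the support, an optimal offline algorithm serves all requests by moving every one of the $k$ servers a distance of exactly $1$ at time $1$ into a final configuration that already covers the two points requested in each pair (either $\{id-2,id\}$ or $\{id,id+2\}$); all subsequent steps incur zero cost. Hence $\E[\cost{\Opt}]=1$. On the online side, any cross-pair movement has length at least $d-3>2k$, already exceeding the target $5k/6$ in a single step, so I may restrict attention to algorithms that serve each pair with only that pair's own two servers. In the time model the per-step cost is the maximum movement across \emph{all} $k$ servers, which dominates the maximum movement among the two servers of the pair whose request is being served; summing over steps gives $\E[\cost{\alg}]\geq\sum_{i=1}^{k/2}\E[\text{local cost of pair }i]$, where local cost is the sum, over the pair's two time steps, of the pair-internal maximum movement.

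The heart of the argument is a case analysis showing that the expected local cost per pair is at least $5/3$. If $r_1=id$ (probability $2/3$), step $1$ costs at least $1$, and for every position $x$ that \alg\ can give to its idle server after step $1$ while maintaining step-$1$ cost $1$, the expected step-$2$ cost over the uniform choice of $r_2\in\{id-2,id+2\}$ is $\tfrac{1}{2}\bigl(\min(2,|x-id+2|)+\min(2,|x-id-2|)\bigr)$; a short computation (the idle server cannot be within distance $2$ of both fringes simultaneously) shows this is always at least $1$, so the total is at least $2$. If $r_1=id-2$ (probability $1/3$), then $r_2=id$ is deterministic and \alg\ attains total cost exactly $1$ by placing the idle server at $id$ during step $1$. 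Averaging yields $(2/3)\cdot 2+(1/3)\cdot 1=5/3$.

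Summing over the $k/2$ pairs gives $\E[\cost{\alg}]\geq 5k/6=(5k/6)\cdot\E[\cost{\Opt}]$, and Yao's principle then forbids any randomized algorithm from achieving strict expected competitive ratio below $5k/6-\eps$ for any $\eps>0$. The main technical obstacle is the tight $\min$-sum inequality for the $r_1=id$ case, which encapsulates the key asymmetry exploited by the adversary: the idle server is pinned by the deterministic continuation when $r_1$ is a fringe, but must hedge between two uniformly random fringes when $r_1$ is the hub, and this imbalance is precisely what lowers the randomized per-pair bound $5/3$ from the deterministic per-pair bound $5/2$.
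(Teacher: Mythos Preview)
Your decomposition into independent pairs and the local-cost inequality $\cost{\alg}\ge\sum_i(\text{local cost of pair }i)$ are fine, but the per-pair lower bound of $5/3$ is not: it silently assumes that pair $i$'s two servers are still at $id\pm1$ when pair $i$'s requests begin. In the time model nothing prevents \alg\ from repositioning \emph{later} pairs' servers during \emph{earlier} pairs' steps at no extra charge to the local cost you are summing. Concretely, during pair $1$'s first step the global cost is already at least $1$, so every server of every pair $i\ge2$ may move by $1$; sending $id-1\to id-2$ and $id+1\to id$ puts pair $i$ at $(id-2,id)$. From that starting configuration your own case analysis gives local cost $0$ when $r_1=id-2$ (both requests already covered) and expected local cost $1$ when $r_1=id$ (step $1$ free, step $2$ expected cost $1$), hence only $\tfrac{2}{3}\cdot1+\tfrac{1}{3}\cdot0=\tfrac{2}{3}$ per prepositioned pair. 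Summing yields roughly $k/3+O(1)$, well below $5k/6$.

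The root cause is that your distribution is too exploitable: once $r_1$ is a fringe, the continuation is deterministic, and even the obvious symmetrization (adding $(id+2,id)$) still lets the prepositioned configuration $(id-2,id)$ achieve expected local cost $1<5/3$. The paper's proof addresses exactly this by (i) analysing the per-pair cost for \emph{arbitrary} starting positions $x,y$ of the pair's servers, and (ii) enriching the distribution with the alternating $0,\delta,0,\delta,\dots$ branch when $r_1=0$. That branch forces any algorithm whose two servers are not both near $0$ to pay roughly $2$, which is what restores the $5/3$ bound against prepositioning. Your argument needs a comparable device; with only three two-request instances per pair, the bound cannot hold.
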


\begin{proof}
  We provide a distribution of instances such that any deterministic algorithm
  cannot have an expected cost of less than $5k/6-\eps$ times the optimal cost.
  Applying Yao's principle for finite minimization
  problems~\cite[Thm.~2.3]{Kom2016} then yields the stated result. The
  distribution includes all hard instances used in the proof of
  \cref{thm:strictlowerline}, and more.
  
  Let $\eps>0$ and a deterministic algorithm \alg be given. Choose a positive
  $\delta<12\eps/k$ and $N\in \N$ such that $N\cdot \delta \geq 2$.  Let the
  servers be labeled $s_1,\dots,s_k$. 
  We group them into $k/2$ pairs; for each $i\in[k/2]$, we call $\{s_{2i-1},s_{2i}\}$ pair $i$. 
  For each pair, we position its two servers a distance of $2$ apart and 
  separate the pairs by large distances $d\ge 2k+4$ between all neighboring pairs. 
  Specifically, we can choose the initial positions of servers $s_{2i-1}$ and
  $s_{2i}$ to be $id-1$ and $id+1$ for each $i\in[k/2]$.  All instances request
  two or more points from the interval $[id-2,id+2]$ for
  each $i\in[k/2]$. We call the requests for points in $[id-2,id+2]$ the
  \emph{requests for pair} $i$.
  
  If any server of pair $i$ serves a request of another pair $j\neq i$, then it
  travels a distance of at least $d-3>2k$ during the entire instance, which
  implies a strict competitive ratio worse than $2k$.  Hence, we can assume
  without loss of generality that only servers of pair $i$ serve the requests of
  pair $i$. 
  
  \medskip
  \noindent\textit{Description of the distribution of instances.} 
  Any instance first presents the requests of pair $1$, 
  then the requests of pair $2$, and so on. 
  Consider the requests for any given pair $i$ that are about to be presented. 
  By shifting the coordinate system, we can assume that $i=0$, that is, we have
  requests for points in the interval $[-2,2]$ that will be served by one server
  pair. 
  
  With probability $1/2$, the first request is to the hub $0$.  Assuming this
  case, the second and last request of the server pair is to either $2$ or $-2$,
  each with probability $1/3$, and with the remaining probability of $1/3$ as
  follows: the second request is to $\delta$, the third one to $0$, and this is
  followed by another $N-2$ requests, always alternating between $\delta$ and
  $0$.
  
  With probability $1/2$, the first request is not to the hub, but instead
  to $2$ or $-2$, each with probability $1/4$.  The second request is then chosen
  uniformly at random among the two points in $\{-2,0,2\}$ not requested yet.
  
  \medskip
  \noindent\textit{Cost analysis.} 
  Let $x$ and $y$ be the positions of the two servers of the considered pair
  whose requests are presented next. As before, we may assume without loss of
  generality that the requests are points in the interval $[-2,2]$ and that
  $|x|\leq |y|$ and $y\geq 0$ by appropriately adjusting the coordinate system. 
  
  We consider the first request. If it is to $0$, \alg incurs cost at
  least $|x|$ serving it. If it is to $-2$, it incurs cost at least $2-|x|$. The
  expected cost for the first request is therefore at least $1/2\cdot
  |x|+1/4\cdot (2-|x|)=1/2+|x|/2\geq 1/2$.  Now we consider the second request.
  \alg's position may depend on the first request, so we consider the two cases
  separately. 
  
  \begin{itemize}[left=0em]
    \item \textit{Case 1: The first request was to $0$.}
    Then one server must occupy $0$. Let $z$ be the
    current position of the second server and assume for now that $z\geq 0$.  Then
    if the request is to $2$, \alg cannot serve it with cost less
    than $2-z$. If the request is to $-2$, then \alg cannot serve it with cost
    less than $2$. Finally, if the instance contains $N$ requests to alternately
    $0$ and $\delta$, then \alg could serve all of them with the server on $0$,
    which forces it to incur an additional cost of at least $N\cdot \delta\geq 2$.

    Alternatively, it can serve all requests to $0$ with the server already
    positioned there and all requests to $\delta$ with the second server, which
    lets it incur an additional cost of at least $z-\delta\leq 2$.  The case where
    $z<0$ is entirely symmetrical, except that \alg would
    have cost $|z|+\delta$ instead of $z-\delta$ in the final case.  Thus
    \alg's expected cost for serving the requests after the first one is at
    least $1/3\cdot (2-|z|+2+|z|-\delta)\geq 4/3-\delta/3$ in the considered event
    that $0$ is requested first. 
    
    \item \textit{Case 2: The first request was to $-2$ or $2$.}
    Then the second request is to one of the other two points among $\{-2,0,2\}$.
    The server that has served the first request has distance at least $2$ to
    both of these points; and the other
    server, wherever it is, has an average distance of at least $1$ to them. Since
    they are chosen with equal probability, \alg's expected cost for
    serving the second request is therefore at least $1$ if a point other than $0$
    is requested first. 
  \end{itemize}

  The algorithm thus incurs, for each pair, an expected cost of at least
  $1/2\cdot (4/3-\delta/3+1)=7/6-\delta/6$ serving all but the first request of
  the pair and thus $1/2+7/6-\delta/6=5/3-\delta/6$ including the first request.
  For the entire instance with all requests for the $k/2$ pairs, \alg's
  expected cost is therefore at least $k/2\cdot (5/3-\delta/6)=5/6\cdot
  k-\delta\cdot k/12\geq 5/6\cdot k-\eps$. Since \Opt can serve
  all requests at cost at most $1$ by moving the servers initially positioned at
  $id-1$ and $id+1$ to the appropriate two points among
  $\{id-2,id,id+\delta,id+2\}$ for each $i\in [k/2]$, the statement of the
  theorem follows.
\end{proof} 

\paragraph*{\boldmath Lower Bound of $k+H_k-1$ on General Graphs.}
In this last section, we provide a lower bound of $k+\Omega(\log k)$ on the
expected competitive ratio of randomized algorithms.  This is achieved on a
significant extension of the metric space used in the proof of
\cref{thm:lowertwokminusonefinite}. 
We describe a distribution of hard instances on which no deterministic
algorithm can defeat the lower bound to be proven, and finally apply Yao's
principle.  Note that the bound proven here for the time model 
is exponentially larger than the best known lower bound for the distance model,
which is polylogarithmic in~$k$.

\begin{theorem}\label{thm:lowertwokminusonerandomized}
  Let any positive integer $k\ge1$ be given.  No randomized online algorithm for
  the time-optimal $k$-server problem has a better constant competitive ratio
  than $k+H_k-1$. 
\end{theorem}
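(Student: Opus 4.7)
The plan is to apply Yao's principle (\cref{fact:yao}) to a probabilistic instance family on an extension of the tripartite block graph from the proof of \cref{thm:lowertwokminusonefinite}. The same layer--block--hub--fringe structure is retained, and the adjacency restrictions from \cref{claim:graphproperties} are preserved; in particular, every vertex outside a block $c$ lies at distance at most $1$ from at most $k-1$ fringes of $c$. The extension enlarges the number of blocks per layer so that for every possible random realization of the $k-1$ fringe requests within a phase, the resulting end-of-phase configuration of $\algOff$ matches the start of some phase in the next layer, allowing the process to iterate indefinitely.

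The instance distribution is defined phase by phase. Each phase begins with a request for the hub of the current block $c$ and continues, in rounds $j=0,1,\dots,k-2$, with a fringe drawn uniformly at random from a uniformly random group of $c$ that has not yet been requested in this phase. As in the proof of \cref{thm:lowertwokminusonefinite}, $\algOff$ can serve each phase by moving all $k$ servers simultaneously by distance $1$ at the phase's start and then staying put, for an optimal cost of exactly $1$ per phase regardless of the random outcome.

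For the expected cost of any deterministic algorithm $\alg$, the analog of \cref{clm:stayentirephase} lets me assume that $\alg$ keeps a server on every point it has already served in the current phase. Entering round $j+1$, there are then exactly $j+1$ stuck and $k-j-1$ free servers. By the adjacency bound each free server is at distance at most $1$ from at most $k-1$ fringes of $c$, so the free servers collectively cover at most $(k-j-1)(k-1)$ of the $(k-j)(k-1)$ fringes lying in still-unrequested groups. Hence the adversary's uniformly random fringe is at distance at least $2$ from all servers with probability at least $1/(k-j)$, and the expected cost of round $j+1$ is at least $\tfrac{2}{k-j}+\tfrac{k-j-1}{k-j}=1+\tfrac{1}{k-j}$. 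Summing over rounds and adding the initial hub cost,
\[
  \E\bigl[\text{cost of }\alg\text{ on a phase}\bigr] \;\geq\; 1+\sum_{j=0}^{k-2}\Bigl(1+\tfrac{1}{k-j}\Bigr) \;=\; k+H_k-1,
\]
so after concatenating arbitrarily many phases both hypotheses of \cref{fact:yao} hold and the claimed randomized lower bound follows.

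The main obstacle is the combinatorial design of the extended graph: the adversary's random choices branch into many per-phase outcomes, each of which must be routed to a block of the next layer that again admits a valid starting configuration for $\algOff$, while ensuring that no new adjacency introduced by the enlargement can let any single free server cover more than $k-1$ fringes of the currently requested block. Once the extension preserves this bound, the probability analysis reduces to the short counting argument sketched above.
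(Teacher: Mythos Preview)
Your high-level plan---reuse the layered block graph, make the fringe choices random, and apply Yao---is the right one, and it is what the paper does. However, two of the steps you treat as routine are exactly where the work lies, and as written your argument does not go through.

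First, you invoke ``the analog of \cref{clm:stayentirephase}'' to pin $j+1$ servers to the already-requested points. That claim is proved in the deterministic lower bounds by \emph{adaptively} re-requesting any point \alg abandons; in a Yao argument the instance distribution is fixed in advance and cannot react to \alg. Your distribution requests each fringe exactly once, so nothing prevents \alg from moving a ``stuck'' server away---and if all $k$ servers are free, the counting argument collapses (with $k$ free servers covering up to $k(k-1)$ candidates, every fringe may be within distance $1$). The paper solves this by baking re-requests into the distribution: in each subphase a group is sampled uniformly from \emph{all} $k{+}1$ groups, so an already-used group is hit with probability $\m/(k{+}1)$ and its point is requested again. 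The per-subphase cost is then obtained by summing a geometric series, not by a single draw.

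Second, even if the stay property held, your line $\E[\text{cost}]\ge \tfrac{2}{k-j}+\tfrac{k-j-1}{k-j}$ assumes that whenever the random fringe is within distance $1$ of some free server the cost is at least $1$. But a free server can be sitting \emph{on} a fringe of block $c$, in which case the cost is $0$. With only $k-1$ points per group this has probability $\Theta(1/k)$ per round, and the shortfall accumulates to a constant over the phase; you end up below $k+H_k-1$. The same issue hits the hub request: once the last fringe of the previous phase is revealed, \alg knows $c$ and can slide a server onto its (single) hub at no extra time cost. The paper's fix is not to add more blocks but to blow up every group---hubs included---to $N\gg k$ points and to sample the requested point uniformly inside the chosen group; then $\Pr[\text{distance }0]\le k/N\le\delta$ and the tail-sum bound $\E[\text{cost}]\ge \Pr[\ge 1]+\Pr[\ge 2]$ gives $(1-\delta)+\tfrac{1}{k+1-\m}$ as needed. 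Enlarging the number of blocks, which is what you propose, neither helps with this nor is it required to accommodate the random outcomes (the original $B=k(k-1)^{k-1}$ already matches every possible $(k{-}1)$-fringe selection).
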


\begin{proof}
  First note that the statement is trivial for $k=1$ with $k+H_k-1=1$. 
  Now let an arbitrary integer $k\ge2$ and an arbitrarily small positive constant
  $\eps<1$ be given.  We construct a finite metric space and prove that no
  randomized algorithm for the time-optimal $k$-server problem is
  $(k+H_k-1-\eps)$-competitive on it. This will be done via Yao's principle as
  stated in \cref{fact:yao}, i.e., by proving a lower bound on deterministic
  algorithms on a hard distribution of instances instead. 
   
  Choose $\eps'>0$ such that $\eps'(k+H_k-1)\le \eps/2$; then choose $\delta>0$
  such that $(1-\delta)^2\ge1-\eps'$, which also implies $1-\delta\ge1-\eps'$;
  then choose $\Vertex\in \N$ such that both $\Vertex^2\geq k(k-1)$ and
  $k/\Vertex\leq \delta$; define $\Block \coloneqq k\Vertex^{k-1}$, and lastly choose
  $\phaselength\in\N$ such that $((k-1)/(k+1))^{\phaselength-1}\leq \delta$.  The
  metric space is described by a graph $G=(V,E)$ that extends the construction
  used in the proof of \cref{thm:lowertwokminusonefinite}.
  
  \medskip
  
  \noindent\textit{Description of the metric space.} 
  The construction of $G$ is similar to the graph used in the proof of
  \cref{thm:lowertwokminusonefinite}. It is also tripartite, consisting of
  three disjoint \emph{layers} $L_1\cup L_2\cup L_3=V$, with each layer being
  partitioned into $\Block$ uniformly sized \emph{blocks}. The main difference is
  that a block is further partitioned into $k+1$ groups,
  each consisting of a much larger number of exactly $\Vertex$ vertices. Among
  the $k+1$ groups of a block there is one distinguished \emph{hub group}, whose
  vertices are called \emph{hub points}, while the other $k$ groups are called
  \emph{fringe groups} and their vertices \emph{fringe points}. 
  For each layer, we identify its blocks with the natural numbers $[\Block]$, 
  for each block its fringe groups with $[k]$,
  and for each group its points with $[\Vertex]$.
  We denote the hub point $\vertex$ of block $\block$ of layer $\ell$ by
  $\cent(\ell,\block,\vertex)$ and the fringe point $\vertex$ of fringe group
  $\group$ of block $\block$ of layer $\ell$ by
  $\peri(\ell,\block,\group,\vertex)$.  Thus the vertex set $V$ has cardinality
  $\lvert V\rvert=3\Block(k+1)\Vertex=3k(k+1)\Vertex^k$ and is given by 
  \[V=\bigcup_{\ell=1}^{3}\bigcup_{\block=1}^{\Block}\left(\bigcup_{\vertex=1}^{\Vertex}\{\cent(\ell,\block,\vertex)\}\cup \bigcup_{\group=1}^{k}\bigcup_{\vertex=1}^{\Vertex}\{\peri(\ell,\block,\group,\vertex)\}\right)\,.\]
  
  For notational convenience we define $L_4\coloneqq L_1$ and $L_0\coloneqq L_3$. 
  To describe the edges of $G$, we first highlight some uniform aspects of the construction. 
  For each $\ell\in[3]$, the edges $E_\ell$ between $L_\ell$ and $L_{\ell+1}$ are constructed identically. 

  Moreover, the edges from any layer to the next are \emph{block-uniform} in the following sense:  
  For any $\ell\in[3]$, $\group\in[k]$, and $\vertex\in[\Vertex]$, the
  neighborhood in $L_{\ell+1}$ is the same for every fringe point
  $\peri(\ell,\block,\group,\vertex)$ independent of $\block\in[\Block]$.  For
  the hub points, there is not even a dependence on $\vertex$: for any fixed
  $\ell\in[3]$ and $\group\in[k]$, the neighborhood in $L_{\ell+1}$ of hub point
  $\cent(\ell,\block,\vertex)$ is the same for all choices $\block\in[\Block]$
  and $\vertex\in[\Vertex]$.
  
  Now consider any block $\block$ of $L_\ell$. Recall that each block consists of
  a hub group with $\Vertex$ hub points and $k$ fringe groups, each with
  $\Vertex$ fringe points.  To each possible choice of $k-1$ fringe points from
  $k-1$ distinct groups we assign one block $\choice\in[\Block]$ from
  $L_{\ell+1}$ by an arbitrary but fixed bijection. Such a bijection exists,
  because there are $k$ possibilities to choose $k-1$ fringe groups and
  $\Vertex^{k-1}$ ways to choose one fringe point from each of these groups,
  which means that the total number of possible choices is exactly $\Block$, the
  number of blocks in $L_{\ell+1}$.
  
  Consider now any such choice of $k-1$ fringe points in any block $\block$ of
  $L_\ell$ and the unique corresponding block $\choice$ in $L_{\ell+1}$ given by
  the fixed bijection. 
  We assign the fringe group $\group$ of block $\choice$ in $L_{\ell+1}$
  either, if there is a chosen point in group $\group$ of block $\block'$ of
  $L_\ell$, to this single point or, otherwise, to the entire hub group of
  $\choice$. Each chosen fringe point of $\choice$ is then adjacent to all
  $\Vertex$ points of the fringe group assigned to it. Moreover, all $\Vertex$
  points of the hub group of $\choice$ are adjacent to all $\Vertex$ points in
  the fringe group of $L_{\ell+1}$ assigned to this hub group.  Additionally,
  each of the chosen $k-1$ fringe points and all $\Vertex$ hub points of block
  $\block$ in $L_\ell$ are adjacent to all hub points of block $\choice$ in
  $L_{\ell+1}$. 
  
  More formally, identify the $\Block$ possibilities to choose $k-1$ fringe
  points in pairwise distinct fringe groups in a given block with $[\Block]$.
  This identification is used for all blocks uniformly.  For any choice
  $\choice\in [\Block]$, let $\group_{\choice}\in[k]$ denote the index of the
  unique fringe group in each block without a chosen point.
  Note that this index is block-uniform.
  Each other fringe group $\group\neq \group_{\choice}$ has a chosen point, which
  we denote by $\vertex_\choice^\group$,  which is again fully block-uniform. The
  chosen fringe points of choice $\choice$ in block $\block$ are thus exactly
  $\peri(\ell,\block,\group,\vertex_\choice^\group)$ for $\group\in[k]$ with
  $\group\neq \group_\choice$. 
  
  Formally, block $\block$ of layer $L_\ell$ has the following edges to next layer
  $L_{\ell+1}$:
  \begin{align*}
    E(\ell,\block)&\coloneqq \bigcup_{\choice=1}^{\Block}\bigcup_{\vertex'=1}^{\Vertex} \bigcup_{\vertex=1}^{\Vertex}\{\{\cent(\ell,\block,\vertex'),\cent(\ell+1,\choice,\vertex)\}\}\\
    &\qquad\cup \bigcup_{\choice=1}^{\Block} \bigcup_{\vertex'=1}^{\Vertex} \bigcup_{\vertex=1}^{\Vertex} \{\{\cent(\ell,\block,\vertex'),\peri(\ell+1,\choice,\group_{\choice},\vertex)\}\}\\
    &\qquad\cup \bigcup_{\choice=1}^{\Block} \bigcup_{\substack{g\in[k]\\\group\neq \group_{\choice}}} \bigcup_{\vertex=1}^{\Vertex}\{\{\peri(\ell,\block,\group,\vertex_{\choice}^{\group}),\cent(\ell+1,\choice,\vertex)\}\}\\
    &\qquad\cup\bigcup_{\choice=1}^{\Block} \bigcup_{\substack{\group\in[k]\\\group\neq \group_{\choice}}} \bigcup_{\vertex=1}^{\Vertex} \{\{\peri(\ell,\block,\group,\vertex_\choice^\group),\peri(\ell+1,\choice,\group,\vertex)\}\}\;.
  \end{align*}
  
  The entire edge set is thus 
  $E=\bigcup_{\ell=1}^{3}\bigcup_{\block=1}^{\Block} E(\ell,\block)$.
  The following claim is the analogue to 
  \cref{claim:graphproperties} in the proof of \cref{thm:lowertwokminusonefinite}
  and verified by direct inspection of $G$'s construction.
  
  \begin{claim}\label{claim:graphpropertiesrandom}
    Fix a block $\block$ in $L_{\ell}$. Then we have the following properties. 
    \begin{enumerate}[label=(\roman*)]
      \item\label{claim:blocks1random} Any point in $L_{\ell-1}$ is adjacent to the vertices of 
      at most one fringe group of block~$\block$.
      \item\label{claim:blocks2random} Any fringe point in $L_{\ell+1}$ is adjacent to 
      at most one fringe point of block~$\block$.
      \item\label{claim:blocks3random} Any hub point in $L_{\ell+1}$ is adjacent to 
        exactly one vertex for each of exactly $k-1$ of the $k$ fringe groups of block~$\block$.
    \end{enumerate}
  \end{claim}
  
  \noindent\textit{Description of the distribution of instances.} 
  We describe a distribution over instances such that the expected cost incurred
  by any deterministic online algorithm on this distribution surpasses 
  $k+H_k-1-\eps$ times the expected cost of an optimal offline algorithm.
  
  Let \alg be any deterministic online algorithm. We will compare its expected
  cost against that of an offline algorithm \algOff on the distribution of
  instances we describe now. Any instance consists of distinct phases. In each
  phase, exactly $k$ distinct points, all from the same layer, are requested,
  potentially with many repetitions. The instances are designed such that
  \algOff can move all its $k$ servers by a distance of $1$ at the beginning of
  each phase and then serve all requests within this phase at no additional cost,
  while \alg has an expected cost larger than 
  $k+H_k-1-\eps$.
  
  At the start of each phase $\phase$ (and without loss of generality also in the
  initial configuration), the servers of \algOff are located in a single block
  $\block$ of $L_\ell$, and more specifically at one of its hub points and $k-1$
  points chosen from the $k$ distinct fringe groups in it. 
  
  Any phase consists of requests for points from the unique block $\choice$ in
  $L_{\ell+1}$ corresponding to the choice of fringe points that form the
  configuration of \algOff at the start of the phase.  The first request of the
  phase is any hub point of $\block$, chosen uniformly at random. The entire
  phase $\phase$ is divided into $k$ subphases $\phase_0,\dots,\phase_{k-1}$.
  Subphase $\phase_0$ consists of only the first request. 
  For any $\m\in[k-1]$, subphase $\phase_{\m-1}$ ends as soon as the $\m$-th
  distinct point in the current phase has been requested and served, and
  $\phase_\m$ starts immediately after this. 
  
  In particular, from the start of subphase $\phase_\m$ until its last request
  appears, exactly $\m$ distinct points have been requested already in the phase. 
  We call a group \emph{unused} if no point from this group has been requested in the phase so far. 
  When subphase $\phase_\m$ has not ended yet, the next request is chosen as follows. 
  
  If there have been $\phaselength-1$ requests in the current subphase already,
  any point from any unused fringe group in $\block$ is requested uniformly at
  random. Note that this new request ends the subphase and allows us to bound the
  number of requests in any given subphase by~$\phaselength$.
  
  If there have been fewer requests in the subphase, then one group of block
  $\block$ (i.e., the hub group or any fringe group) is chosen uniformly at
  random. If the chosen group is unused, a point from it is chosen uniformly at
  random and requested, which also ends the subphase. Otherwise, the point of
  this group already requested during this phase is requested again.
  
  \medskip
  
  \noindent\textit{Offline cost per phase.} 
  During any given phase exactly one hub point of $\block$ and $k-1$ fringe
  points from distinct groups in $\block$ are requested, perhaps repeatedly.
  Therefore, \algOff can serve all requests of this phase at cost $1$: It
  simultaneously moves $k-1$ of the servers from their positions in $\block'$ to
  fringe points in $\block$ in the unique group they are adjacent to, and the
  remaining server to the requested hub of $\block$. This also implies that
  \algOff ends the phase in a configuration isomorphic to the initial
  configuration; thus the process can be repeated indefinitely. Note also that
  the length of a phase is at most $1+\sum_{\m=1}^{k-1}
  \phaselength=1+(k-1)\phaselength$.
  
  \medskip
  
  \noindent\textit{Online cost per phase.} 
  We will show that in each phase, the algorithm has expected cost more than
  $k+H_k-1-\eps$.  First, \alg can only serve the hub point requested at the
  beginning for free if a server already covers it; otherwise, it incurs cost of
  at least $1$ to serve it.    
  Since the requested hub point in the block is chosen uniformly at random from
  the block's hub group of size $\Vertex$, the probability that it is occupied by
  one of the $k$ servers of \alg is at most $k/\Vertex$. Thus, the expected
  cost of \alg to serve the first request is at least $1-k/\Vertex\geq
  1-\delta$.
  
  Consider now the expected cost of \alg in subphase $\phase_\m$ for any
  $\m\in[k-1]$. During $\phase_\m$, exactly $\m$ distinct points have been
  requested already in the current phase. 
  
  We consider first the case that among these $\m$ points previously requested in
  the current phase, there is at least one point not occupied by \alg.  In this
  case there is a probability of $1/(k+1)$ for this unoccupied point to be
  requested next and \alg incurring a cost of at least $1$ serving it. 
  Moreover, since exactly $\m$ points have already been requested in the current
  phase during $\phase_\m$, there are still $k+1-\m$ unused groups.
  There is a probability of $({k+1-\m})/({k+1})$ that one of these unused groups
  is chosen for the next request, and in such a case the point from the chosen
  group is then chosen uniformly at random. Since at most $k$ of the points in
  the chosen group can be occupied, the probability that an unoccupied point is
  chosen is thus at least $1-k/\Vertex$.  Therefore, \alg incurs for every
  request in subphase $\phase_\m$ with $\m\in[k-1]$ an expected cost of at least 
  \begin{equation}\label{eq:expcost1}
    \frac{1}{k+1}+\frac{k+1-\m}{k+1}\cdot \left(1-\frac{k}{\Vertex}\right)\,.
  \end{equation}
  
  We now consider the case that the $\m$ servers of \alg do in fact occupy all
  previously requested points, meaning that they can serve any request to such a
  point with cost $0$. However, a new point is requested with probability
  \begin{equation}\label{eq:probnewpoint}
    \frac{k+1-\m}{k+1}\;.
  \end{equation}

  Given this event of a new point being requested, we consider the probability
  that a server of \alg can serve the current request with low cost. Such a
  server can serve the request with cost $0$ only if it is already occupying the
  requested point. Since this point is chosen uniformly at random among the
  $\Vertex$ points in its group, the probability of this occurring is at most
  $k/\Vertex$. Thus the probability that it has cost at least $1$ is at least
  \begin{equation}\label{eq:probcostoneplus}
    1-k/\Vertex\;.
  \end{equation}
  
  Next we derive an upper bound of $1/k\le 1/(k+1-\m)$ on the probability
  that a given server of \alg can serve a requested point from a block $\block$
  in layer $L_\ell$ with cost at most $1$. This will then yield a lower bound on
  the probability that the cost is at least $2$. We consider three cases
  according to the current layer of the server:
  \begin{itemize}[left=0em]
    \item \textit{Case 1.} If the considered server occupies a point in $L_{\ell-1}$, then it can
    serve the next request with cost at most $1$ only if this occupied point is
    adjacent to requested point of $\block$. There are $k+1-\m$ groups in $\block$
    from which one is chosen uniformly at random to contain the request. By
    \cref{claim:graphpropertiesrandom}\cref{claim:blocks1random}, the
    probability for the considered server to be adjacent to a point of the right
    group is therefore at most $1/(k+1-\m)$.
    \item \textit{Case 2.} If the considered server
    occupies a point in $L_{\ell}$, then it is trivially not adjacent to any other
    point in $L_{\ell}$. Thus it can serve the request with cost at most $1$ (and in
    fact $0$) if and only if it is exactly at the requested location already.
    Since in any group any of the $\Vertex$ points is chosen uniformly at random,
    this happens with a probability of at most $1/\Vertex$. Due to $\Vertex\ge k$,
    this is bounded by $1/k\le 1/(k+1-\m)$. 
    \item \textit{Case 3.} Assume now that the considered server occupies a point in $L_{\ell+1}$. 
    \begin{itemize}
    \item \textit{Case 3.1.} If the occupied point is a hub point, then it is adjacent to only $k-1$ fringe points in $\block$
    by \cref{claim:graphpropertiesrandom}\cref{claim:blocks3random}.  Since
    the requested point is chosen uniformly at random among $k-\m$ groups, each
    with $\Vertex$ points, the probability that the considered server can serve
    this request with cost at most $1$ is at most $(k-1)/(\Vertex\cdot (k-\m))\leq
    (k-1)/\Vertex$. 
    By our choice of $\Vertex$, this is bounded by $(k-1)/\Vertex\leq 1/k\leq 1/(k+1-\m)$.  
    \item \textit{Case 3.2.} If the occupied point is a fringe point, then it is, by
    \cref{claim:graphpropertiesrandom}\cref{claim:blocks2random},  adjacent to
    only a single fringe point in block $\block$ of layer $L_\ell$.  Thus, the
    probability of it serving the request with cost at most $1$ is at most
    $1/\Vertex\le 1/k\leq 1/(k+1-\m)$.
    \end{itemize} 
  \end{itemize}
  
  Thus, the probability that \emph{any} of the $k-\m$ servers of \alg not
  located at points previously requested during the current phase can serve the
  next request with cost $1$ or less is at most $(k-\m)/(k+1-\m)$ by the union
  bound.  The probability that \alg incurs cost at least $2$ is therefore at
  least
  \begin{equation}\label{eq:probcosttwoplus}
    1-\frac{k-\m}{k+1-\m}\ge \frac1{k+1-\m}\;.
  \end{equation}
  
  Using the tail-sum formula $\E[X]=\sum_{i=1}^{\infty} \P[X\geq i]\geq
  \P[X\geq 1]+\P[X\geq 2]$, the expected cost of \alg to serve the request
  given that $\m$ of its servers occupy the previously requested points is at
  least
  \begin{equation}\label{eq:expcost2}
    \frac{k+1-\m}{k+1}\cdot \left(1-k/\Vertex+\frac{1}{k+1-\m}\right)=\frac{1}{k+1}+(1-k/\Vertex)\cdot \frac{k+1-\m}{k+1}\;.
  \end{equation}
  
  The leading factor is the probability that the requested point has not yet been
  requested in the current phase; see \cref{eq:probnewpoint}. The probabilities
  that \alg has cost at least $1$ and $2$ are due to
  \cref{eq:probcostoneplus,eq:probcosttwoplus}, respectively.
  
  By \cref{eq:expcost1,eq:expcost2}, no matter where the servers
  of \alg are located, the expected cost to serve any single request in the
  subphase is at least 
  \begin{equation}\label{eq:expcostrequest}
    \frac{1}{k+1}+(1-k/\Vertex)\cdot \frac{k+1-\m}{k+1}\geq \frac{1}{k+1}+(1-\delta)\cdot \frac{k+1-\m}{k+1} \,.
  \end{equation}
  
  We now consider the expected cost of \alg over the entire subphase.
  For each $i\in[\phaselength]$, we define the random variable $Y_i$ that is $1$
  if there are at least $i$ requests in $\phase_\m$, and $0$ otherwise.
  
  In any subphase $\phase_\m$ with $s\in[k-1]$, the probability for requesting a
  point outside of the $\m$ points already requested in the considered phase and
  thus ending the subphase, is exactly $(k+1-\m)/(k+1)$. Since the subphase
  contains at least $i$ requests exactly if the first $i-1$ requests are points
  already requested, we know that
  $\P[Y_i=1]=(1-(k+1-\m)/(k+1))^{i-1}=(\m/(k+1))^{i-1}$ for any $i\in[
  \phaselength ]$.  We also define for every $i\in[\phaselength]$ the random
  variable $X_i$ to be $0$ if $Y_i=0$, i.e., if the request sequence of subphase
  $\phase_\m$ has fewer than $i$ requests, and as the cost of \alg to serve the
  $i$-th request otherwise. 
  
  We have proven with \cref{eq:expcostrequest} that $\E[X_i\mid Y_i=1]\geq
  1/(k+1)+(1-\delta)\cdot (k+1-\m)/(k+1)$ for $i\in[\phaselength-1]$. The cost of
  \alg in the entire subphase $\phase_\m$ is $\sum_{i=1}^{\phaselength}
  X_i\cdot Y_i$. And thus, the expected cost of \alg in $\phase_\m$ is
  \begin{align*}
  \E\left[\sum_{i=1}^{\phaselength} X_i\cdot Y_i\right]& =\sum_{i=1}^{\phaselength}\E[ X_i\cdot Y_i]\\
  &=\sum_{i=1}^{\phaselength}\E[ X_i\mid Y_i=1]\cdot \P[Y_i=1] \geq \sum_{i=1}^{\phaselength-1}\E[ X_i\mid Y_i=1]\cdot \P[Y_i=1]\\
  &\geq  \left(\frac{1}{k+1}+(1-\delta)\cdot \frac{k+1-\m}{k+1}\right)\cdot \sum_{i=0}^{\phaselength-2} \P[Y_{i+1}=1]\\
  &= \left(\frac{1}{k+1}+(1-\delta)\cdot\frac{k+1-\m}{k+1}\right)\cdot \sum_{i=0}^{\phaselength-2} \left(\frac{\m}{k+1}\right)^{i}\\
  &= \left(\frac{1}{k+1}+(1-\delta)\cdot\frac{k+1-\m}{k+1}\right)\cdot\frac{1-\left(\frac{\m}{k+1}\right)^{\phaselength-1}}{1-\frac{\m}{k+1}}\\
  &\geq \left(\frac{1}{k+1}+(1-\delta)\cdot\frac{k+1-\m}{k+1}\right)\cdot (1-\delta)\cdot \frac{k+1}{k+1-\m}\\
  &\geq  (1-\delta)^2\cdot \left(\frac{1}{k+1}+\frac{k+1-\m}{k+1}\right)\cdot \frac{k+1}{k+1-\m}\\
  &\geq (1-\eps')\cdot \left(1+\frac{1}{k+1-\m}\right)\,.
  \end{align*}
  
  Combined with the expected cost for the first request, the expected cost for each phase is thus at least
  \begin{align*}
  (1-\delta)+\sum_{\m=1}^{k-1}(1-\eps')\cdot \left(1+\frac{1}{k+1-\m}\right)
  & \geq(1-\eps')\cdot \left(1+(k-1)+\sum_{\m=1}^{k-1}\frac{1}{k+1-\m}\right)\\
  & =(1-\eps')\cdot (k+H_k-1)\geq k+H_k-1-\eps/2. 
  \end{align*}
  
  \medskip
  
  \noindent\textit{Applying Yao's principle.} 
  To apply Yao's principle (more precisely \cref{fact:yao}, and thus that
  \cref{fact:yao2} holds), it now suffices to show that for any constant $\ell$
  there is an instance of the described type whose optimal cost is at least $\ell$. This
  is easily seen: Note first that any phase consists of at most
  $1+\sum_{\m=1}^{k-1} \phaselength$ requests, thus our construction yields
  instances with arbitrarily many phases. Any two consecutive phases contain
  together requests for $2k$ distinct points, $k$ in one layer and $k$ in another
  one. Thus any solution incurs a cost of at least $1$ during these two phases.
  We can construct for any given $\ell \in\N$ an instance that contains $2\ell$
  phases and thus, by a very rough estimation, lets any algorithm incur a cost of
  at least $\ell$.
  
  Finally, applying Yao's principle thus proves that, for any $\eps''>0$, there
  is no randomized online algorithm for the time-optimal $k$-server problem with
  an expected competitive ratio of $k+H_k-1-\eps/2-\eps''$. Choosing
  $\eps''\coloneqq\eps/2$ yields the desired statement.
\end{proof}

Note that it can be shown that the graph used in the proof of
\Cref{thm:lowertwokminusonerandomized} has diameter $3$, analogously to
\Cref{obs:diameter}. Thus there is also a deterministic algorithm with
competitive ratio $3k$ on that metric space.

\section{Conclusion}\label{sec:conclusion}

We hope to initialize a line of research that focuses on the time-optimal
$k$-server problem, which has been neglected in favor of its classical
distance-optimal twin.  We have proven a series of lower bounds, showing the
time model to be harder on various metric spaces, including simple cycles and
all Euclidean spaces, 
which implies that the direct analogue of the $k$-server conjecture in the time
model cannot be true.  Our strongest lower bound matches---intriguingly---the
best upper bound known for the classical distance-optimal $k$-server problem,
which is attained by the deterministic work-function algorithm \algWFA. 
A priori, it could thus be true that this algorithm is in fact exactly
$(2k-1)$-competitive on general metric spaces in both models. 

A natural next step is to find good performance guarantees for the \algWFA in
the time-optimal setting.  Unfortunately, the celebrated analysis of
Koutsoupias and Papadimitriou~\cite{KP1995a} that proved so successful for the
distance model resists any straightforward adaption; multiple key concepts,
such as the duality between the so-called minimizers and maximizers, do not
translate well to the time model. Better upper bounds in the time model will
thus probably provide us with several novel techniques. 

Our preliminary attempts and experimentally gathered evidence indeed point to a
subquadratic upper bound.  We believe the lower bound of
\cref{thm:lowertwokminusonefinite} to be tight, i.e., we conjecture that $2k-1$
is indeed the best competitive ratio possible for a deterministic algorithm on
general metric spaces in the time mode.

\begin{conjecture}\label{conj:determ-time}
  There is a deterministic algorithm for the time-optimal \textit{k}-server
  problem with a competitive ratio of $2k-1$ on general metric spaces.
\end{conjecture}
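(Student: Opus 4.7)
The natural candidate is the time-model analogue of the work-function algorithm, which has long been conjectured to be optimal in the distance model and matches our lower bound exactly. Concretely, I would define the time-model work function $w_j^T(C)$ as the minimum time-model cost of serving $r_1,\dots,r_j$ starting from $C_0$ and ending in configuration $C$, and then study the algorithm $\algWFA^T$ that, upon request $r_j$, moves from $C_{j-1}$ to a configuration $C_j \ni r_j$ minimizing $w_j^T(C_j) + \max_{i\in[k]} d(C_{j-1}(i), C_j(i))$. The first step is to establish basic structural properties of $w_j^T$: a recursion $w_j^T(C) = \min_{C'\ni r_j}\bigl(w_{j-1}^T(C') + \max_i d(C(i), C'(i))\bigr)$ when $r_j \in C$, monotonicity under adversarial requests, and a time-model analogue of the $\Delta$-Lipschitz property that $w_j^T(C) - w_j^T(C') \le \max_i d(C(i), C'(i))$.

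The plan is then to run the Koutsoupias--Papadimitriou style potential argument with a suitably chosen potential $\Phi_j$, aiming to show that the ``extended cost'' $w_j^T(C_j) + \Phi_j$ increases in each step by at most $(2k-1)$ times $\Opt$'s step cost. To set this up I would try $\Phi_j = \sum_{C} \mu(C)\, w_j^T(C)$ for a fixed signed measure $\mu$ giving weight $1$ to suitable ``maximizer'' configurations reachable from $C_j$, mimicking the minimizer/maximizer duality in the classical proof. The two inequalities one wants per request are (i) $\Phi_j - \Phi_{j-1} \ge \textsc{alg}$'s cost at step $j$, which should follow from the Lipschitz property and the definition of $\algWFA^T$, and (ii) $\Phi_j - \Phi_{j-1} \le (2k-1)(w_j^T - w_{j-1}^T)(C^*)$ for a configuration $C^*$ tracking an optimal offline trajectory, which is the technical heart of the argument.

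The main obstacle, I expect, is that the KP analysis in the distance model hinges on the quasi-convexity inequality $w_j(C) + w_j(C') \ge w_j(C\wedge C') + w_j(C\vee C')$ for the min-cost matching metric between configurations, and this inequality genuinely uses that cost is additive across servers. Under the max-metric $\max_i d(C(i),C'(i))$ I would need a replacement, perhaps a sharper statement that along the optimal ``swap'' of servers the max cost decomposes into at most $k$ contributions, each absorbable by a separate minimizer of $w_j^T$. If that fails, a back-up plan is to not analyze $\algWFA^T$ directly but a related ``synchronized'' variant that explicitly uses the time model's freedom to move several servers simultaneously toward likely future requests; competitive analysis via a phase decomposition, where each phase ends when the configuration of $\algWFA^T$ differs from the initial one by $k$ matching-swaps, might then let one charge $2k-1$ units per $1$ unit of offline cost by a direct combinatorial argument resembling our lower-bound instance in the proof of \cref{thm:lowertwokminusonefinite}.

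A final ingredient, assuming the per-step inequality is established, is a clean way to bound the telescoping sum when phases cross layers of the hard instance: since $\max$ is not linear, care is needed to prevent a naive bound from losing an extra factor of $k$. I would handle this by tracking, alongside $\Phi_j$, an auxiliary ``slack'' quantity measuring the discrepancy between $\Opt$'s current synchronized move and the sum of individual server displacements, and charging this slack against the $H_k$-style logarithmic savings that already appear in the randomized lower bound of \cref{thm:lowertwokminusonerandomized}. The hardest technical step will almost certainly be finding the correct replacement for quasi-convexity in the time model; until that is in place, even proving an $O(k)$ upper bound for $\algWFA^T$ seems to require genuinely new ideas.
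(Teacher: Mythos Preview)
The statement you are addressing is a \emph{conjecture}, and the paper does not prove it. It appears in the conclusion as an open problem; the paper explicitly says that ``the celebrated analysis of Koutsoupias and Papadimitriou~\cite{KP1995a} that proved so successful for the distance model resists any straightforward adaption; multiple key concepts, such as the duality between the so-called minimizers and maximizers, do not translate well to the time model.'' So there is nothing to compare your attempt against: no proof exists in the paper, and to the authors' knowledge none exists anywhere.

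What you have written is not a proof but a research plan, and you are candid about this yourself (``until that is in place, even proving an $O(k)$ upper bound for $\algWFA^T$ seems to require genuinely new ideas''). Your diagnosis of the central obstacle---that quasi-convexity of the work function under the min-cost-matching metric relies on additivity of cost across servers and has no obvious analogue under the $\max$ metric---is exactly the difficulty the authors flag. The auxiliary ideas you float (a synchronized variant of \algWFA, a phase decomposition mirroring the lower-bound instance, a slack term to absorb the nonlinearity of $\max$) are plausible directions but none of them is worked out, and the paper offers no hint that any of them succeeds. In short: your proposal is a reasonable sketch of how one might \emph{attack} the conjecture, but it is not a proof, and neither is anything in the paper.
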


For randomized algorithms, we analogously suspect an expected competitive ratio
of $c$ in the distance model to correspond to one of $c+k-1$ in the time model.

\section*{Acknowledgments}

We thank Peter Rossmanith for his suggestion to study the $k$-server problem
with time-based cost and Richard Královič for his idea for an algorithm on
metric spaces of bounded diameter.

\small

\bibliography{refs}
\bibliographystyle{plainurl}

\end{document}